\documentclass[twocolumn,showpacs,amsmath,amssymb,aps,prx,footinbib,floatfix,superscriptaddress,longbibliography]{revtex4-1}
\usepackage{graphicx}
\usepackage{braket}
\usepackage{savesym}
\usepackage{algpseudocode}
\usepackage{float} 
\makeatletter
\let\newfloat\newfloat@ltx
\makeatother
\usepackage{algorithm}
\usepackage{amsfonts}
\usepackage{bm}
\usepackage{color}
\usepackage{soul}
\usepackage{subfigure}
\usepackage{wasysym}
\usepackage{upgreek}
\usepackage{booktabs}
\usepackage{tikz}
\usepackage{dsfont}
\usepackage{mathtools}
\usepackage[usenames,dvipsnames,svgnames,table]{xcolor}
\usepackage{hyperref}
\usepackage{xurl} 
\definecolor{revgreen}{rgb}{0,0.45,0.15}
\newtheorem{proposition}{Proposition}
\newenvironment{proof}[1][Proof]{\par\noindent\emph{#1.}\ }{\hfill$\square$\par\medskip}
\usepackage{hhline} 
\usepackage{multirow} 
\usepackage[normalem]{ulem}
\usepackage{siunitx}   
\usepackage{caption}
\usepackage{makecell}
\usepackage[english]{babel}

\hypersetup{
	colorlinks=true,       
	linkcolor=VioletRed,  
    citecolor=JungleGreen,        
    filecolor=Orchid,      
    urlcolor=black           
}
\newcommand{\be}{\begin{eqnarray}}
\newcommand{\ee}{\end{eqnarray}}

\def\mh#1{\textcolor{blue}{#1}}

\def\aaa#1{\textcolor{red}{#1}}
\definecolor{adilcolor}{rgb}{0.3,0,0.5}
\def\cld#1{\textcolor{red}{#1}}

\begin{document}

\newcommand{\crit}[1]{\par\noindent\aaa{\textbf{[AMA:} #1\textbf{]}}\par}

\newcommand{\aacld}[1]{\noindent\cld{\textbf{[AAA:} #1\textbf{]}}}

\title{Quantum Geometric Tensor Preconditioning for Stable Training of Recurrent Neural Quantum States}


\date{\today}

\author{Adil Attar}
\affiliation{Department of Applied Mathematics, University of Waterloo, Waterloo, ON N2L 3G1, Canada}
\author{Amine M. Aboussalah}
\affiliation{NYU Tandon School of Engineering, 6 MetroTech Center, Brooklyn, New York 11201, United States}
\author{Mohamed Hibat-Allah}
\email{mohamed.hibatallah@uwaterloo.ca}
\affiliation{Department of Applied Mathematics, University of Waterloo, Waterloo, ON N2L 3G1, Canada}
\affiliation{Vector Institute,  Toronto,  Ontario,  M5G 0C6,  Canada}

\begin{abstract}
Neural Quantum States (NQS) provide a powerful neural network–based variational framework for representing many-body wave functions and solving for ground states. Recurrent Neural Networks (RNNs) are particularly promising owing to their relatively low computational cost and their autoregressive property, which enables perfect sampling. Recently, RNNs have been reported to be unstable under curvature-based optimizers such as the minimum-step stochastic reconfiguration (minSR) method. In this paper, we address this perceived limitation and show that minSR can be stabilized through simple regularization techniques, enabling robust training of RNN-based NQS with only a few samples. Our approach outperforms the Adam optimizer on the one-dimensional transverse-field Ising model and the one-dimensional cluster state, and provides competitive results on the two-dimensional Heisenberg and $J_1-J_2$ models. This work offers a promising pathway for using modern optimization techniques with autoregressive NQS to address open questions in quantum simulation.
\end{abstract}

\maketitle


\section{Introduction}
Simulating quantum matter with neural-network quantum states (NQS) is an emerging field that has delivered promising results in the quantum matter community~\cite{Carleo_original_2017, NQS-rev, Dawid_2025}. This achievement has been enabled by different neural network architectures ranging from restricted Boltzmann machines (RBMs)~\cite{Carleo_original_2017,Choo2020,v2024restrictedboltzmannmachinenetwork, PhysRevB.96.205152, PhysRevX.11.031034}, feedforward neural networks~\cite{Di_Luo, zi2018}, convolutional neural networks (CNNs)~\cite{Choo_2019, PhysRevB.108.054410, Chen_2024}, to recurrent neural networks (RNNs)~\cite{RNN-Wavefunc, hibatallah2022supplementingrecurrentneuralnetwork,Wu_2023,roth2020iterativeretrainingquantumspin, luo2021gauge, Schuyler, nh89-6jmf, Hibat_Allah_2025, merali2026parallelscanrecurrentneural} and Transformer-based wave functions~\cite{Zhang_2023, PhysRevLett.130.236401, minSR-lianlg, 7xwp-25y9, Sprague_2024} for benchmarks beyond ground state physics, ranging from real-time quantum dynamics of pure states~\cite{Carleo_original_2017,PhysRevLett.125.100503, kqvx-dl54, Van_de_Walle_2025, Sinibaldi2023unbiasingtime}, dynamics of open quantum systems~\cite{PhysRevLett.122.250503, PhysRevLett.128.090501, PhysRevLett.127.230501}, to thermal quantum many-body physics~\cite{Nomura_2021, nys2024realtimequantumdynamicsthermal, PhysRevLett.120.240503, PhysRevB.106.165111, KUMAR2026170362}. A cornerstone principle enabling these advances is the variational Monte Carlo (VMC) framework~\cite{Becca_Sorella_2017}, where one minimizes the variational energy, i.e., the expectation value of a Hamiltonian $\hat{H}$ in a variational wave function $|\Psi_{\boldsymbol{\theta}} \rangle$ with parameters $\boldsymbol{\theta}$, in the particular task of ground state approximation. This minimization has largely been performed in the NQS community using gradient descent through either first-order or second-order optimization techniques. One of the most effective methods is stochastic reconfiguration (SR)~\cite{PhysRevB.61.2599, Becca_Sorella_2017}, a second-order optimization method known in the machine-learning literature as natural gradient~\cite{amariNG}. This technique is well-principled thanks to its correspondence to imaginary time evolution~\cite{PhysRevB.61.2599,Becca_Sorella_2017, PhysRevResearch.2.023232, Dash2025}. This method has been extended to its minimal counterpart: minimum-step stochastic reconfiguration (minSR)~\cite{Chen_2024,minSR-lianlg}, which provides a significant speedup when the number of samples is small compared to the total number of variational parameters.

Second-order optimization has remained largely unexplored for RNNs due to its reported ill-conditioning~\cite{donatella23, Lange2024, KUMAR2026170362, duque2026timerevisitingneuralquantum}. This challenge is mainly related to the rapid decay of the spectrum of the Fisher matrix, which makes its inversion an ill-defined problem. For this architecture, the first-order Adam optimizer~\cite{AdamOriginal} is currently the most widely used choice, with promising results in ground state benchmarks~\cite{hibatallah2022supplementingrecurrentneuralnetwork, Hibat_Allah_2025}. Recently, the natural gradient optimization approach with autoregressive models has seen some success in the context of classical statistical physics~\cite{Liu_2025} and ab-initio quantum chemistry~\cite{autoregressiveChem}. Thus, a natural question is whether it is possible to stabilize the optimization of RNN wave functions using minSR despite previous reports of instability in the NQS literature. A second important question is whether a subsampled Fisher matrix in minSR with few samples still provides a useful preconditioner during optimization. In this work, we address both questions and develop a reproducible stability framework for minSR applied to RNN wave functions on prototypical one- and two-dimensional quantum benchmarks, supported by conditioning analysis, Fisher spectrum diagnostics, and controlled ablations. In particular, we demonstrate a significant advantage over the Adam optimizer in one spatial dimension on the ferromagnetic transverse-field Ising model (TFIM) and cluster state benchmarks. In two spatial dimensions, we show that minSR can be made stable and comparable to the Adam optimizer on the square lattice Heisenberg and $J_1-J_2$ models. Our results also show a speed-up in convergence for minSR optimization. Additionally, all of our investigations were conducted using only a few hundred Monte Carlo samples on a single GPU, highlighting the computational efficiency of RNN wave functions.


\section{Methods}

\subsection{Recurrent Neural Network wave function}

\begin{figure}
 \centering   \includegraphics[width=\columnwidth]{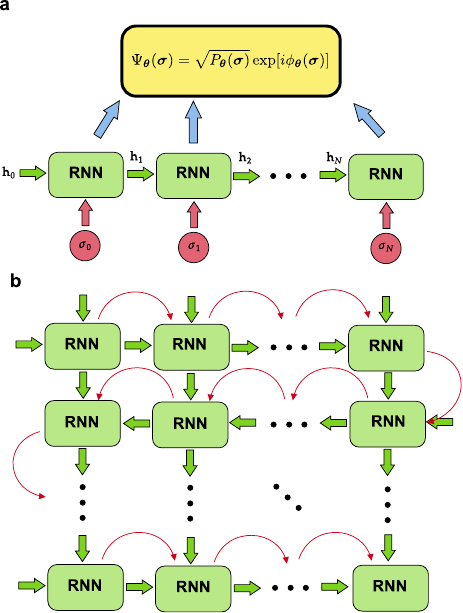}
 \caption{\textbf{Illustration of RNN wave functions.}  a) Schematic of an RNN wave function. A series of spins $\sigma_i$ are sequentially provided as an input into the RNN cell (shown in green), which produces a list of outputs  $\bm{y}_i$ (shown in blue).  Subsequently, the Softmax and/or Softsign layers are applied to compute the amplitude and phase of the wave function, respectively. The outputs of the RNN can be used to calculate the wave function through its amplitude and phase. b) Schematic of a 2D RNN. Each RNN cell (shown in green) takes a spin in a 2D array as an input and outputs two identical hidden states, drawn as green arrows, to each of its nearest neighbors. The autoregressive property can be applied to 2D RNNs through sampling in a zigzag pattern, as illustrated above using red arrows.}
 \label{fig:RNNWF}
\end{figure}

We use RNNs as a variational quantum many-body wave function \cite{RNN-Wavefunc} over a lattice of spins $\bm{\sigma}$, written as
\begin{align}
    \ket{\Psi_{\bm{\theta}}} = \sum_{\bm{\sigma}} \Psi_{\bm{\theta}}(\bm{\sigma})\ket{\bm{\sigma}}.
\end{align}
In the computational basis, an RNN wave function can be expressed as
\begin{equation}
    \Psi_{\boldsymbol{\theta}}(\boldsymbol{\sigma}) = \sqrt{P_{\boldsymbol{\theta}}(\boldsymbol{\sigma})}\exp{\bigl[i\phi_{\boldsymbol{\theta}}(\boldsymbol{\sigma})\bigr]},
\end{equation}
where $\boldsymbol{\sigma} = (\sigma_1, \dots, \sigma_N)$ denotes the lattice-spin configuration and each $\sigma_i$ can take one of $d_{\sigma}$ possible values, with $d_{\sigma}=2$ for spin-$\frac{1}{2}$ systems. $\boldsymbol{\theta}$ is the set of variational parameters, which includes the weights and biases of our RNN. $P_{\boldsymbol{\theta}}(\boldsymbol{\sigma})$ is the probability distribution over the spins, and $\phi_{\boldsymbol{\theta}}(\boldsymbol{\sigma})$ is the phase of the wave function.

RNN wave functions stand out for their ability to autoregressively sample $P_{\boldsymbol{\theta}}$. This procedure is performed by decomposing $P_{\bm{\theta}}$ using the chain rule of probabilities as
    \begin{equation}
        P_{\bm{\theta}}(\boldsymbol{\sigma}) = P_{\bm{\theta}}(\sigma_1)P_{\bm{\theta}}(\sigma_2|\sigma_1)\dots P_{\bm{\theta}}(\sigma_N|\sigma_{1}, \dots ,\sigma_{N-1}),
    \end{equation}
and directly modeling these conditional probabilities, which are univariate multinomial distributions that can be directly sampled. Through the process of sampling each conditional probability $P_{\bm{\theta}}(\sigma_i|\sigma_{<i})$, an RNN wave function can efficiently produce uncorrelated samples for estimating observables and avoid Markov Chain Monte Carlo (MCMC) schemes~\cite{RNN-Wavefunc,Sharir_2020}. The phases of the wave function can be decomposed similarly
    \begin{equation}
        \phi(\boldsymbol{\sigma}) = \sum_i\phi_i(\sigma_i|\sigma_{<i}),
    \end{equation}
where the conditional phases $\phi_i$ and the conditional probabilities are calculated using the RNN output (shown in Fig.~\ref{fig:RNNWF}(a)), and their exact form depends on the specific RNN architecture. RNNs make the core assumption that a hidden state $\bm{h}_i$ can effectively summarize all previously seen spins, such that
\begin{align}
    P_{\bm{\theta}}(\sigma_i|\sigma_{<i})=P_{\bm{\theta}}(\sigma_i|\bm{h}_i),\\
    \phi(\boldsymbol{\sigma}) = \sum_i\phi_i(\sigma_i|\bm{h}_i).
\end{align}
We depict the RNN wave function generally in Fig.~\ref{fig:RNNWF}(a). Here we can define two types of RNN wave functions~\cite{RNN-Wavefunc}, positive RNN (pRNN) wave functions, which do not include a phase term and are applicable in the case of stoquastic Hamiltonians~\cite{RNN-Wavefunc, Hibat_Allah_2025}. A second class is the complex RNN (cRNN) wave function, which includes both a probability and a phase term and can model ground states of non-stoquastic Hamiltonians~\cite{RNN-Wavefunc, nh89-6jmf}. Note that the conditional probabilities are given by
\begin{equation}
    P_{\bm{\theta}}(\sigma_i|\sigma_{<i}) = \textrm{softmax}(U\bm{h}_i+\bm{c}) \cdot\bm{\sigma}_i,
\end{equation}
where $\bm{\sigma}_i$ is the one-hot encoding of $\sigma_i$, and $\bm{h}_i \in \mathbb{R}^{d_h}$ is the hidden state, with dimension $d_h$. $U\in\mathbb{R}^{d_{\sigma}\times d_h}$ and $\bm{c}\in\mathbb{R}^{d_{\sigma}}$ are learnable parameters, and softmax is given by
\begin{align}
    \operatorname{softmax}(\bm{x})_i  =\frac{e^{x_i}}{\sum_k e^{x_k}}.
\end{align}
By applying a softmax activation, our conditional probabilities become normalized, and by extension our wave function becomes normalized~\cite{RNN-Wavefunc}. Furthermore, the conditional phases are given by
\begin{align}
    \phi_i = \pi\textrm{Softsign}(U_{\phi}\bm{h}_i + \bm{c}_{\phi})\cdot \bm{\sigma}_i,
\end{align}
where the Softsign function is given by
\begin{align}
    \operatorname{Softsign}(\bm{x})_i = \frac{x_i}{1+|x_i|}\in(-1,1).
\end{align}
For the simplest RNN cell, known as a vanilla RNN cell~\cite{Vanilla}, the hidden state is updated as
\begin{equation}
    \bm{h}_i = f(W[\bm{h}_{i-1};\bm{\sigma}_{i-1}]+\bm{b}) \mh{,
} \end{equation}
where $W\in \mathbb{R}^{d_h\times(d_h+d_{\sigma})}$ and $\bm{b}\in \mathbb{R}^{d_h}$ are all learnable parameters, $f$ is a non-linear activation function, and $[\cdot;\cdot]$ represents vector concatenation. We initialize the hidden state $\bm{h}_0$ and the first input $\bm{\sigma}_0$ as zero vectors. The same parameters are shared across every spin in the lattice, making RNNs ideal for transfer learning across different sizes~\cite{roth2020iterativeretrainingquantumspin,hibatallah2022supplementingrecurrentneuralnetwork,Schuyler, nh89-6jmf}.

For simplicity, we present the vanilla RNN cell implementation, and we refer to App.~\ref{app:GRU} for a detailed implementation of the Gated Recurrent Unit (GRU) cell used in this work, which can better capture long-range dependencies in our lattice compared to vanilla RNNs~\cite{hibatallah2022supplementingrecurrentneuralnetwork}. Dilated recurrent connections have also been proposed to inject a long-range inductive bias into RNN wave functions and to help reproduce power-law correlations~\cite{DilatedRNN}.

\subsection{Two-dimensional RNNs}
RNN wave functions can be extended to two-dimensional systems~\cite{RNN-Wavefunc} as depicted in Fig.~\ref{fig:RNNWF}(b). In this setup, each RNN cell receives inputs from the horizontal and vertical neighbors. To maintain the autoregressive property, we define a one-dimensional path $\pi(k)$, with $k\in[0,N_xN_y-1]$, that goes through all the spins in our lattice. We choose a zigzag path
\begin{equation}
    \pi(k) = (i_k , j_k),\\
\end{equation}
where $j_k = \lfloor k/N_x\rfloor$ and
\begin{equation}
         i_k=\begin{cases}
        k \mod{N_x} & \textrm{if $j_k$ is even}   \\
        (N_x-1) - (k \mod{N_x})& \textrm{if $j_k$ is odd}
     \end{cases}.
\end{equation}
The conditional probabilities are given by
\begin{equation}
    P_{\boldsymbol{\theta}}(\boldsymbol{\sigma}_{\pi(k)}|\boldsymbol{\sigma}_{\pi(<k)}) = \textrm{softmax}(U\bm{h}_{i,j}+\bm{c})\cdot \boldsymbol{\sigma}_{i,j},
\end{equation}
with $\pi(k) = (i,j)$. $\pi(<k)$ corresponds to all spins generated before $\bm{\sigma}_{i,j}$ with respect to the zigzag path as depicted in Fig.~\ref{fig:RNNWF}(b).  Note that the bias introduced by the zigzag sampling path can be mitigated by applying symmetries, which results in an improved accuracy in a variational calculation~\cite{RNN-Wavefunc,hibatallah2022supplementingrecurrentneuralnetwork}.
The conditional phases for the two-dimensional RNN can be defined similarly to the one-dimensional case;
\begin{align}
    \phi_{i,j} = \pi\textrm{Softsign}(U_{\phi} \bm{h}_{i,j} + \bm{c}_{\phi})\cdot \bm{\sigma}_{i,j}.
\end{align}
In its simplest version, the 2D RNN wave function's hidden states are given by
\begin{align}
    \bm{h}_{ij}  = f(W[\bm{\sigma}_{i-(-1)^j,j};\bm{\sigma}_{i,j-1};\bm{h}_{i-(-1)^j,j};\bm{h}_{i,j-1}]+\mathbf{b}) .
\end{align}
Here $f$ is a non-linear activation function, and $W\in\mathbb{R}^{d_h\times 2(d_\sigma+d_h)}$ and $\bm{b}\in \mathbb{R}^{d_h}$ are learnable parameters. Boundary spins and hidden states are initialized with the zero vector. In this work, we use a GRU variant of this two-dimensional architecture, which we describe in detail in App.~\ref{app:GRU}.

\subsection{Optimization}
To optimize the parameters $\bm{\theta}$ of our normalized RNN wave function, we apply the variational principle and minimize the energy expectation value $\braket{\hat{H}}$:
\begin{equation}
     \min_\theta\braket{\Psi_{\bm{\theta}}|\hat{H}|\Psi_{\bm{\theta}}} \gtrsim E_{GS},
\end{equation}
where $E_{GS}$ is the true ground state energy for the Hamiltonian $\hat{H}$. We define the local energy $\epsilon(\bm\sigma)$ of our wave function as
\begin{align}
\epsilon(\bm\sigma) = \sum_{\bm{\sigma}'}
\frac{H_{\bm{\sigma}\bm{\sigma}'}\Psi_{\bm{\theta}}(\bm{\sigma}')}{\Psi_{\bm{\theta}}(\bm{\sigma})},
\end{align} where $H_{\bm{\sigma}\bm{\sigma'}}=\braket{\bm{\sigma}|\hat{H}|\bm{\sigma'}}$. We can rewrite the energy expectation value as an average over the probability distribution $P_{\bm{\theta}}(\bm{\sigma})=|\Psi_{\bm{\theta}}(\bm{\sigma})|^2$~\cite{Becca_Sorella_2017}
\begin{align}
    \braket{\hat{H}} =& \sum_{\bm{\sigma}}|\Psi_{\bm{\theta}}(\bm{\sigma})|^2\sum_{\bm{\sigma}'}\frac{H_{\bm{\sigma}\bm{\sigma}'}\Psi_{\bm{\theta}}(\bm{\sigma}')}{\Psi_{\bm{\theta}}(\bm{\sigma})}\\
    =&\sum_{\bm{\sigma}}P_{\bm{\theta}}(\bm{\sigma})\epsilon(\bm{\bm{\sigma}}).
\end{align}
The summation is over all possible spin configurations, whose number grows exponentially with the number of spins. Notably, our RNN wave function has the autoregressive property, which allows us to perfectly sample a list of configurations $\bm{\sigma}^{(1)}\ldots\bm{\sigma}^{(N_s)}$ efficiently~\cite{RNN-Wavefunc}. As a result, our energy expectation value can be estimated as
\begin{align}
    E_{\bm \theta} =\frac{1}{N_s}\sum_{i=1}^{N_s}\epsilon(\bm{\sigma}^{(i)}).
    \label{eq:energy}
\end{align}
The parameters of $\Psi_{\bm{\theta}}$ can be updated as $\bm{\theta}\rightarrow\bm{\theta}-\delta\bm{\theta}$ using stochastic gradient descent~(SGD), with $\delta\bm{\theta}$ being the direction of steepest descent
\begin{align}
    \delta\bm{\theta} &= \eta\partial_{\bm\theta}E_{\bm \theta},\\
    &\approx 2\eta\frac{1}{N_s}\sum_{i=1}^{N_s}\operatorname{Re}\Bigr[\partial_{\bm\theta}\log\Psi(\bm{\sigma}^{(i)})\epsilon(\bm{\sigma}^{(i)})^*\Bigl].
\end{align}
Here $\eta$ is the step-size for our optimization. We define $\mathcal O_{ks} \equiv \frac{1}{\sqrt{N_s}}\partial_{\bm\theta_{k}}\log\Psi(\bm\sigma^{(s)})$. We can further reduce the noise in the gradients by centering $\epsilon$, with $\bar\epsilon_i = \frac{1}{\sqrt N_s}(\epsilon(\bm\sigma^{(i)}) - E)^{*}$~\cite{RNN-Wavefunc}.
\begin{align}
\label{eq:SGD_update}
&\delta\bm\theta=2\eta\operatorname{Re}\Bigr[\mathcal{O}\bar{\bm{\epsilon}}\Bigl].
\end{align}
Both the local energy $\bm{\epsilon}\in\mathbb C^{N_s}$ and the log-derivative matrix $\mathcal O\in\mathbb C^{N_p\times N_s}$ are estimated by sampling the variational wave function $N_s$ times in parallel at every training iteration.

Several variants of SGD have been developed to help accelerate convergence. Momentum can add information from previous gradient steps, making optimization more robust against noise \cite{POLYAK19641}. Several sophisticated momentum-inspired versions of SGD have been developed, such as Adam~\cite{AdamOriginal}, which is among the state of the art in first-order optimization. Although first-order methods can perform well, they may make slow progress through flat regions of the loss landscape~\cite{amari_information_2016}. These flat plateaus are ubiquitous in neural networks, as they arise from their highly degenerate parameterization~\cite{berner2023degenerateparametrizationneuralnetworks}. To overcome these issues, we explore Natural Gradient Descent as a viable optimization alternative.

\subsection{Natural Gradients}
Natural Gradient, otherwise known as SR in the VMC community~\cite{Becca_Sorella_2017}, is a curvature-based method to optimize a variational ansatz. Its main advantage is the ability to escape flat plateaus in the loss landscape in finite time \cite{amari_information_2016}. For statistical models, the Fisher information matrix is the Hessian of the Kullback--Leibler
divergence, i.e.\ the Bregman divergence generated by the negative entropy, and the natural gradient is the ordinary gradient preconditioned by its inverse. In the context of VMC, the natural gradient can be viewed in two equivalent ways: as approximating imaginary-time evolution~\cite{NQS-rev}, or using a geometric point of view by minimizing the Fubini-Study metric during updates~\cite{Becca_Sorella_2017}. We proceed by deriving Natural Gradients from the imaginary-time evolution view. Given a trial wave function $\Psi_{\bm{\theta}}$, the natural gradient update approximates the imaginary-time evolution
\begin{equation}
   \ket{\Psi_{\bm{\theta}-\delta\bm{\theta}}} = \exp{(-\hat{H}\delta \tau)}\ket{\Psi_{\bm{\theta}}},
\end{equation}
which exponentially suppresses all excited states relative to the ground state. We Taylor expand both the wave function and the imaginary-time evolution to first order
\begin{align}
    |\Psi_{\bm{\theta}} \rangle -\delta\bm{\theta}\cdot \nabla_{\bm{\theta}}|\Psi_{\bm{\theta}} \rangle = |\Psi_{\bm{\theta}}\rangle -\delta\tau \hat{H} |\Psi_{\bm{\theta}}\rangle.
\end{align}
Normalized imaginary-time evolution gives \[ |\Psi_{\bm{\theta}}\rangle \longrightarrow |\Psi_{\bm{\theta}}\rangle -\delta\tau(\hat H-E_{\bm{\theta}}) |\Psi_{\bm{\theta}}\rangle . \] Matching this tangent vector to the normalized version of the variational displacement $-\delta\bm{\theta}\cdot\nabla_{\bm{\theta}} |\Psi_{\bm{\theta}}\rangle$ in a Fubini--Study metric minimization task leads to 
\begin{align}
   \delta\bm{\theta} =  \operatorname*{argmin}_{\delta\bm{\theta}\in\mathbb R^{N_p}} \|\bar{\mathcal{O}} ^{\mathsf T} \delta\bm{\theta}-2\eta\bar{\bm{\epsilon}}^{*}\|_2^2,
\end{align}
where  $\bar{\mathcal{O}}_{ks}=\mathcal O_{ks}-\frac{1}{N_s}\sum_{s'}\mathcal O_{ks'}$ and $\eta=\delta\tau/2$ serves the role of the learning rate~\cite{Nomura_2023}. Since this system is under-determined, we add Tikhonov regularization,
\begin{equation}
\operatorname*{argmin}_{\delta\bm\theta\in\mathbb R^{N_p}}\|\bar{\mathcal{O}}^{\mathsf T}\delta\bm\theta-2\eta\bar{\bm{\epsilon}}^{*}\|_2^2 + \lambda\|\delta \bm\theta\|_2^2.\label{eq:lstsq}
\end{equation}
For a complex-valued ansatz, this least-squares  problem can be written exactly as a real one by defining
\begin{align}
\label{eq:real-valued}
    \bar{\mathcal{O}}' &= \Bigl(\,\operatorname{Re}\bar{\mathcal O}\;\;\operatorname{Im}\bar{\mathcal O}\,\Bigr), \\
    \bm{\bar\epsilon}' &= 2\begin{pmatrix}
       \operatorname{Re}\bm{\bar\epsilon}\\
       -\operatorname{Im}\bm{\bar\epsilon}
    \end{pmatrix}.
\end{align}
The normal equations are therefore
\begin{equation}
    \delta\bm\theta = \eta\big( S +\lambda \mathbb{I} \big) ^{-1}\bar{\mathcal{O}} ' \bar{\bm{\epsilon}} ',
    \qquad
    S\equiv\bar{\mathcal O}'\bar{\mathcal O^{'\mathsf T}}=\operatorname{Re}\big(\bar{\mathcal O}\bar{\mathcal O}^{\dagger}\big) ,
    \label{eq:SR}
\end{equation}
where $\lambda$ is a Tikhonov parameter to mitigate ill-conditioning, and can be interpreted as an inverse trust region radius. When $\lambda\gg \|S\|$ the updates approach the SGD updates in Eq.~\eqref{eq:SGD_update} with centered log-derivatives. The complex matrix $Q \equiv \bar{\mathcal O}\bar{\mathcal O}^{\dagger}$ is the quantum geometric tensor (QGT), while for real parameters the Fubini--Study metric entering SR is its real part $S$. For a positive real ansatz, the imaginary blocks vanish, and the two definitions coincide. SR is therefore a natural-gradient method: parameter updates are measured by the distance between quantum states rather than by the Euclidean distance between parameters.

Using the identity $(AA^{\mathsf T}+\lambda I)^{-1}A=A(A^{\mathsf T}A+\lambda I)^{-1}$ gives the minSR update
\begin{equation}
    \delta\bm\theta = \eta\bar{\mathcal{O}} ' \big( \mathcal T +\lambda\mathbb{I} \big) ^{-1}\bar{\bm{\epsilon}} ',
    \qquad
    \mathcal T\equiv\bar{\mathcal O}^{'\mathsf T}\bar{\mathcal O}',
    \label{eq:minsrR}
 \end{equation}
which is the real-parameter form of the minSR identity used in our implementation. For a complex-valued ansatz, $\mathcal T$ is $2N_s\times2N_s$; for a positive real ansatz, the zero imaginary block can be omitted, and the reduced matrix is $N_s\times N_s$. In either case, minSR replaces the parameter-space inversion by a sample-space inversion, reducing the leading inversion cost from cubic in $N_p$ to $\mathcal{O}(N_s^2N_p + N_s^3)$ up to constant factors. Geometrically, minSR is the natural-gradient update restricted to the sample-spanned tangent subspace. In this study, we work in the $N_s\ll N_p$ regime, and we find the Jacobian calculation to be the most memory-consuming step, with leading time complexity $O(d_h^2NN_s)$. We present the minSR algorithm in Algorithm~\ref{alg:minsr} and apply momentum as described in App.~\ref{app:momentum}.

\begin{algorithm*}
\caption{Minimum-step stochastic reconfiguration (minSR) for RNN wave functions}
\label{alg:minsr}
\begin{algorithmic}[1]
\Require Initial parameters $\bm{\theta}_0$; sample count $N_s$; learning rate $\eta$; regularization $\lambda$; momentum coefficient $\mu$; number of iterations $T$
\Ensure Optimized parameters $\bm{\theta}_T$ approximating the ground state
\State Initialize the momentum buffer $\bm{m}_0 \gets \bm{0}$
\For{$t = 0$ \textbf{to} $T-1$}
    \State Draw $N_s$ configurations $\{\bm{\sigma}^{(i)}\}_{i=1}^{N_s}$ autoregressively from $|\Psi_{\bm{\theta}_t}(\bm{\sigma})|^2$
    \State Evaluate the local energies $\epsilon(\bm{\sigma}^{(i)})$ and the energy estimate $E \gets \tfrac{1}{N_s}\sum_{i} \epsilon(\bm{\sigma}^{(i)})$
    \State Assemble the log-derivative matrix $\mathcal{O}_{ki} \gets \tfrac{1}{\sqrt{N_s}}\,\partial_{\theta_k}\!\log\Psi_{\bm{\theta}_t}(\bm{\sigma}^{(i)})$
    \State Center over samples: $\bar{\mathcal{O}}_{ki} \gets \mathcal{O}_{ki} - \tfrac{1}{N_s}\textstyle\sum_{j}\mathcal{O}_{kj}$ and $\bar{\epsilon}_i \gets \tfrac{1}{\sqrt{N_s}}\big(\epsilon(\bm{\sigma}^{(i)}) - E\big)^*$
    \State Form the real representations $\bar{\mathcal{O}}'$, $\bar{\bm{\epsilon}}'$ and the minSR matrix $\mathcal{T} \gets \bar{\mathcal{O}}'^{\top}\bar{\mathcal{O}}'$ \Comment{$2N_s \times 2N_s$}
    \State Compute the update $\delta\bm{\theta}_t \gets \eta\,\bar{\mathcal{O}}'(\mathcal{T} + \lambda\mathbb{I})^{-1}\bar{\bm{\epsilon}}'$ \Comment{Eq.~\eqref{eq:minsrR}}
    \State Apply momentum $\bm{m}_{t+1} \gets \mu\,\bm{m}_t + (1-\mu)\,\delta\bm{\theta}_t$ \Comment{App.~\ref{app:momentum}}
    \State Update the parameters $\bm{\theta}_{t+1} \gets \bm{\theta}_t - \bm{m}_{t+1}$
\EndFor
\State \Return $\bm{\theta}_T$
\end{algorithmic}
\end{algorithm*}

\subsection{RNN Fisher Spectrum}
\label{subsec:rnn_fisher_spectrum}
Natural-gradient optimization of RNN wave functions can be challenging due to two main considerations. First, different RNN parameter values can describe the same wave function. Directions that leave the wave function itself unchanged have zero Fisher norm and therefore appear as exact zero modes of the Fisher matrix. Second, the RNN recurrence can amplify small parameter changes, which can produce very large Fisher eigenvalues. Together, these two effects can spread the Fisher spectrum over a wide range and make the SR matrix severely ill-conditioned. 

Fisher singularities from redundant parameters are not unique to RNN wave functions: in transformer manifolds, low Fisher rank similarly identifies locally inactive parameter directions and motivates intrinsic-dimension-aware adaptation~\cite{GeLoRA}. A related interaction between architecture and representation geometry appears in message-passing networks, where local graph topology can determine whether repeated propagation yields expressive representations or oversmoothing~\cite{GNNTopology}.

We now state both mechanisms as propositions. The proofs can be found in App.~\ref{app:illcond_detailed} for a simplified model: a real, positive, autoregressive RNN wave function over $N$ sites, whose conditional distributions are produced by a linear RNN cell with a softmax head. At site $i$, the hidden state $\mathbf h_i\in\mathbb R^{d_h}$ and the logits $\mathbf z_i\in\mathbb R^{d_{\sigma}}$ obey $\mathbf h_i=W\mathbf h_{i-1}+V\boldsymbol{\sigma}_i+\mathbf b$ and $\mathbf z_i=U\mathbf h_{i-1}+\mathbf c$. Here $\boldsymbol{\sigma}_i\in\mathbb R^{d_{\sigma}}$ is the one-hot encoding of the local spin, $W\in\mathbb R^{d_h\times d_h}$ is the recurrent matrix, $V\in\mathbb R^{d_h\times d_{\sigma}}$ is the input matrix, $U\in\mathbb R^{d_{\sigma}\times d_h}$ is the readout matrix, and $\mathbf b\in\mathbb R^{d_h}$, $\mathbf c\in\mathbb R^{d_{\sigma}}$ are biases. The initial hidden state is fixed at $\mathbf h_0=\mathbf 0$. The conditional probability vector at site $i$ is $\mathbf p_i=\operatorname{softmax}(\mathbf z_i)$. These exact $GL(d_h)$ gauge statements are proved for this linear model; a gated nonlinear GRU does not, in general, possess the same full hidden-basis symmetry exactly.

\begin{proposition}[Exact gauge degeneracy]
\label{prop:gauge_zero_modes}
Consider the previous linear RNN wave function, with parameters $\boldsymbol{\theta}=(W,V,\mathbf b,U,\mathbf c)$ and fixed initial hidden state $\mathbf h_0=\mathbf 0$. Fix any matrix $X\in\mathbb R^{d_h\times d_h}$ and define the parameter velocities
\begin{equation}
\dot{\boldsymbol{\theta}}_X
=
\big([X,W],\,XV,\,X\mathbf b,\,-UX,\,\mathbf 0\big),
\label{eq:prop_gauge_direction}
\end{equation}
whose five entries are the increments of $W,V,\mathbf b,U,\mathbf c$, and where $[X,W]=XW-WX$ is the commutator, or Lie bracket, of $X$ and $W$. Then $d\boldsymbol{\theta}_X = \dot{\boldsymbol{\theta}}_X \text{d}t$, given a real-valued parameter $t$, is a symmetry direction of the model, with two consequences:
\begin{enumerate}
\item[(i)] It leaves the wave function unchanged. The directional derivative
of $\log\Psi_{\boldsymbol{\theta}}(\boldsymbol{\sigma})$ along
$d\boldsymbol{\theta}_X$ is zero for every configuration $\boldsymbol{\sigma}$,
not just on average.
\item[(ii)] It is therefore a null direction of the SR matrix
$S=\bar{\mathcal{O}}\bar{\mathcal{O}}^{\dagger}$. It lies in $\ker S$ both for
the finite-sample matrix built from any set of configurations and for its population limit ($N_s\to\infty$) under sampling from
$|\Psi_{\boldsymbol{\theta}}|^2$.
\end{enumerate}
The directions $d\boldsymbol{\theta}_X$ are the infinitesimal generators of the gauge group $GL(d_h)$ of invertible real $d_h\times d_h$ matrices, which act on the parameters through Eq.~\eqref{eq:gauge_map_H}. Equivalently, the tangent space to the gauge orbit at $\boldsymbol{\theta}$ is contained in $\ker S$.
\end{proposition}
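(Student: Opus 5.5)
The plan is to first establish the finite gauge symmetry and then differentiate it at the identity. Take any invertible $G\in GL(d_h)$ and define the transformed parameters
\[
(W,V,\mathbf b,U,\mathbf c)\mapsto(GWG^{-1},\,GV,\,G\mathbf b,\,UG^{-1},\,\mathbf c),
\]
which I take to be the action referred to as Eq.~\eqref{eq:gauge_map_H}.

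\textbf{Step 1: finite invariance.} Induct on $i$ to show that the transformed hidden states are $\mathbf h_i'=G\mathbf h_i$ for every configuration $\boldsymbol\sigma$. The base case holds because $\mathbf h_0'=\mathbf 0=G\mathbf h_0$; this is exactly where the fixed zero initial state matters. For the inductive step,
\[
\mathbf h_i'=GWG^{-1}G\mathbf h_{i-1}+GV\boldsymbol\sigma_i+G\mathbf b=G\mathbf h_i.
\]
The logits are then unchanged, since $\mathbf z_i'=UG^{-1}G\mathbf h_{i-1}+\mathbf c=\mathbf z_i$. Hence every softmax conditional $\mathbf p_i$ is unchanged, so $P_{\boldsymbol\theta}(\boldsymbol\sigma)$ is unchanged, and so is $\Psi_{\boldsymbol\theta}=\sqrt{P_{\boldsymbol\theta}}$ for every $\boldsymbol\sigma$.

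\textbf{Step 2: differentiate the orbit.} Set $G(t)=e^{tX}$ and let $\boldsymbol\theta(t)$ be the transformed parameters. Using $\tfrac{d}{dt}G^{-1}|_{0}=-X$, the derivative at $t=0$ is
\[
\dot{\boldsymbol\theta}(0)=\big(XW-WX,\;XV,\;X\mathbf b,\;-UX,\;\mathbf 0\big)=\dot{\boldsymbol\theta}_X .
\]
Since $\log\Psi_{\boldsymbol\theta(t)}(\boldsymbol\sigma)$ is constant in $t$ and smooth, the chain rule gives
\[
\sum_k \partial_{\theta_k}\log\Psi(\boldsymbol\sigma)\,(\dot{\boldsymbol\theta}_X)_k=0
\]
for every $\boldsymbol\sigma$. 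This proves (i). Smoothness holds because the only possible issue would be $P=0$, and softmax conditionals are strictly positive. The tangent space of the orbit is $\{\dot{\boldsymbol\theta}_X : X\in\mathbb R^{d_h\times d_h}\}$, because $GL(d_h)$ has Lie algebra $\mathfrak{gl}(d_h)$, and every such $X$ is realized through the exponential curve.

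\textbf{Step 3: null direction of $S$.} Let $v=\dot{\boldsymbol\theta}_X$, vectorized in parameter space. By (i), $\mathcal O^{\mathsf T}v=\mathbf 0$ column by column, for any sample set. Centering subtracts the mean of the columns, so it also gives $\bar{\mathcal O}^{\mathsf T}v=\mathbf 0$. Because $v$ is real and the model is real-positive, $S=\bar{\mathcal O}\bar{\mathcal O}^{\dagger}$ is real, and
\[
Sv=\bar{\mathcal O}\,(\bar{\mathcal O}^{\mathsf T}v)=\mathbf 0.
\]
For the population limit, $S_\infty=\mathbb E[(O-\mathbb E O)(O-\mathbb E O)^{\mathsf T}]$, and $v^{\mathsf T}S_\infty v=\operatorname{Var}(O^{\mathsf T}v)=0$ since $O^{\mathsf T}v\equiv 0$ pointwise. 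Positive semidefiniteness of $S_\infty$ then gives $S_\infty v=0$. The same reasoning also covers $\operatorname{Re}Q$ if one prefers that form.

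The main obstacle is purely bookkeeping. One has to get the sign and ordering in the commutator right, since with this convention the bracket is $[X,W]$ and not $[W,X]$. One also has to confirm that the readout uses $\mathbf h_{i-1}$ consistently with the recursion, so that the logits really cancel. The only substantive hypothesis is $\mathbf h_0=\mathbf 0$: a nonzero fixed $\mathbf h_0$ would break the induction unless it were transformed too. I would remark on this explicitly.
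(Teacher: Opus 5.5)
Your proposal is correct and follows the paper's proof essentially step for step: finite $GL(d_h)$ invariance by induction using $\mathbf h_0=\mathbf 0$, differentiation along the orbit to obtain $\dot{\boldsymbol\theta}_X$, configuration-wise vanishing of $d\log\Psi$ by the chain rule, and then $S\,d\boldsymbol\theta=\bar{\mathcal O}(\bar{\mathcal O}^{\dagger}d\boldsymbol\theta)=\mathbf 0$ for any finite sample, with positive semidefiniteness handling the population matrix. The only difference is cosmetic: you use the curve $e^{tX}$, which is invertible for all $t$, whereas the paper uses $\mathbb I+tX$ truncated at first order; both give the same velocity.
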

The proof is given in App.~\ref{app:illcond_detailed}.

\begin{proposition}[Recurrent amplification]
\label{prop:spectral_divergence}
Consider the model of Proposition~\ref{prop:gauge_zero_modes}, now with a diagonal recurrent matrix $W=\operatorname{diag}(r_1,\dots,r_{d_h})$, so that
the hidden modes evolve independently with multipliers $r_1,\dots,r_{d_h}$. Let $S$ be the population SR matrix, i.e.\ the $N_s\to\infty$ limit of $\bar{\mathcal{O}}\bar{\mathcal{O}}^{\dagger}$ under configurations drawn i.i.d.\ from $|\Psi_{\boldsymbol{\theta}}|^2$. Since $S$ is Hermitian and positive semidefinite, its eigenvalues are real and nonnegative; we write $\lambda_{\max}(S)$ for the largest one.

Fix a hidden mode $a\in\{1,\dots,d_h\}$, and let $\mathbf u_a=U\mathbf e_a$ be its readout vector, with $\mathbf e_a$ the $a$-th standard basis vector of $\mathbb R^{d_h}$. Define the nonnegative quantity
\begin{equation}
\chi_{a,i}
=
\big\langle \mathbf u_a^{\mathsf T} C_i\,\mathbf u_a \big\rangle
\;\ge\; 0,
\label{eq:prop_chi_def}
\end{equation}
which measures the conditional softmax covariance
$C_i=\operatorname{diag}(\mathbf p_i)-\mathbf p_i\mathbf p_i^{\mathsf T}$ at
site $i$ along the readout direction $\mathbf u_a$, averaged over histories
drawn from $|\Psi_{\boldsymbol{\theta}}|^2$. Assume \emph{uniform
non-saturation}: there is a constant $\chi_0>0$ with $\chi_{a,i}\ge\chi_0$ for
all sites $i\le N$ and all system sizes $N$.

Then $\lambda_{\max}(S)$ diverges with $N$, at a rate set by the multiplier
$r_a$:
\begin{enumerate}
\item[(i)] \emph{Marginal mode ($r_a=1$).}
\begin{equation}
\lambda_{\max}(S)
\;\ge\;
\frac{\chi_0}{4}\,
\frac{(N-1)N(2N-1)}{6},
\label{eq:prop_marginal_bound}
\end{equation}
which grows as $\chi_0 N^{3}/12$: polynomial in the system size $N$.
\item[(ii)] \emph{Expanding mode ($|r_a|>1$).}
\begin{equation}
\lambda_{\max}(S)
\;\ge\;
\frac{1}{4}\,
\big|m_{N-1}(r_a)\big|^{2}\,
\chi_{a,N},
\label{eq:prop_expanding_bound}
\end{equation}
with $m_{N-1}(r_a)=(1-r_a^{\,N-1})/(1-r_a)$. The right-hand side grows as
$|r_a|^{2(N-1)}\chi_{a,N}/(4|r_a-1|^{2})$: exponential in $N$.
\end{enumerate}
\end{proposition}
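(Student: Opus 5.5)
We bound $\lambda_{\max}(S)$ from below by a Rayleigh quotient along a single, well-chosen unit parameter direction: the bias component $b_a$ of the hidden mode $a$. Since $S$ is Hermitian and positive semidefinite, for any unit vector $\mathbf v$ we have $\lambda_{\max}(S)\ge \mathbf v^{\mathsf T}S\mathbf v$. In the population limit, $S$ is the covariance matrix of the log-derivatives $\partial_{\theta_k}\log\Psi_{\boldsymbol\theta}$ under $|\Psi_{\boldsymbol\theta}|^2$. Choosing $\mathbf v=\mathbf e_{b_a}$ therefore gives $\lambda_{\max}(S)\ge \operatorname{Var}\big(\partial_{b_a}\log\Psi_{\boldsymbol\theta}\big)$. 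The model is real and positive, so $\log\Psi=\tfrac12\log P$, which produces the overall factor $1/4$.

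\textbf{Step 1: computing the log-derivative.} We unroll the linear recurrence with $\mathbf h_0=\mathbf 0$ and diagonal $W$, giving
\begin{equation*}
h_{i,a}=\sum_{k=1}^{i} r_a^{\,i-k}\big(V\boldsymbol\sigma_k+\mathbf b\big)_a .
\end{equation*}
Hence $\partial h_{i,a}/\partial b_a=m_i(r_a)=(1-r_a^{\,i})/(1-r_a)$, with $m_i=i$ when $r_a=1$. This derivative is deterministic: it does not depend on the configuration. This is why we choose $b_a$ rather than $r_a$, since $\partial_{r_a}h_{i,a}$ would involve the random inputs. The logits are $\mathbf z_i=U\mathbf h_{i-1}+\mathbf c$, and the softmax log-likelihood satisfies $\partial_{\mathbf z_i}\log p_i(\sigma_i)=\boldsymbol\sigma_i-\mathbf p_i$. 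Therefore
\begin{equation*}
\partial_{b_a}\log P(\boldsymbol\sigma)=\sum_{i=1}^{N} m_{i-1}(r_a)\,(\boldsymbol\sigma_i-\mathbf p_i)^{\mathsf T}\mathbf u_a .
\end{equation*}

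\textbf{Step 2: martingale structure.} Let $\xi_i=(\boldsymbol\sigma_i-\mathbf p_i)^{\mathsf T}\mathbf u_a$. Because $\mathbf p_i$ depends only on $\sigma_{<i}$ and $\boldsymbol\sigma_i$ is drawn from $\mathbf p_i$, we have $\mathbb E[\xi_i\mid\sigma_{<i}]=0$. The summands are thus martingale differences. Consequently the mean vanishes, so centering changes nothing, and all cross terms $\mathbb E[\xi_i\xi_j]$ with $i<j$ vanish by conditioning on $\sigma_{<j}$. For the diagonal terms, $\mathbb E[\xi_i^2\mid\sigma_{<i}]=\mathbf u_a^{\mathsf T}C_i\mathbf u_a$. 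Since the weights $m_{i-1}$ are deterministic, this yields
\begin{equation*}
\lambda_{\max}(S)\;\ge\;\frac14\sum_{i=1}^{N} m_{i-1}(r_a)^2\,\chi_{a,i}.
\end{equation*}

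\textbf{Step 3: the two cases.} For $r_a=1$, we have $m_{i-1}=i-1$. Applying $\chi_{a,i}\ge\chi_0$ gives $\tfrac{\chi_0}{4}\sum_{j=1}^{N-1}j^2=\tfrac{\chi_0}{4}\tfrac{(N-1)N(2N-1)}{6}$, which is the bound in case (i). For $|r_a|>1$, every term in the sum is nonnegative, so we drop all but $i=N$ and obtain $\tfrac14|m_{N-1}(r_a)|^2\chi_{a,N}$, which is case (ii). The stated asymptotics follow from $|m_{N-1}|\sim|r_a|^{N-1}/|r_a-1|$.

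The main obstacle is Step 2: making the cancellation of cross terms rigorous, and correctly identifying $\chi_{a,i}$ as the history-averaged conditional covariance. Both depend on the derivative weights being deterministic, which is exactly what the choice of the $b_a$ direction secures.
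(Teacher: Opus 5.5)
Your proof is correct and follows the same approach as the paper. You use a Rayleigh-quotient bound along the bias coordinate $b_a$ and the geometric-series logit response $\partial_{b_a}\mathbf z_i=m_{i-1}(r_a)\mathbf u_a$; cross-site terms cancel by the conditional zero mean of the site scores, the diagonal terms are identified with $\mathbf u_a^{\mathsf T}C_i\mathbf u_a$ with the factor $\tfrac14$ from $\log\Psi=\tfrac12\log P$, and the two cases are finished as in the paper. The only difference is organizational: the paper first derives the general logit-space form $S=\tfrac14\sum_i\langle J_i^{\mathsf T}C_iJ_i\rangle$ and then specializes, whereas you compute the single diagonal entry directly. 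Also, the cross-term cancellation needs only that the weights depend on $\sigma_{<i}$; determinism of $m_{i-1}$ is what lets you pull $m_{i-1}^2$ out of the history average.
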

The proof is given in App.~\ref{app:illcond_detailed}.

In this work, we demonstrate that careful preconditioning can mitigate the challenges posed by the ill-conditioned RNN Fisher spectrum, resulting in stable natural-gradient optimization, as discussed in the results section.

\section{Results}
\label{sec:results}
We perform extensive hyperparameter searches for both Adam and minSR. We describe our hyperparameter experiments in App.~\ref{app:hyperparameter}. We initialize training using the same initial RNN parameters for both minSR and Adam.
\subsection{1D Transverse-Field Ising Model}

We benchmark minSR against Adam in finding the ground-state wave function on a variety of models, starting with the simplest case of a one-dimensional stoquastic Hamiltonian, namely the one-dimensional transverse-field Ising model (TFIM).
The Hamiltonian of the 1D TFIM is given by
\begin{equation}
    \hat{\mathcal{H}}_{\textrm{TFIM}} = -J\sum_{i=1}^{N-1}\hat{\sigma}_i^z\hat{\sigma}_{i+1}^z - h\sum _{i=1}^N\hat{\sigma}_i^x,
\end{equation}
where $J$ and $h$ are coupling constants, and $\hat{\sigma}^{x,y,z}$ are the Pauli matrices. We work with open boundary conditions (OBC) and choose $J=1=h$, which corresponds to the critical point~\cite{Sachdev_2011}. The stoquasticity of this Hamiltonian~\cite{bravyi2007complexitystoquasticlocalhamiltonian} allows us to represent the ground state with a pRNN wave function.

For a fair comparison, we evaluate both optimizers over the same training time, allowing Adam to take more steps than minSR. In particular, over the course of a $500$-second training session, we show that using minSR we can achieve several orders of magnitude improvement over the Adam optimizer. In Fig.~\ref{fig:TFIM}, we plot the performance of both minSR and Adam on the TFIM with $N = 200$ spins. We compare the relative error of both optimizers
\begin{equation}
    \label{eq:rele}
    \operatorname{Rel. Err.} =  \frac{E_{\textrm{RNN}}-E_{\textrm{ref}}}{|E_{\textrm{ref}}|} ,
\end{equation}
where $E_{\textrm{ref}}$ is the lowest reference energy available from the literature or the exact result from theory. We obtain this performance using a single fixed value of $\lambda$. The poor conditioning of the SR matrix comes from the structure of the RNN itself. App.~\ref{app:illcond_detailed} illustrates mechanisms by which recurrent architectures can generate ill-conditioning in a simplified model; this motivates Tikhonov regularization in the GRU experiments, since it makes the minSR preconditioner well conditioned. We further analyze the training instabilities associated with ill-conditioning in App.~\ref{app:instabilities}. Hyperparameters of our simulation are reported in App.~\ref{app:hyperparameter}.
\begin{figure}
 \centering
   \includegraphics[width=\columnwidth]{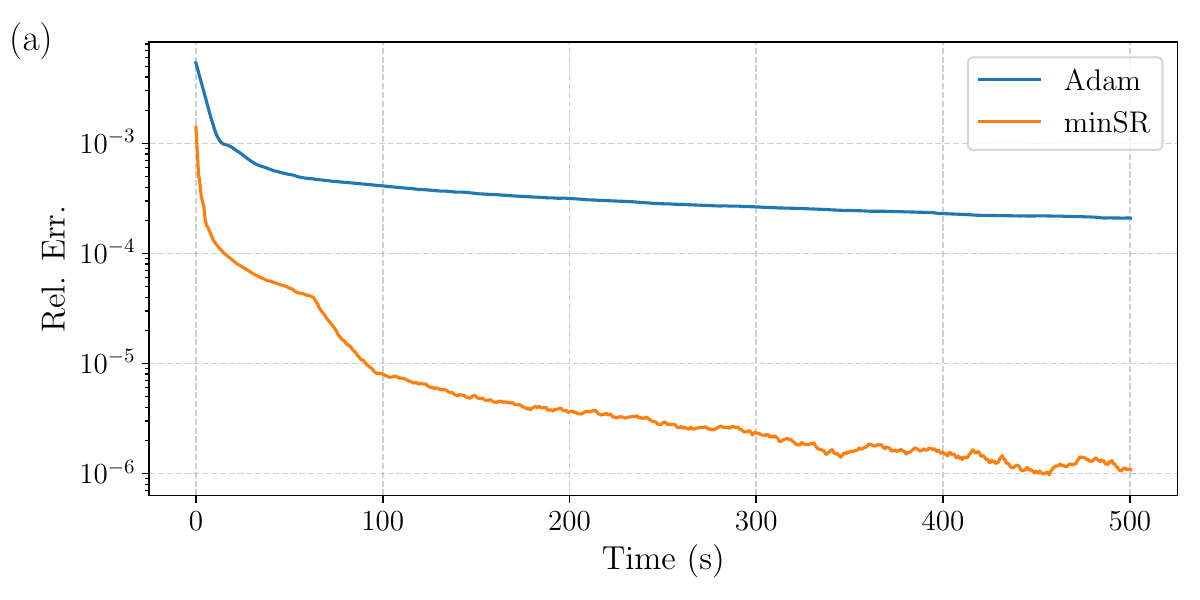}
   \includegraphics[width=\columnwidth]{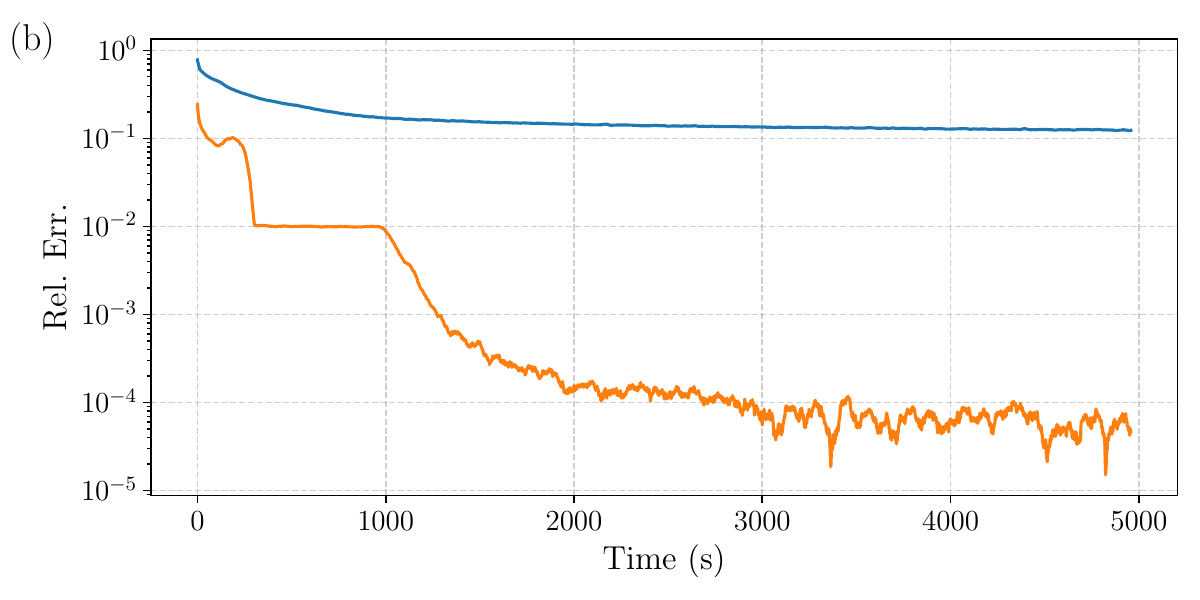}
 \caption{\textbf{One-dimensional benchmark comparison} between Adam and minSR optimizers on two testbeds a) the 1D TFIM and b) the Cluster state. a) We find the ground state energy of the TFIM Hamiltonian with open boundary conditions with 200 spins at the critical point $h=J=1$. The underlying model is a pRNN with $d_h = 32$ hidden units and $N_s=100$ Monte Carlo samples. The relative error with respect to the exact theoretical value~\cite{ising} is plotted against time in seconds. To minimize statistical noise, the relative errors are averaged over a $300$ training-step window. b) A comparison between Adam and minSR optimizers for the one-dimensional cluster state for $N=100$ qubits. The relative error with respect to the exact value of $E_{\mathrm{exact}}=-N$ is plotted against time in seconds. We use a moving-average window size of 20 to reduce statistical noise due to small sample size in plots. Hyperparameters are reported in Tab.~\ref{tab:combined_hyperparams} and final energies are reported in Tab.~\ref{tab:energies}.}
 \label{fig:cluster}
 \label{fig:TFIM}
\end{figure}

\subsection{1D Cluster State}
We now turn to the 1D Cluster state benchmark, which corresponds to a non-stoquastic Hamiltonian. The Cluster state is a key model for measurement-based quantum computation~\cite{cluster-preskill}. It is known to have long-range conditional mutual information, making it challenging to learn using NQS~\cite{cluster-preskill}. The Cluster state corresponds to the ground state of the following Hamiltonian:
\begin{align*}
 \hat{\mathcal{H}}_{\textrm{CS}} =& -\sum_{k=2}^{N-2}\hat{\sigma}^x_{k-1}\hat{\sigma}^z_k\hat{\sigma}^x_{k+1} -\hat{\sigma}^z_1\hat{\sigma}^x_2\\& -\hat{\sigma}^x_{N-1}\hat{\sigma}^x_{N}-\hat{\sigma}^x_{N-2}\hat{\sigma}^z_{N-1}\hat{\sigma}^z_N.
\end{align*}
Since this Hamiltonian is non-stoquastic, we use a cRNN wave function as our ansatz. Note that RNN wave functions~\cite{cluster-preskill} and Dilated RNN wave functions~\cite{DilatedRNN} have already been applied to this problem and only optimized using the Adam optimizer. In Ref.~\cite{adaptive}, RNN wave functions optimized with Adam achieve a relative error of about $10^{-2}$ with a system size $N = 64$. In Fig.~\ref{fig:cluster}(b), we achieve very low errors, far below those of the Adam optimizer. Using minSR on a larger system size of $N = 100$ spins, our final variational energy $E=-99.99(4)$ is very close to the theoretical value of $E=-100$. Here too, minSR is stabilized by a single constant Tikhonov shift $\lambda$, with no schedule; the values used are listed in App.~\ref{app:hyperparameter}. Overall, we demonstrate that RNN wave functions implemented with the traditional GRU cell, optimized with minSR, can numerically capture long-range conditional correlations of the Cluster state without the need for advanced RNN models such as Dilated RNNs~\cite{DilatedRNN}.

\subsection{Square Lattice $J_1-J_2$ Model}
We now move our benchmark comparison towards two-dimensional systems. Here we focus on the square lattice $J_1 - J_2$ model given by
\begin{align*}
    \hat{\mathcal{H}}_{J_1J_2} &= \frac{J_1}{4}\sum_{\langle ij\rangle}\bigl(\hat{\sigma}_i^x\hat{\sigma}_j^x + \hat{\sigma}_i^y\hat{\sigma}_j^y+\hat{\sigma}_i^z\hat{\sigma}_j^z\bigr)\\
    &+\frac{J_2}{4}\sum_{\langle\langle ij\rangle\rangle}\bigl(\hat{\sigma}_i^x\hat{\sigma}_j^x + \hat{\sigma}_i^y\hat{\sigma}_j^y+\hat{\sigma}_i^z\hat{\sigma}_j^z\bigr),
\end{align*}
where $\hat{\sigma}^{x,y,z}$ are Pauli operators and $J_1$, $J_2$ are coupling constants, and we set $J_1 = 1$. Also note that $\langle ij\rangle$ ($\langle\langle ij\rangle\rangle$) correspond to nearest (next-nearest) neighbours. The two main cases of interest are: $J_2=0$ corresponding to the antiferromagnetic Heisenberg model, and $J_2=0.5$ corresponding to a frustrated regime. This frustration point is often used to benchmark the strengths and weaknesses of NQS methods~\cite{Chen_2024, minSR-lianlg, PhysRevX.11.031034}.

We begin with the unfrustrated point $J_2 = 0$, where we use a 2D pRNN wave function as our ansatz. Even though the Hamiltonian is originally non-stoquastic, a simple Marshall sign rotation can be applied to make it stoquastic~\cite{Marshall1955, shamim2026graphtheoreticanalysisphaseoptimization}.
Similar to the relative error metric used for the one-dimensional benchmarks, the accuracy of an NQS can also be assessed by using a standard metric, which combines the energy and variance into a single dimensionless parameter known as the V-score~\cite{vscore}:
\begin{equation}
    \textbf{V-score} = \frac{N \cdot \operatorname{Var}[\bm{\epsilon}]}{(E-E_{\infty})^2}.\label{eq:vscore}
\end{equation}
Here $N$ is the number of spins and $E_{\infty}$ is the infinite-temperature energy, which is equal to zero for this model~\cite{vscore}. This metric eliminates the need for an accurate reference energy while providing the same scaling behavior as the relative error near convergence~\cite{vscore}. In Fig.~\ref{fig:J1J2}(a), we show that minSR achieves a lower V-score than Adam over the same time period, as well as lower energy and energy variance as reported in Tab.~\ref{tab:energies}. However, we note that the advantage of minSR is less pronounced relative to the previous one-dimensional benchmarks. In our hyperparameter tuning experiments, we choose an increasing value of the Tikhonov regularization parameter, $\lambda =3\times 10^{-4}\cdot\|\mathcal{T}\|^{2/3}$, where $\|\mathcal{T}\|$ generally starts with a small value at the start of training, then increases several orders of magnitude and then finally settles at a large value that is usually in the range $\|\mathcal{T}\|\sim10^{5-8}$. Using this power of $\frac{2}{3}$ was empirically found to give superior results for $d_h=100$ compared to a constant $\lambda$. This can be explained by the model preferring natural gradient updates, which correspond to a small $\lambda$, during the initial period of training and SGD, which corresponds to having a large value of $\lambda$, during the later stages of training.

\begin{figure}
 \centering
\includegraphics[width=0.9\columnwidth]{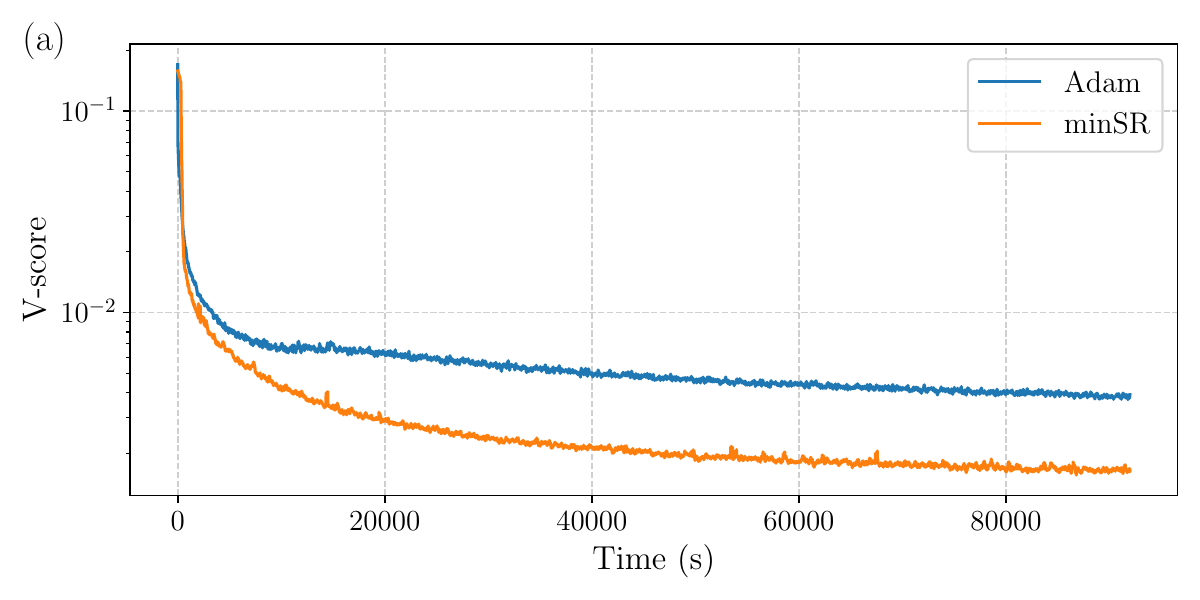}
\includegraphics[width=0.9\columnwidth]{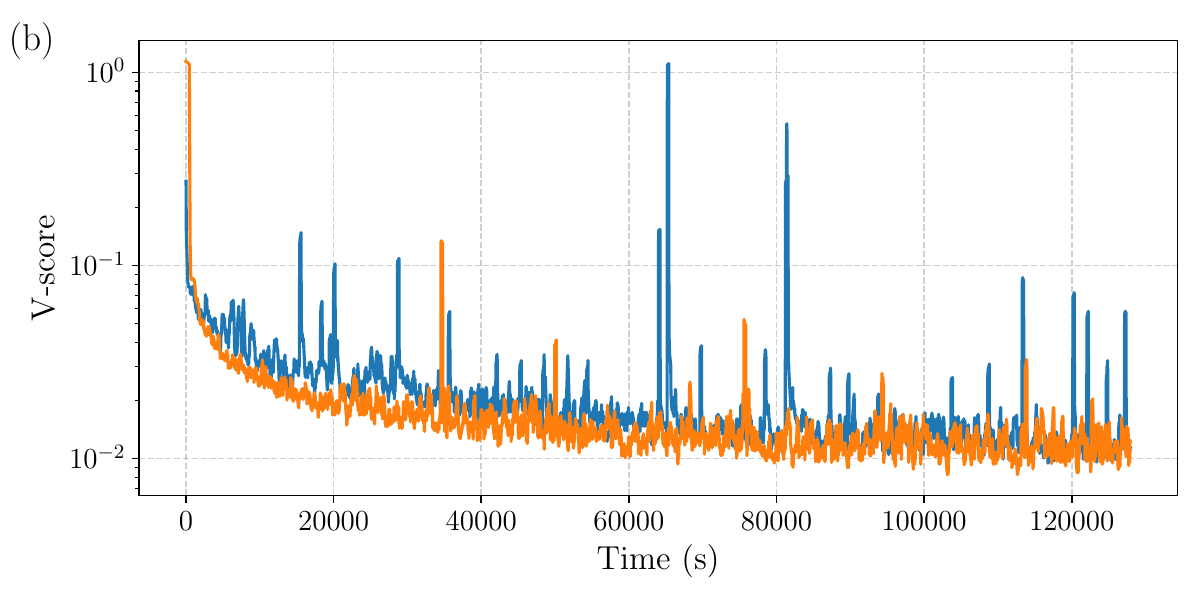}
 \caption{\textbf{2D $J_1 - J_2$ model benchmark.}  We compare the Adam and minSR optimizers by showing the evolution of the V-score~\eqref{eq:vscore} versus time in seconds at (a) the Heisenberg point $J_2 = 0$, and (b) the frustrated regime of $J_2=0.5$. We apply a moving average with a window size of $200$ iterations to minimize statistical noise. Our results show an enhancement of accuracy when applying minSR at $J_2 = 0$, but an accuracy comparable to that of Adam at $J_2 = 0.5$.}
 \label{fig:J1J2}
\end{figure}

In the frustrated regime ($J_2 = 0.5$), we apply a Marshall sign rule~\cite{Marshall1955, shamim2026graphtheoreticanalysisphaseoptimization}
\begin{equation}
    \Psi_{\bm{\theta}}(\bm{\sigma}) = (-1)^{\sigma_A}\tilde{\Psi}_{\bm{\theta}}(\bm{\sigma}),
\end{equation}
where $\sigma_A = \sum_{i\in A}\sigma_i$ and the sum over $A$ runs over every other spin on the square lattice. Here the amplitudes $\tilde{\Psi}_{\bm{\theta}}(\bm{\sigma})$ are modeled by a two-dimensional cRNN wave function. This rule corresponds to the exact sign structure for $J_2=0$ and can be used to improve NQS performance for relatively small $J_2$~\cite{hibatallah2022supplementingrecurrentneuralnetwork,RNN-Wavefunc, Choo_2019}.

For minSR regularization, we use the trust region learning rate $\eta_*$ from Eq.~\eqref{eq:trust-region-lr}, which adapts the step-size in both parameter space and in distribution space to be within a trust region (see App.~\ref{app:trust_region}). In Fig.~\ref{fig:J1J2}(b), we observe that minSR and Adam have comparable V-scores, though minSR converges slightly faster initially. In particular, the results show significant instability, suffering from strong fluctuations in the V-score throughout training. We find that a constant value of $\lambda = 10^{-1}$ provides the best results.

In both unfrustrated and frustrated models, we find training to be very sensitive to the damping scheme. When applying the Levenberg-Marquardt (LM) scheme (described in App.~\ref{app:levenberg_marquardt}), we find that $\lambda$ tends to diverge as a result of the poor conditioning of the landscape. Because larger $\lambda$ corresponds to stronger damping, the observed growth of $\lambda$ indicates that the local quadratic model is trusted only over increasingly small parameter displacements, at least with a small number of samples. This can substantially reduce the benefit of the minSR preconditioner. Certain parameters become very sensitive to even small step sizes, making a uniform trust region for all parameter directions a poor choice.

Naive schemes such as using $\lambda \times \textrm{diag}(\mathcal{T})$ as a preconditioner instead of $\lambda \mathbb{I}$ can result in very small updates and poor final energies. Structural Damping is a more sophisticated damping scheme that has been successful in second-order RNN optimization~\cite{hf-rnn} by penalizing sudden changes in hidden state trajectories. However, it is too computationally expensive to implement for larger-scale RNNs.

We conjecture that the 2D RNN wave function architecture, inspired by the multi-dimensional RNN (MDRNN) introduced in Ref.~\cite{mdrnn}, may be the cause of the large discrepancy between the performance of natural gradients in 1D and 2D. In one dimension, each cell passes its hidden state to a single neighbor. In two dimensions, each cell receives hidden states from two neighbors. As a result, the number of paths along which gradients travel grows combinatorially with system size, leading to training instabilities~\cite{GridLSTM}. Additionally, in Ref.~\cite{leakyMDRNN}, it is shown that Long-Short Term Memory (LSTM) based MDRNNs can face exploding gradients, similarly to 1D Vanilla RNNs. More stable multi-dimensional RNNs have been proposed, such as the GridLSTM~\cite{GridLSTM} or replacing the GRU cell with a Leaky Lowpass (LeakyLP) cell~\cite{leakyMDRNN}. Designing a more stable 2D RNN architecture may reduce ill-conditioning of the loss landscape, enabling a more efficient natural gradient optimization, without the need for more sophisticated damping schemes. 

\section{Loss landscape}
\label{sec:loss_landscape}
In the previous section, we observed that minSR was significantly more successful in 1D and hypothesized that the 2D loss landscapes may be ill-conditioned. In this section, we visualize these landscapes along certain parameter directions. In Ref.~\cite{VizualizingRNNs}, plotting along two random parameter directions, with appropriate parameter normalization, has been shown to provide useful insights into the geometry of the loss landscape.
\begin{figure}
    \centering
    \includegraphics[width=
    \columnwidth]{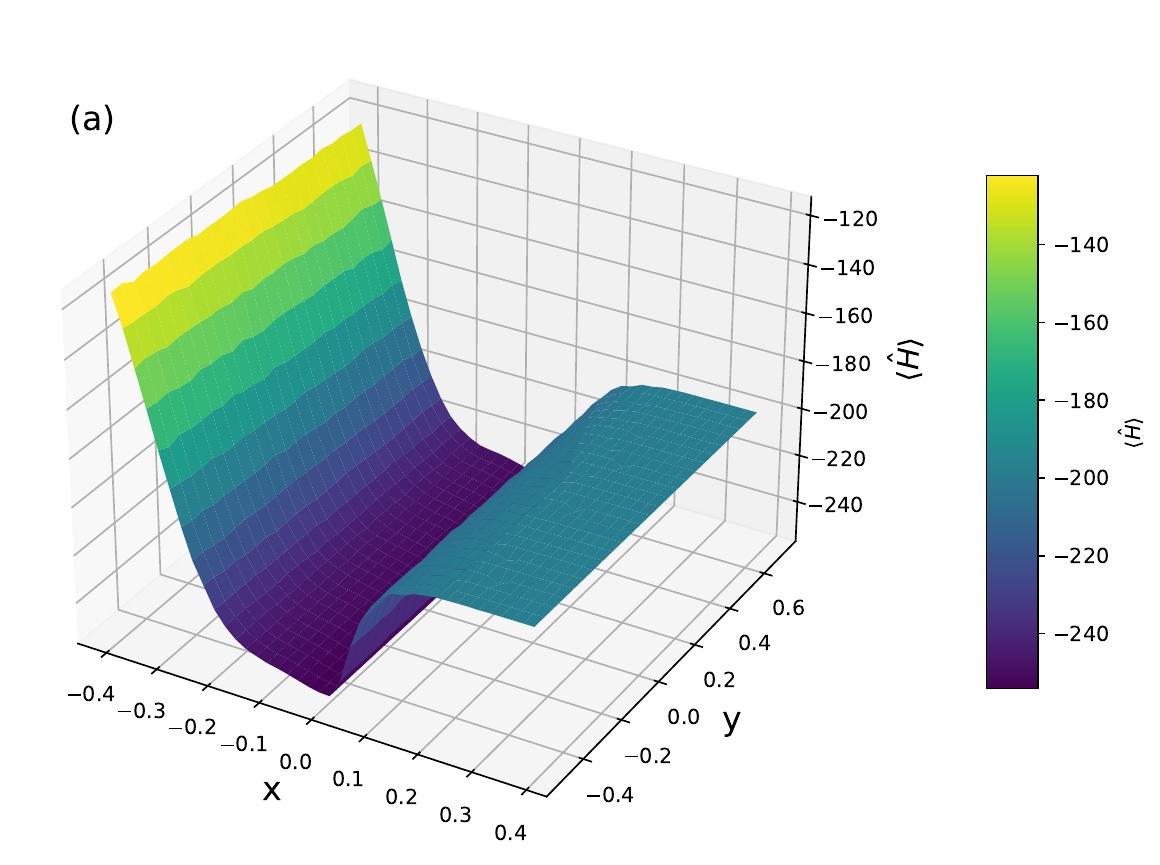}
\includegraphics[width=\columnwidth]{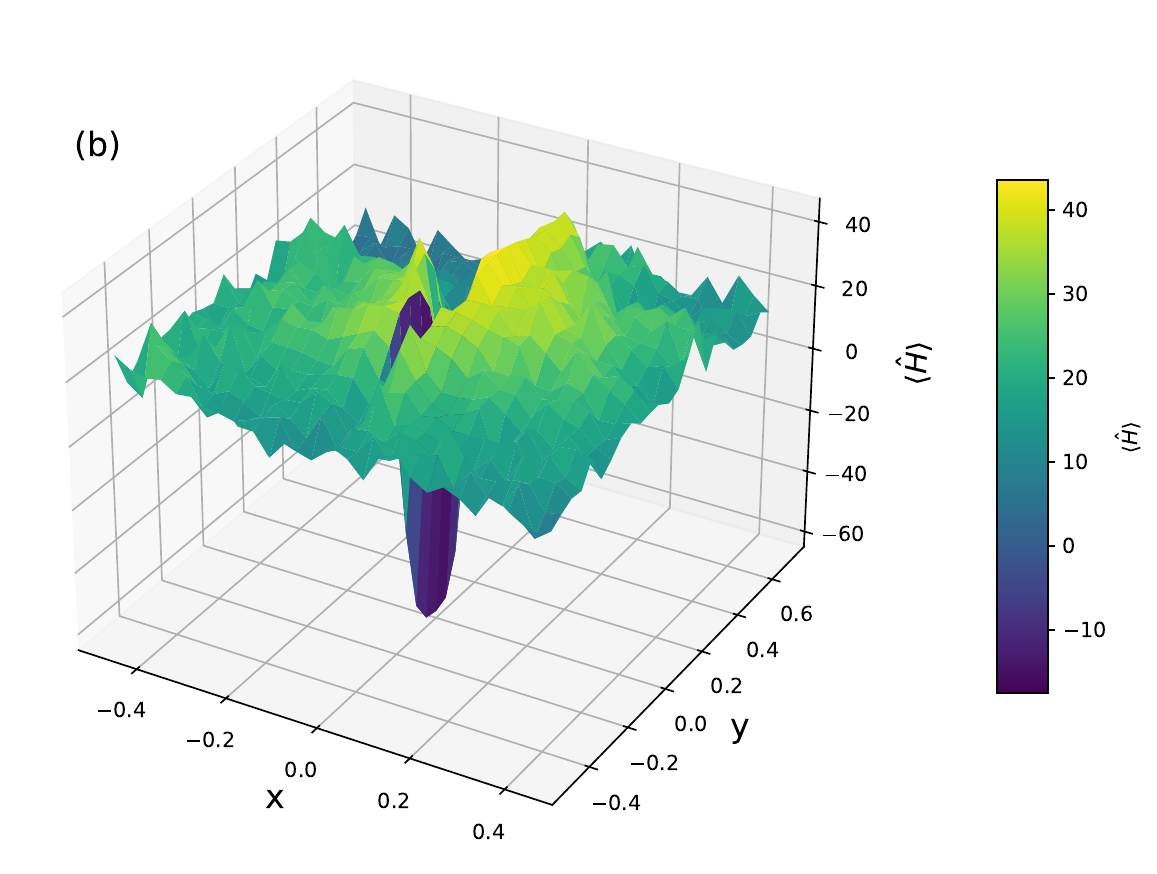}
    \caption{\textbf{Loss Landscape in 1D and 2D.} We plot the loss landscapes of the a) 1D TFIM and b) 2D Heisenberg model around their optimal parameters, from section \ref{sec:results}, on a $60\times 60$ grid. Each energy is calculated using $10^{3}$ samples, which we found sufficient to discern important features of the loss landscape.  }
    \label{fig:loss-landscape}
\end{figure}
We plot the energy $E(\bm\theta_* + x\bm v_{\mathrm{max}} +y\bm v_{\min})$, where $\bm\theta_*$ denotes the optimized variational parameters obtained after training. $\bm v_{\max}$ and $\bm v_{\min} $ are parameter-space directions corresponding to finite-sample approximations to the eigenvectors of $S$ with eigenvalues approximating $\lambda_{\max}$ and $\lambda_{\min}$, respectively. A similar choice of $v_{\max}$ and $v_{\min}$ has been proposed to accurately capture saddle points~\cite{bottcher2024visualizing}, and it may allow us to focus on ill-conditioned portions of the loss landscape. To ensure our comparison of the different landscapes is not biased by the size of the weights and biases, we ensure $\|\bm v_{\max} \|=\|\bm v_{\min} \|=\|\bm\theta_*\|$ for each of our models \cite{VizualizingRNNs}.

In Fig.~\ref{fig:loss-landscape}, we observe that the loss landscape of the 2D Heisenberg model is very ill-conditioned in comparison to the smooth loss landscape in the 1D TFIM model. Its convex region in the center is 10 times smaller in the $x$ direction. This indicates that the variational energy becomes more sensitive to small changes in certain parameters, reminiscent of other models that face exploding gradients. This observation suggests that using a uniform $\lambda$ across all parameter directions may be a suboptimal choice, and that parameter-dependent damping should be introduced.

These plots support the hypothesis that loss-landscape ill-conditioning contributes to minSR's reduced advantage in two spatial dimensions. More stable architectures or damping schemes may therefore improve optimization in two dimensions.
\section{Scaling study}
\label{sec:samples}

\begin{figure}

    \includegraphics[width=\columnwidth]{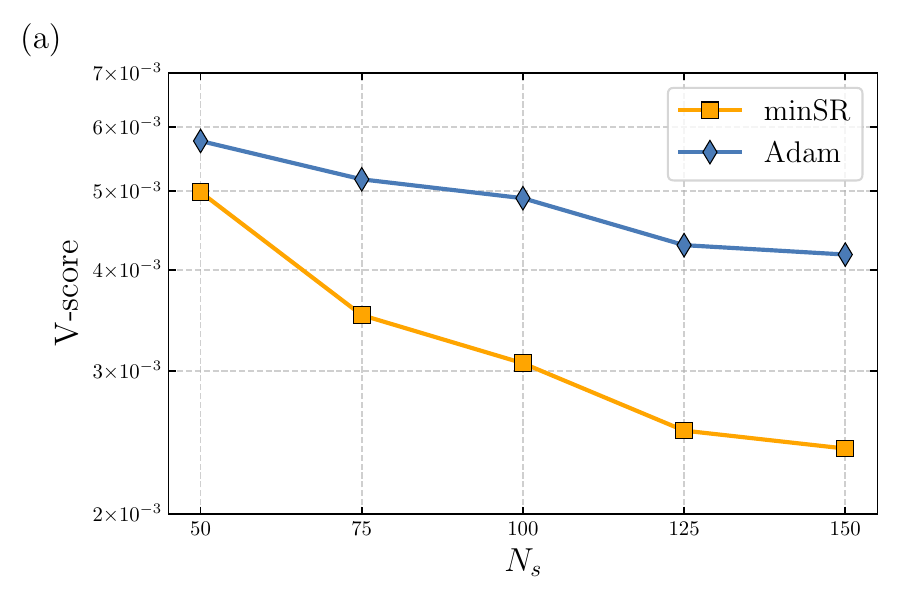}
    \includegraphics[width=\columnwidth]{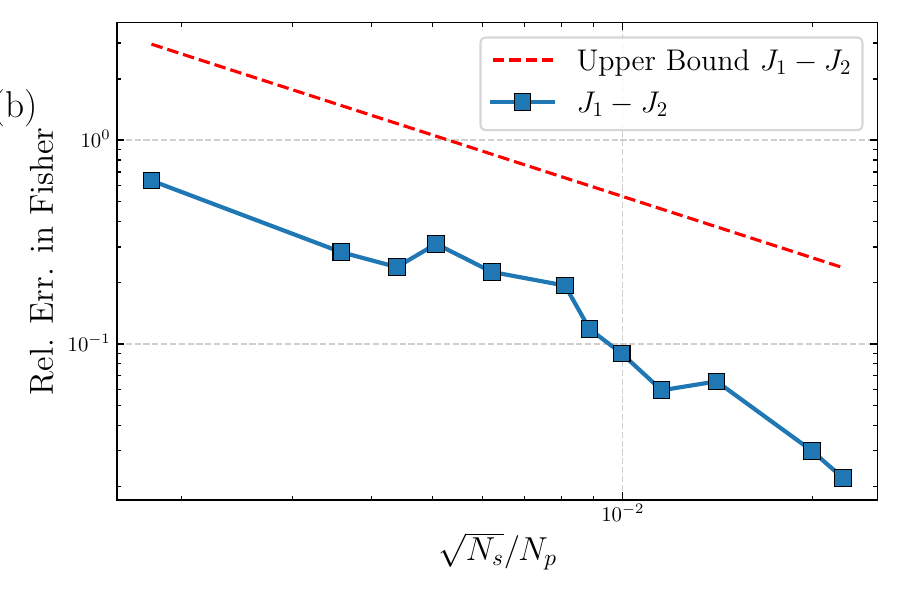}
    \caption{\textbf{Scaling Study of minSR.} (a) We plot the V-score (see Eq.~\eqref{eq:vscore}) of Adam and minSR on a $10\times 10$ antiferromagnetic Heisenberg model with OBC and $d_h=100$. We increase the number of samples $N_s$, plotted on the x-axis, and observe the effect on the V-score for both Adam and minSR. We keep an equal training time of 24 hours for all data points. We plot Adam as blue diamonds, and minSR as orange squares. (b) We plot the relative error of the subsampled SR metric $\|S-\tilde S\|_F/\|S\|_F$ against $\frac{\sqrt{N_s}}{N_p}$ for the $6\times 6$ $J_1-J_2$ model with $d_h=10$, shown as blue squares. Here $\|\cdot\|_F$ is the Frobenius norm. We plot the theoretical upper bound given by Eq.~\eqref{eq:UB} with $\epsilon=10^{-6}$ in the red dashed line as a guide to the eye.}
    \label{fig:scaling}
\end{figure}
In this section, we explore the effect of $N_s$ on the performance of minSR. We find that accurate variational energies can be obtained even with $N_s\ll N_p$. In Fig.~\ref{fig:scaling}(a), we plot both the V-score scaling of Adam and minSR with increased $N_s$, while holding the hidden dimension constant. All experiments are run for a training time of 24 hours as a control, leading to fewer training iterations for minSR and for larger $N_s$. Adam achieves nearly twice as many training iterations as minSR for $N_s=150$. Despite this speed advantage, minSR outperforms Adam for all values of $N_s$. Additionally, the advantage of minSR over Adam increases with $N_s$, while their performance is similar for low sample sizes.

The sensitivity of minSR to $N_s$ can be explained by the accuracy of our finite-sample SR metric. In particular, a good approximation of the SR metric requires $N_s\gg N_p$, where $N_p$ is the total number of parameters, e.g., $N_s\approx 10 N_p$, which is recommended in the literature~\cite{Becca_Sorella_2017}. Theoretical error bounds for the subsampled SR matrix follow from a Markov inequality argument~\cite{chebyshev-inequality, fisher-var}. We state the result as Proposition~\ref{prop:subsampled_fisher} and prove it in App.~\ref{app:proof_subsampled}.

\begin{proposition}[Subsampled QGT error bound]
\label{prop:subsampled_fisher}
Let the configurations
$\bm\sigma^{(1)},\dots,\bm\sigma^{(N_s)}$
be drawn i.i.d.\ from $|\Psi_{\bm\theta}|^2$, which exact
autoregressive sampling guarantees, and write
\[
O_i(\bm\sigma)
=
\partial_{\theta_i}\log\Psi_{\bm\theta}(\bm\sigma),
\qquad i=1,\dots,N_p,
\]
for the log-derivatives, which may be complex. Define
\begin{equation}
\mu_i\equiv\mathbb E[O_i],
\qquad
X_i\equiv O_i-\mu_i.
\end{equation}
Let $Q$ denote the population QGT, with entries
\begin{equation}
Q_{ij}
=
\mathbb E[X_iX_j^*],
\end{equation}
and let $\widetilde Q$ be the corresponding population-centered
$N_s$-sample estimator,
\begin{equation}
\widetilde Q_{ij}
=
\frac{1}{N_s}
\sum_{s=1}^{N_s}
X_i\big(\bm\sigma^{(s)}\big)
X_j\big(\bm\sigma^{(s)}\big)^*.
\label{eq:Qtilde_est}
\end{equation}
Then $\mathbb E[\widetilde Q]=Q$. Let
\begin{equation}
V_{\mathrm{avg}}^{(c)}
\equiv
\frac{1}{N_p^2}
\sum_{i,j=1}^{N_p}
\operatorname{Var}(X_iX_j^*)
\label{eq:vars1}
\end{equation}
be the average variance of a single-sample matrix element, where
$\operatorname{Var}(Z)
\equiv
\mathbb E[|Z-\mathbb E[Z]|^2]$,
and assume $V_{\mathrm{avg}}^{(c)}<\infty$.
Then, for every $\epsilon\in(0,1)$, with probability at least
$1-\epsilon$,
\begin{equation}
\|Q-\widetilde Q\|_F
\le
\frac{N_p}{\sqrt{N_s}}
\sqrt{\frac{V_{\mathrm{avg}}^{(c)}}{\epsilon}}.
\end{equation}
For real variational parameters, the population SR metric is
$S=\operatorname{Re}Q$. Defining
$\widetilde S=\operatorname{Re}\widetilde Q$, the same bound implies
\begin{equation}
\|S-\widetilde S\|_F
\le
\alpha^{-1}
\sqrt{\frac{V_{\mathrm{avg}}^{(c)}}{\epsilon}},
\label{eq:UB}
\end{equation}
where $\alpha=\sqrt{N_s}/N_p$. The same right-hand side also bounds the corresponding spectral norms,
since $\|A\|_2\le\|A\|_F$.
\end{proposition}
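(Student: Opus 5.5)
The plan is a second-moment (Chebyshev-type) argument applied to the squared Frobenius error. The paper indicates a Markov inequality; I would apply it to $\|Q-\widetilde Q\|_F^2$.

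\emph{Step 1: unbiasedness.} By linearity of expectation and the identical distribution of the $N_s$ samples, $\mathbb E[\widetilde Q_{ij}]=\frac{1}{N_s}\sum_s\mathbb E[X_i(\bm\sigma^{(s)})X_j(\bm\sigma^{(s)})^*]=\mathbb E[X_iX_j^*]=Q_{ij}$. This works because $X_i$ is centered at the \emph{population} mean $\mu_i$, not the sample mean, so the Bessel-type bias of the empirically centered estimator never appears.

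\emph{Step 2: entrywise variance.} Each entry $\widetilde Q_{ij}$ is an average of $N_s$ i.i.d.\ complex random variables $Z^{(s)}_{ij}=X_i(\bm\sigma^{(s)})X_j(\bm\sigma^{(s)})^*$, each with mean $Q_{ij}$. Independence is guaranteed by exact autoregressive sampling. The cross terms $\mathbb E[(Z^{(s)}-Q)(Z^{(s')}-Q)^*]$ vanish for $s\neq s'$. Hence $\mathbb E|\widetilde Q_{ij}-Q_{ij}|^2=\operatorname{Var}(X_iX_j^*)/N_s$. Summing over all $i,j$ gives
\begin{equation}
\mathbb E\|\widetilde Q-Q\|_F^2=\frac{1}{N_s}\sum_{i,j}\operatorname{Var}(X_iX_j^*)=\frac{N_p^2}{N_s}V_{\mathrm{avg}}^{(c)},
\end{equation}
which is finite by assumption.

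\emph{Step 3: tail bound.} Apply Markov's inequality to the nonnegative variable $\|\widetilde Q-Q\|_F^2$ with threshold $t^2=\frac{N_p^2V_{\mathrm{avg}}^{(c)}}{N_s\epsilon}$:
\begin{equation}
\Pr\big[\|\widetilde Q-Q\|_F^2\ge t^2\big]\le\frac{\mathbb E\|\widetilde Q-Q\|_F^2}{t^2}=\epsilon.
\end{equation}
Taking square roots yields the stated bound with probability at least $1-\epsilon$.

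\emph{Step 4: real parts and spectral norm.} Entrywise, $|\operatorname{Re}z|\le|z|$, so $\|S-\widetilde S\|_F=\|\operatorname{Re}(Q-\widetilde Q)\|_F\le\|Q-\widetilde Q\|_F$ on the same event. Writing $N_p/\sqrt{N_s}=\alpha^{-1}$ gives Eq.~\eqref{eq:UB}. The spectral-norm claim then follows from $\|A\|_2\le\|A\|_F$, which holds because the squared singular values sum to $\|A\|_F^2$.

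\emph{Main obstacle.} The only delicate point is Step 2: the vanishing of cross-sample covariances requires genuine independence of the samples together with population centering. I would state both explicitly, and note that the finite-variance hypothesis (fourth moments of the $O_i$) is exactly what makes the Markov step meaningful. Everything else is routine.
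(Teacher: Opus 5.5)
Your proposal is correct and follows the paper's proof essentially step for step. The shared steps are: unbiasedness from population centering; the entrywise variance $\operatorname{Var}(X_iX_j^*)/N_s$ from independence of the samples; summing to get $\mathbb{E}\|\widetilde Q-Q\|_F^2=N_p^2V_{\mathrm{avg}}^{(c)}/N_s$; Markov's inequality on the squared Frobenius error with the same threshold; and finally $|\operatorname{Re}z|\le|z|$ and $\|A\|_2\le\|A\|_F$.
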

The proof is provided in App.~\ref{app:proof_subsampled}. Note that computing $\widetilde S$ with an empirical mean $\hat{\mu}_i$, as we do in our numerical simulations, still results in an upper bound proportional to $\alpha^{-1}$ as highlighted in App.~\ref{app:proof_subsampled}.

In Fig.~\ref{fig:scaling}(b), we plot the relative error between the subsampled Fisher $\tilde S$ and a high-sample reference estimate $S$ of the Fisher matrix for the two-dimensional $J_1-J_2$ model. We use $N_s\gtrsim N_p$ to construct this reference Fisher estimate. We observe that as $\alpha$ grows, the decay rate of the subsampling error qualitatively matches the inverse relationship with $\alpha$ theoretically predicted by the upper bound~\eqref{eq:UB}.


\section{Conclusions}

In this work, we demonstrate a stable and reproducible application of minSR for optimizing RNN wave functions within the VMC framework. Our methodology is based on the standard regularization of the Fisher matrix. Our results show a significant improvement over the Adam optimizer on the 1D TFIM and 1D cluster state benchmarks. In contrast, we obtain comparable performance to the Adam optimizer in two spatial dimensions, namely on the square lattice Heisenberg and $J_1-J_2$ model benchmarks. In most of our experiments, we observe faster convergence of the minSR setup compared to the first-order optimizer using only a few hundred Monte Carlo samples. Our scaling study further shows that a subsampled SR matrix with $N_s \ll N_p$ remains a useful preconditioner, with an estimation error consistent with the theoretical bound in Eq.~\eqref{eq:UB}. 

An important investigation path for future work is to check whether the combinatorial growth of backpropagation paths in the two-dimensional RNN architecture is the main cause behind the ill-conditioned loss landscape~\cite{GridLSTM}. More sophisticated damping techniques may enable minSR to optimize on ill-conditioned loss landscapes. Alternatively, GridLSTM cells are known to overcome this ill-conditioning and merit further exploration~\cite{GridLSTM}. Additionally, the LeakyLP~\cite{leakyMDRNN} cell could yield a two-dimensional RNN wave function that delivers superior results under minSR optimization.  Our results also highlight the memory advantage provided by first-order optimizers, which allow for scalability to very large system sizes~\cite{merali2026parallelscanrecurrentneural}. Deploying second-order optimization on small system sizes for faster convergence, and then switching to first-order optimizers on large system sizes under an iterative training procedure~\cite{roth2020iterativeretrainingquantumspin, hibatallah2022supplementingrecurrentneuralnetwork, Schuyler, nh89-6jmf, merali2026parallelscanrecurrentneural}, offers a practical route to balancing accuracy and scalability against the memory limits of available hardware. Additionally, we note that our variational energies are calculated without applying point group symmetries on our RNN ansatz, as our main goal is to compare minSR and Adam optimizers on equal footing. We believe that imposing physical symmetries on our architecture will further improve our variational accuracies through large-scale simulations, which we plan to pursue in future work.

An open question raised by our results is whether the geometry of the variational manifold can predict when minSR outperforms first-order optimizers. In particular, the spectrum of the QGT and the effective dimension of the sample-spanned tangent space may quantify how much curvature information is captured by a finite number of samples. Relating these information-geometric quantities to the observed performance gap between minSR and Adam across models and spatial dimensions is therefore a promising direction for future work~\cite{Dash2025}. More broadly, geometric and curvature-aware methods have proved useful for characterizing and improving learning dynamics in other machine-learning settings~\cite{GeoHNN,DMDGEN,GRATIN}, suggesting that a similar systematic study of variational geometry could provide useful diagnostics for predicting optimization performance in neural quantum states.

\section*{Code availability}
Our code is available at \texttt{https://github.com/werro40/RNN-Natural-Gradients}.

\section*{Acknowledgments}
We thank Juan Carrasquilla, Ejaaz Merali, Adam Smith, and Omar Masri for insightful discussions. Computer simulations were made possible thanks to the Digital Research Alliance of Canada and the Math Faculty Computing Facility at the University of Waterloo. M.H. acknowledges support from the Natural Sciences and Engineering Research Council of Canada (NSERC) and the Digital Research Alliance of Canada.

\appendix

\clearpage
\bibliography{Biblio}

\begin{thebibliography}{90}%
\makeatletter
\providecommand \@ifxundefined [1]{%
 \@ifx{#1\undefined}
}%
\providecommand \@ifnum [1]{%
 \ifnum #1\expandafter \@firstoftwo
 \else \expandafter \@secondoftwo
 \fi
}%
\providecommand \@ifx [1]{%
 \ifx #1\expandafter \@firstoftwo
 \else \expandafter \@secondoftwo
 \fi
}%
\providecommand \natexlab [1]{#1}%
\providecommand \enquote  [1]{``#1''}%
\providecommand \bibnamefont  [1]{#1}%
\providecommand \bibfnamefont [1]{#1}%
\providecommand \citenamefont [1]{#1}%
\providecommand \href@noop [0]{\@secondoftwo}%
\providecommand \href [0]{\begingroup \@sanitize@url \@href}%
\providecommand \@href[1]{\@@startlink{#1}\@@href}%
\providecommand \@@href[1]{\endgroup#1\@@endlink}%
\providecommand \@sanitize@url [0]{\catcode `\\12\catcode `\$12\catcode `\&12\catcode `\#12\catcode `\^12\catcode `\_12\catcode `\%12\relax}%
\providecommand \@@startlink[1]{}%
\providecommand \@@endlink[0]{}%
\providecommand \url  [0]{\begingroup\@sanitize@url \@url }%
\providecommand \@url [1]{\endgroup\@href {#1}{\urlprefix }}%
\providecommand \urlprefix  [0]{URL }%
\providecommand \Eprint [0]{\href }%
\providecommand \doibase [0]{http://dx.doi.org/}%
\providecommand \selectlanguage [0]{\@gobble}%
\providecommand \bibinfo  [0]{\@secondoftwo}%
\providecommand \bibfield  [0]{\@secondoftwo}%
\providecommand \translation [1]{[#1]}%
\providecommand \BibitemOpen [0]{}%
\providecommand \bibitemStop [0]{}%
\providecommand \bibitemNoStop [0]{.\EOS\space}%
\providecommand \EOS [0]{\spacefactor3000\relax}%
\providecommand \BibitemShut  [1]{\csname bibitem#1\endcsname}%
\let\auto@bib@innerbib\@empty
\bibitem [{\citenamefont {Carleo}\ and\ \citenamefont {Troyer}(2017)}]{Carleo_original_2017}%
  \BibitemOpen
  \bibfield  {author} {\bibinfo {author} {\bibfnamefont {Giuseppe}\ \bibnamefont {Carleo}}\ and\ \bibinfo {author} {\bibfnamefont {Matthias}\ \bibnamefont {Troyer}},\ }\bibfield  {title} {\enquote {\bibinfo {title} {Solving the quantum many-body problem with artificial neural networks},}\ }\href {\doibase 10.1126/science.aag2302} {\bibfield  {journal} {\bibinfo  {journal} {Science}\ }\textbf {\bibinfo {volume} {355}},\ \bibinfo {pages} {602–606} (\bibinfo {year} {2017})}\BibitemShut {NoStop}%
\bibitem [{\citenamefont {Lange}\ \emph {et~al.}(2024{\natexlab{a}})\citenamefont {Lange}, \citenamefont {Van~de Walle}, \citenamefont {Abedinnia},\ and\ \citenamefont {Bohrdt}}]{NQS-rev}%
  \BibitemOpen
  \bibfield  {author} {\bibinfo {author} {\bibfnamefont {Hannah}\ \bibnamefont {Lange}}, \bibinfo {author} {\bibfnamefont {Anka}\ \bibnamefont {Van~de Walle}}, \bibinfo {author} {\bibfnamefont {Atiye}\ \bibnamefont {Abedinnia}}, \ and\ \bibinfo {author} {\bibfnamefont {Annabelle}\ \bibnamefont {Bohrdt}},\ }\bibfield  {title} {\enquote {\bibinfo {title} {From architectures to applications: a review of neural quantum states},}\ }\href {\doibase 10.1088/2058-9565/ad7168} {\bibfield  {journal} {\bibinfo  {journal} {Quantum Science and Technology}\ }\textbf {\bibinfo {volume} {9}},\ \bibinfo {pages} {040501} (\bibinfo {year} {2024}{\natexlab{a}})}\BibitemShut {NoStop}%
\bibitem [{\citenamefont {Dawid}\ \emph {et~al.}(2025)\citenamefont {Dawid}, \citenamefont {Arnold}, \citenamefont {Requena}, \citenamefont {Gresch}, \citenamefont {Płodzień}, \citenamefont {Donatella}, \citenamefont {Nicoli}, \citenamefont {Stornati}, \citenamefont {Koch}, \citenamefont {Büttner}, \citenamefont {Okuła}, \citenamefont {Muñoz-Gil}, \citenamefont {Vargas-Hernández}, \citenamefont {Cervera-Lierta}, \citenamefont {Carrasquilla}, \citenamefont {Dunjko}, \citenamefont {Gabrié}, \citenamefont {Huembeli}, \citenamefont {van Nieuwenburg}, \citenamefont {Vicentini}, \citenamefont {Wang}, \citenamefont {Wetzel}, \citenamefont {Carleo}, \citenamefont {Greplová}, \citenamefont {Krems}, \citenamefont {Marquardt}, \citenamefont {Tomza}, \citenamefont {Lewenstein},\ and\ \citenamefont {Dauphin}}]{Dawid_2025}%
  \BibitemOpen
  \bibfield  {author} {\bibinfo {author} {\bibfnamefont {Anna}\ \bibnamefont {Dawid}}, \bibinfo {author} {\bibfnamefont {Julian}\ \bibnamefont {Arnold}}, \bibinfo {author} {\bibfnamefont {Borja}\ \bibnamefont {Requena}}, \bibinfo {author} {\bibfnamefont {Alexander}\ \bibnamefont {Gresch}}, \bibinfo {author} {\bibfnamefont {Marcin}\ \bibnamefont {Płodzień}}, \bibinfo {author} {\bibfnamefont {Kaelan}\ \bibnamefont {Donatella}}, \bibinfo {author} {\bibfnamefont {Kim~A.}\ \bibnamefont {Nicoli}}, \bibinfo {author} {\bibfnamefont {Paolo}\ \bibnamefont {Stornati}}, \bibinfo {author} {\bibfnamefont {Rouven}\ \bibnamefont {Koch}}, \bibinfo {author} {\bibfnamefont {Miriam}\ \bibnamefont {Büttner}}, \bibinfo {author} {\bibfnamefont {Robert}\ \bibnamefont {Okuła}}, \bibinfo {author} {\bibfnamefont {Gorka}\ \bibnamefont {Muñoz-Gil}}, \bibinfo {author} {\bibfnamefont {Rodrigo~A.}\ \bibnamefont {Vargas-Hernández}}, \bibinfo {author} {\bibfnamefont {Alba}\ \bibnamefont {Cervera-Lierta}}, \bibinfo {author}
  {\bibfnamefont {Juan}\ \bibnamefont {Carrasquilla}}, \bibinfo {author} {\bibfnamefont {Vedran}\ \bibnamefont {Dunjko}}, \bibinfo {author} {\bibfnamefont {Marylou}\ \bibnamefont {Gabrié}}, \bibinfo {author} {\bibfnamefont {Patrick}\ \bibnamefont {Huembeli}}, \bibinfo {author} {\bibfnamefont {Evert}\ \bibnamefont {van Nieuwenburg}}, \bibinfo {author} {\bibfnamefont {Filippo}\ \bibnamefont {Vicentini}}, \bibinfo {author} {\bibfnamefont {Lei}\ \bibnamefont {Wang}}, \bibinfo {author} {\bibfnamefont {Sebastian~J.}\ \bibnamefont {Wetzel}}, \bibinfo {author} {\bibfnamefont {Giuseppe}\ \bibnamefont {Carleo}}, \bibinfo {author} {\bibfnamefont {Eliška}\ \bibnamefont {Greplová}}, \bibinfo {author} {\bibfnamefont {Roman}\ \bibnamefont {Krems}}, \bibinfo {author} {\bibfnamefont {Florian}\ \bibnamefont {Marquardt}}, \bibinfo {author} {\bibfnamefont {Michał}\ \bibnamefont {Tomza}}, \bibinfo {author} {\bibfnamefont {Maciej}\ \bibnamefont {Lewenstein}}, \ and\ \bibinfo {author} {\bibfnamefont {Alexandre}\ \bibnamefont
  {Dauphin}},\ }\href {\doibase 10.1017/9781009504942} {\emph {\bibinfo {title} {Machine Learning in Quantum Sciences}}}\ (\bibinfo  {publisher} {Cambridge University Press},\ \bibinfo {year} {2025})\BibitemShut {NoStop}%
\bibitem [{\citenamefont {Choo}\ \emph {et~al.}(2020)\citenamefont {Choo}, \citenamefont {Mezzacapo},\ and\ \citenamefont {Carleo}}]{Choo2020}%
  \BibitemOpen
  \bibfield  {author} {\bibinfo {author} {\bibfnamefont {Kenny}\ \bibnamefont {Choo}}, \bibinfo {author} {\bibfnamefont {Antonio}\ \bibnamefont {Mezzacapo}}, \ and\ \bibinfo {author} {\bibfnamefont {Giuseppe}\ \bibnamefont {Carleo}},\ }\bibfield  {title} {\enquote {\bibinfo {title} {Fermionic neural-network states for ab-initio electronic structure},}\ }\href {\doibase 10.1038/s41467-020-15724-9} {\bibfield  {journal} {\bibinfo  {journal} {Nature Communications}\ }\textbf {\bibinfo {volume} {11}},\ \bibinfo {pages} {2368} (\bibinfo {year} {2020})}\BibitemShut {NoStop}%
\bibitem [{\citenamefont {V}\ and\ \citenamefont {Medhi}(2024)}]{v2024restrictedboltzmannmachinenetwork}%
  \BibitemOpen
  \bibfield  {author} {\bibinfo {author} {\bibfnamefont {Karthik}\ \bibnamefont {V}}\ and\ \bibinfo {author} {\bibfnamefont {Amal}\ \bibnamefont {Medhi}},\ }\href {https://arxiv.org/abs/2412.04103} {\enquote {\bibinfo {title} {Restricted boltzmann machine network versus jastrow correlated wave function for the two-dimensional hubbard model},}\ } (\bibinfo {year} {2024}),\ \Eprint {http://arxiv.org/abs/2412.04103} {arXiv:2412.04103 [cond-mat.str-el]} \BibitemShut {NoStop}%
\bibitem [{\citenamefont {Nomura}\ \emph {et~al.}(2017)\citenamefont {Nomura}, \citenamefont {Darmawan}, \citenamefont {Yamaji},\ and\ \citenamefont {Imada}}]{PhysRevB.96.205152}%
  \BibitemOpen
  \bibfield  {author} {\bibinfo {author} {\bibfnamefont {Yusuke}\ \bibnamefont {Nomura}}, \bibinfo {author} {\bibfnamefont {Andrew~S.}\ \bibnamefont {Darmawan}}, \bibinfo {author} {\bibfnamefont {Youhei}\ \bibnamefont {Yamaji}}, \ and\ \bibinfo {author} {\bibfnamefont {Masatoshi}\ \bibnamefont {Imada}},\ }\bibfield  {title} {\enquote {\bibinfo {title} {Restricted boltzmann machine learning for solving strongly correlated quantum systems},}\ }\href {\doibase 10.1103/PhysRevB.96.205152} {\bibfield  {journal} {\bibinfo  {journal} {Phys. Rev. B}\ }\textbf {\bibinfo {volume} {96}},\ \bibinfo {pages} {205152} (\bibinfo {year} {2017})}\BibitemShut {NoStop}%
\bibitem [{\citenamefont {Nomura}\ and\ \citenamefont {Imada}(2021)}]{PhysRevX.11.031034}%
  \BibitemOpen
  \bibfield  {author} {\bibinfo {author} {\bibfnamefont {Yusuke}\ \bibnamefont {Nomura}}\ and\ \bibinfo {author} {\bibfnamefont {Masatoshi}\ \bibnamefont {Imada}},\ }\bibfield  {title} {\enquote {\bibinfo {title} {Dirac-type nodal spin liquid revealed by refined quantum many-body solver using neural-network wave function, correlation ratio, and level spectroscopy},}\ }\href {\doibase 10.1103/PhysRevX.11.031034} {\bibfield  {journal} {\bibinfo  {journal} {Phys. Rev. X}\ }\textbf {\bibinfo {volume} {11}},\ \bibinfo {pages} {031034} (\bibinfo {year} {2021})}\BibitemShut {NoStop}%
\bibitem [{\citenamefont {Luo}\ and\ \citenamefont {Clark}(2019)}]{Di_Luo}%
  \BibitemOpen
  \bibfield  {author} {\bibinfo {author} {\bibfnamefont {Di}~\bibnamefont {Luo}}\ and\ \bibinfo {author} {\bibfnamefont {Bryan~K.}\ \bibnamefont {Clark}},\ }\bibfield  {title} {\enquote {\bibinfo {title} {Backflow transformations via neural networks for quantum many-body wave functions},}\ }\href {\doibase 10.1103/PhysRevLett.122.226401} {\bibfield  {journal} {\bibinfo  {journal} {Phys. Rev. Lett.}\ }\textbf {\bibinfo {volume} {122}},\ \bibinfo {pages} {226401} (\bibinfo {year} {2019})}\BibitemShut {NoStop}%
\bibitem [{\citenamefont {Cai}\ and\ \citenamefont {Liu}(2018)}]{zi2018}%
  \BibitemOpen
  \bibfield  {author} {\bibinfo {author} {\bibfnamefont {Zi}~\bibnamefont {Cai}}\ and\ \bibinfo {author} {\bibfnamefont {Jinguo}\ \bibnamefont {Liu}},\ }\bibfield  {title} {\enquote {\bibinfo {title} {Approximating quantum many-body wave functions using artificial neural networks},}\ }\href {\doibase 10.1103/PhysRevB.97.035116} {\bibfield  {journal} {\bibinfo  {journal} {Phys. Rev. B}\ }\textbf {\bibinfo {volume} {97}},\ \bibinfo {pages} {035116} (\bibinfo {year} {2018})}\BibitemShut {NoStop}%
\bibitem [{\citenamefont {Choo}\ \emph {et~al.}(2019)\citenamefont {Choo}, \citenamefont {Neupert},\ and\ \citenamefont {Carleo}}]{Choo_2019}%
  \BibitemOpen
  \bibfield  {author} {\bibinfo {author} {\bibfnamefont {Kenny}\ \bibnamefont {Choo}}, \bibinfo {author} {\bibfnamefont {Titus}\ \bibnamefont {Neupert}}, \ and\ \bibinfo {author} {\bibfnamefont {Giuseppe}\ \bibnamefont {Carleo}},\ }\bibfield  {title} {\enquote {\bibinfo {title} {Two-dimensional frustrated j1-j2 model studied with neural network quantum states},}\ }\href {\doibase 10.1103/physrevb.100.125124} {\bibfield  {journal} {\bibinfo  {journal} {Physical Review B}\ }\textbf {\bibinfo {volume} {100}} (\bibinfo {year} {2019}),\ 10.1103/physrevb.100.125124}\BibitemShut {NoStop}%
\bibitem [{\citenamefont {Roth}\ \emph {et~al.}(2023)\citenamefont {Roth}, \citenamefont {Szab\'o},\ and\ \citenamefont {MacDonald}}]{PhysRevB.108.054410}%
  \BibitemOpen
  \bibfield  {author} {\bibinfo {author} {\bibfnamefont {Christopher}\ \bibnamefont {Roth}}, \bibinfo {author} {\bibfnamefont {Attila}\ \bibnamefont {Szab\'o}}, \ and\ \bibinfo {author} {\bibfnamefont {Allan~H.}\ \bibnamefont {MacDonald}},\ }\bibfield  {title} {\enquote {\bibinfo {title} {High-accuracy variational monte carlo for frustrated magnets with deep neural networks},}\ }\href {\doibase 10.1103/PhysRevB.108.054410} {\bibfield  {journal} {\bibinfo  {journal} {Phys. Rev. B}\ }\textbf {\bibinfo {volume} {108}},\ \bibinfo {pages} {054410} (\bibinfo {year} {2023})}\BibitemShut {NoStop}%
\bibitem [{\citenamefont {Chen}\ and\ \citenamefont {Heyl}(2024)}]{Chen_2024}%
  \BibitemOpen
  \bibfield  {author} {\bibinfo {author} {\bibfnamefont {Ao}~\bibnamefont {Chen}}\ and\ \bibinfo {author} {\bibfnamefont {Markus}\ \bibnamefont {Heyl}},\ }\bibfield  {title} {\enquote {\bibinfo {title} {Empowering deep neural quantum states through efficient optimization},}\ }\href {\doibase 10.1038/s41567-024-02566-1} {\bibfield  {journal} {\bibinfo  {journal} {Nature Physics}\ }\textbf {\bibinfo {volume} {20}},\ \bibinfo {pages} {1476–1481} (\bibinfo {year} {2024})}\BibitemShut {NoStop}%
\bibitem [{\citenamefont {Hibat-Allah}\ \emph {et~al.}(2020)\citenamefont {Hibat-Allah}, \citenamefont {Ganahl}, \citenamefont {Hayward}, \citenamefont {Melko},\ and\ \citenamefont {Carrasquilla}}]{RNN-Wavefunc}%
  \BibitemOpen
  \bibfield  {author} {\bibinfo {author} {\bibfnamefont {Mohamed}\ \bibnamefont {Hibat-Allah}}, \bibinfo {author} {\bibfnamefont {Martin}\ \bibnamefont {Ganahl}}, \bibinfo {author} {\bibfnamefont {Lauren~E.}\ \bibnamefont {Hayward}}, \bibinfo {author} {\bibfnamefont {Roger~G.}\ \bibnamefont {Melko}}, \ and\ \bibinfo {author} {\bibfnamefont {Juan}\ \bibnamefont {Carrasquilla}},\ }\bibfield  {title} {\enquote {\bibinfo {title} {Recurrent neural network wave functions},}\ }\href {\doibase 10.1103/PhysRevResearch.2.023358} {\bibfield  {journal} {\bibinfo  {journal} {Phys. Rev. Res.}\ }\textbf {\bibinfo {volume} {2}},\ \bibinfo {pages} {023358} (\bibinfo {year} {2020})}\BibitemShut {NoStop}%
\bibitem [{\citenamefont {Hibat-Allah}\ \emph {et~al.}(2022)\citenamefont {Hibat-Allah}, \citenamefont {Melko},\ and\ \citenamefont {Carrasquilla}}]{hibatallah2022supplementingrecurrentneuralnetwork}%
  \BibitemOpen
  \bibfield  {author} {\bibinfo {author} {\bibfnamefont {Mohamed}\ \bibnamefont {Hibat-Allah}}, \bibinfo {author} {\bibfnamefont {Roger~G.}\ \bibnamefont {Melko}}, \ and\ \bibinfo {author} {\bibfnamefont {Juan}\ \bibnamefont {Carrasquilla}},\ }\href {https://arxiv.org/abs/2207.14314} {\enquote {\bibinfo {title} {Supplementing recurrent neural network wave functions with symmetry and annealing to improve accuracy},}\ } (\bibinfo {year} {2022}),\ \Eprint {http://arxiv.org/abs/2207.14314} {arXiv:2207.14314 [cond-mat.dis-nn]} \BibitemShut {NoStop}%
\bibitem [{\citenamefont {Wu}\ \emph {et~al.}(2023)\citenamefont {Wu}, \citenamefont {Rossi}, \citenamefont {Vicentini},\ and\ \citenamefont {Carleo}}]{Wu_2023}%
  \BibitemOpen
  \bibfield  {author} {\bibinfo {author} {\bibfnamefont {Dian}\ \bibnamefont {Wu}}, \bibinfo {author} {\bibfnamefont {Riccardo}\ \bibnamefont {Rossi}}, \bibinfo {author} {\bibfnamefont {Filippo}\ \bibnamefont {Vicentini}}, \ and\ \bibinfo {author} {\bibfnamefont {Giuseppe}\ \bibnamefont {Carleo}},\ }\bibfield  {title} {\enquote {\bibinfo {title} {From tensor-network quantum states to tensorial recurrent neural networks},}\ }\href {\doibase 10.1103/physrevresearch.5.l032001} {\bibfield  {journal} {\bibinfo  {journal} {Physical Review Research}\ }\textbf {\bibinfo {volume} {5}} (\bibinfo {year} {2023}),\ 10.1103/physrevresearch.5.l032001}\BibitemShut {NoStop}%
\bibitem [{\citenamefont {Roth}(2020)}]{roth2020iterativeretrainingquantumspin}%
  \BibitemOpen
  \bibfield  {author} {\bibinfo {author} {\bibfnamefont {Christopher}\ \bibnamefont {Roth}},\ }\href {https://arxiv.org/abs/2003.06228} {\enquote {\bibinfo {title} {Iterative retraining of quantum spin models using recurrent neural networks},}\ } (\bibinfo {year} {2020}),\ \Eprint {http://arxiv.org/abs/2003.06228} {arXiv:2003.06228 [physics.comp-ph]} \BibitemShut {NoStop}%
\bibitem [{\citenamefont {Luo}\ \emph {et~al.}(2021)\citenamefont {Luo}, \citenamefont {Chen}, \citenamefont {Hu}, \citenamefont {Zhao}, \citenamefont {Hur},\ and\ \citenamefont {Clark}}]{luo2021gauge}%
  \BibitemOpen
  \bibfield  {author} {\bibinfo {author} {\bibfnamefont {Di}~\bibnamefont {Luo}}, \bibinfo {author} {\bibfnamefont {Zhuo}\ \bibnamefont {Chen}}, \bibinfo {author} {\bibfnamefont {Kaiwen}\ \bibnamefont {Hu}}, \bibinfo {author} {\bibfnamefont {Zhizhen}\ \bibnamefont {Zhao}}, \bibinfo {author} {\bibfnamefont {Vera~Mikyoung}\ \bibnamefont {Hur}}, \ and\ \bibinfo {author} {\bibfnamefont {Bryan~K.}\ \bibnamefont {Clark}},\ }\bibfield  {title} {\enquote {\bibinfo {title} {Gauge invariant autoregressive neural networks for quantum lattice models},}\ }\href@noop {} {\  (\bibinfo {year} {2021})},\ \Eprint {http://arxiv.org/abs/2101.07243} {arXiv:2101.07243 [cond-mat.str-el]} \BibitemShut {NoStop}%
\bibitem [{\citenamefont {Moss}\ \emph {et~al.}(2025{\natexlab{a}})\citenamefont {Moss}, \citenamefont {Wiersema}, \citenamefont {Hibat-Allah}, \citenamefont {Carrasquilla},\ and\ \citenamefont {Melko}}]{Schuyler}%
  \BibitemOpen
  \bibfield  {author} {\bibinfo {author} {\bibfnamefont {M.~Schuyler}\ \bibnamefont {Moss}}, \bibinfo {author} {\bibfnamefont {Roeland}\ \bibnamefont {Wiersema}}, \bibinfo {author} {\bibfnamefont {Mohamed}\ \bibnamefont {Hibat-Allah}}, \bibinfo {author} {\bibfnamefont {Juan}\ \bibnamefont {Carrasquilla}}, \ and\ \bibinfo {author} {\bibfnamefont {Roger~G.}\ \bibnamefont {Melko}},\ }\bibfield  {title} {\enquote {\bibinfo {title} {Leveraging recurrence in neural network wavefunctions for large-scale simulations of heisenberg antiferromagnets on the square lattice},}\ }\href {\doibase 10.1103/6ccd-wzhz} {\bibfield  {journal} {\bibinfo  {journal} {Phys. Rev. B}\ }\textbf {\bibinfo {volume} {112}},\ \bibinfo {pages} {134450} (\bibinfo {year} {2025}{\natexlab{a}})}\BibitemShut {NoStop}%
\bibitem [{\citenamefont {Moss}\ \emph {et~al.}(2025{\natexlab{b}})\citenamefont {Moss}, \citenamefont {Wiersema}, \citenamefont {Hibat-Allah}, \citenamefont {Carrasquilla},\ and\ \citenamefont {Melko}}]{nh89-6jmf}%
  \BibitemOpen
  \bibfield  {author} {\bibinfo {author} {\bibfnamefont {M.~Schuyler}\ \bibnamefont {Moss}}, \bibinfo {author} {\bibfnamefont {Roeland}\ \bibnamefont {Wiersema}}, \bibinfo {author} {\bibfnamefont {Mohamed}\ \bibnamefont {Hibat-Allah}}, \bibinfo {author} {\bibfnamefont {Juan}\ \bibnamefont {Carrasquilla}}, \ and\ \bibinfo {author} {\bibfnamefont {Roger~G.}\ \bibnamefont {Melko}},\ }\bibfield  {title} {\enquote {\bibinfo {title} {Leveraging recurrence in neural network wavefunctions for large-scale simulations of heisenberg antiferromagnets on the triangular lattice},}\ }\href {\doibase 10.1103/nh89-6jmf} {\bibfield  {journal} {\bibinfo  {journal} {Phys. Rev. B}\ }\textbf {\bibinfo {volume} {112}},\ \bibinfo {pages} {134449} (\bibinfo {year} {2025}{\natexlab{b}})}\BibitemShut {NoStop}%
\bibitem [{\citenamefont {Hibat-Allah}\ \emph {et~al.}(2025)\citenamefont {Hibat-Allah}, \citenamefont {Merali}, \citenamefont {Torlai}, \citenamefont {Melko},\ and\ \citenamefont {Carrasquilla}}]{Hibat_Allah_2025}%
  \BibitemOpen
  \bibfield  {author} {\bibinfo {author} {\bibfnamefont {Mohamed}\ \bibnamefont {Hibat-Allah}}, \bibinfo {author} {\bibfnamefont {Ejaaz}\ \bibnamefont {Merali}}, \bibinfo {author} {\bibfnamefont {Giacomo}\ \bibnamefont {Torlai}}, \bibinfo {author} {\bibfnamefont {Roger~G.}\ \bibnamefont {Melko}}, \ and\ \bibinfo {author} {\bibfnamefont {Juan}\ \bibnamefont {Carrasquilla}},\ }\bibfield  {title} {\enquote {\bibinfo {title} {Recurrent neural network wave functions for rydberg atom arrays on kagome lattice},}\ }\href {\doibase 10.1038/s42005-025-02226-7} {\bibfield  {journal} {\bibinfo  {journal} {Communications Physics}\ }\textbf {\bibinfo {volume} {8}} (\bibinfo {year} {2025}),\ 10.1038/s42005-025-02226-7}\BibitemShut {NoStop}%
\bibitem [{\citenamefont {Merali}\ \emph {et~al.}(2026)\citenamefont {Merali}, \citenamefont {Hibat-Allah}, \citenamefont {Kohandel}, \citenamefont {Scalettar},\ and\ \citenamefont {Khatami}}]{merali2026parallelscanrecurrentneural}%
  \BibitemOpen
  \bibfield  {author} {\bibinfo {author} {\bibfnamefont {Ejaaz}\ \bibnamefont {Merali}}, \bibinfo {author} {\bibfnamefont {Mohamed}\ \bibnamefont {Hibat-Allah}}, \bibinfo {author} {\bibfnamefont {Mohammad}\ \bibnamefont {Kohandel}}, \bibinfo {author} {\bibfnamefont {Richard~T.}\ \bibnamefont {Scalettar}}, \ and\ \bibinfo {author} {\bibfnamefont {Ehsan}\ \bibnamefont {Khatami}},\ }\href {https://arxiv.org/abs/2605.13807} {\enquote {\bibinfo {title} {Parallel scan recurrent neural quantum states for scalable variational monte carlo},}\ } (\bibinfo {year} {2026}),\ \Eprint {http://arxiv.org/abs/2605.13807} {arXiv:2605.13807 [cond-mat.str-el]} \BibitemShut {NoStop}%
\bibitem [{\citenamefont {Zhang}\ and\ \citenamefont {Di~Ventra}(2023)}]{Zhang_2023}%
  \BibitemOpen
  \bibfield  {author} {\bibinfo {author} {\bibfnamefont {Yuan-Hang}\ \bibnamefont {Zhang}}\ and\ \bibinfo {author} {\bibfnamefont {Massimiliano}\ \bibnamefont {Di~Ventra}},\ }\bibfield  {title} {\enquote {\bibinfo {title} {Transformer quantum state: A multipurpose model for quantum many-body problems},}\ }\href {\doibase 10.1103/physrevb.107.075147} {\bibfield  {journal} {\bibinfo  {journal} {Physical Review B}\ }\textbf {\bibinfo {volume} {107}} (\bibinfo {year} {2023}),\ 10.1103/physrevb.107.075147}\BibitemShut {NoStop}%
\bibitem [{\citenamefont {Viteritti}\ \emph {et~al.}(2023)\citenamefont {Viteritti}, \citenamefont {Rende},\ and\ \citenamefont {Becca}}]{PhysRevLett.130.236401}%
  \BibitemOpen
  \bibfield  {author} {\bibinfo {author} {\bibfnamefont {Luciano~Loris}\ \bibnamefont {Viteritti}}, \bibinfo {author} {\bibfnamefont {Riccardo}\ \bibnamefont {Rende}}, \ and\ \bibinfo {author} {\bibfnamefont {Federico}\ \bibnamefont {Becca}},\ }\bibfield  {title} {\enquote {\bibinfo {title} {Transformer variational wave functions for frustrated quantum spin systems},}\ }\href {\doibase 10.1103/PhysRevLett.130.236401} {\bibfield  {journal} {\bibinfo  {journal} {Phys. Rev. Lett.}\ }\textbf {\bibinfo {volume} {130}},\ \bibinfo {pages} {236401} (\bibinfo {year} {2023})}\BibitemShut {NoStop}%
\bibitem [{\citenamefont {Rende}\ \emph {et~al.}(2024)\citenamefont {Rende}, \citenamefont {Viteritti}, \citenamefont {Bardone}, \citenamefont {Becca},\ and\ \citenamefont {Goldt}}]{minSR-lianlg}%
  \BibitemOpen
  \bibfield  {author} {\bibinfo {author} {\bibfnamefont {Riccardo}\ \bibnamefont {Rende}}, \bibinfo {author} {\bibfnamefont {Luciano~Loris}\ \bibnamefont {Viteritti}}, \bibinfo {author} {\bibfnamefont {Lorenzo}\ \bibnamefont {Bardone}}, \bibinfo {author} {\bibfnamefont {Federico}\ \bibnamefont {Becca}}, \ and\ \bibinfo {author} {\bibfnamefont {Sebastian}\ \bibnamefont {Goldt}},\ }\bibfield  {title} {\enquote {\bibinfo {title} {A simple linear algebra identity to optimize large-scale neural network quantum states},}\ }\href {\doibase 10.1038/s42005-024-01732-4} {\bibfield  {journal} {\bibinfo  {journal} {Communications Physics}\ }\textbf {\bibinfo {volume} {7}},\ \bibinfo {pages} {260} (\bibinfo {year} {2024})}\BibitemShut {NoStop}%
\bibitem [{\citenamefont {Chen}\ \emph {et~al.}(2026)\citenamefont {Chen}, \citenamefont {Naik},\ and\ \citenamefont {Heyl}}]{7xwp-25y9}%
  \BibitemOpen
  \bibfield  {author} {\bibinfo {author} {\bibfnamefont {Ao}~\bibnamefont {Chen}}, \bibinfo {author} {\bibfnamefont {Vighnesh~Dattatraya}\ \bibnamefont {Naik}}, \ and\ \bibinfo {author} {\bibfnamefont {Markus}\ \bibnamefont {Heyl}},\ }\bibfield  {title} {\enquote {\bibinfo {title} {Convolutional transformer wave functions},}\ }\href {\doibase 10.1103/7xwp-25y9} {\bibfield  {journal} {\bibinfo  {journal} {Phys. Rev. Res.}\ }\textbf {\bibinfo {volume} {8}},\ \bibinfo {pages} {L022040} (\bibinfo {year} {2026})}\BibitemShut {NoStop}%
\bibitem [{\citenamefont {Sprague}\ and\ \citenamefont {Czischek}(2024)}]{Sprague_2024}%
  \BibitemOpen
  \bibfield  {author} {\bibinfo {author} {\bibfnamefont {Kyle}\ \bibnamefont {Sprague}}\ and\ \bibinfo {author} {\bibfnamefont {Stefanie}\ \bibnamefont {Czischek}},\ }\bibfield  {title} {\enquote {\bibinfo {title} {Variational monte carlo with large patched transformers},}\ }\href {\doibase 10.1038/s42005-024-01584-y} {\bibfield  {journal} {\bibinfo  {journal} {Communications Physics}\ }\textbf {\bibinfo {volume} {7}} (\bibinfo {year} {2024}),\ 10.1038/s42005-024-01584-y}\BibitemShut {NoStop}%
\bibitem [{\citenamefont {Schmitt}\ and\ \citenamefont {Heyl}(2020)}]{PhysRevLett.125.100503}%
  \BibitemOpen
  \bibfield  {author} {\bibinfo {author} {\bibfnamefont {Markus}\ \bibnamefont {Schmitt}}\ and\ \bibinfo {author} {\bibfnamefont {Markus}\ \bibnamefont {Heyl}},\ }\bibfield  {title} {\enquote {\bibinfo {title} {Quantum many-body dynamics in two dimensions with artificial neural networks},}\ }\href {\doibase 10.1103/PhysRevLett.125.100503} {\bibfield  {journal} {\bibinfo  {journal} {Phys. Rev. Lett.}\ }\textbf {\bibinfo {volume} {125}},\ \bibinfo {pages} {100503} (\bibinfo {year} {2020})}\BibitemShut {NoStop}%
\bibitem [{\citenamefont {Sinibaldi}\ \emph {et~al.}(2026)\citenamefont {Sinibaldi}, \citenamefont {Hendry}, \citenamefont {Vicentini},\ and\ \citenamefont {Carleo}}]{kqvx-dl54}%
  \BibitemOpen
  \bibfield  {author} {\bibinfo {author} {\bibfnamefont {Alessandro}\ \bibnamefont {Sinibaldi}}, \bibinfo {author} {\bibfnamefont {Douglas}\ \bibnamefont {Hendry}}, \bibinfo {author} {\bibfnamefont {Filippo}\ \bibnamefont {Vicentini}}, \ and\ \bibinfo {author} {\bibfnamefont {Giuseppe}\ \bibnamefont {Carleo}},\ }\bibfield  {title} {\enquote {\bibinfo {title} {Time-dependent neural galerkin method for quantum dynamics},}\ }\href {\doibase 10.1103/kqvx-dl54} {\bibfield  {journal} {\bibinfo  {journal} {Phys. Rev. Lett.}\ }\textbf {\bibinfo {volume} {136}},\ \bibinfo {pages} {120402} (\bibinfo {year} {2026})}\BibitemShut {NoStop}%
\bibitem [{\citenamefont {Van~de Walle}\ \emph {et~al.}(2025)\citenamefont {Van~de Walle}, \citenamefont {Schmitt},\ and\ \citenamefont {Bohrdt}}]{Van_de_Walle_2025}%
  \BibitemOpen
  \bibfield  {author} {\bibinfo {author} {\bibfnamefont {Anka}\ \bibnamefont {Van~de Walle}}, \bibinfo {author} {\bibfnamefont {Markus}\ \bibnamefont {Schmitt}}, \ and\ \bibinfo {author} {\bibfnamefont {Annabelle}\ \bibnamefont {Bohrdt}},\ }\bibfield  {title} {\enquote {\bibinfo {title} {Many-body dynamics with explicitly time-dependent neural quantum states},}\ }\href {\doibase 10.1088/2632-2153/ae0f39} {\bibfield  {journal} {\bibinfo  {journal} {Machine Learning: Science and Technology}\ }\textbf {\bibinfo {volume} {6}},\ \bibinfo {pages} {045011} (\bibinfo {year} {2025})}\BibitemShut {NoStop}%
\bibitem [{\citenamefont {Sinibaldi}\ \emph {et~al.}(2023)\citenamefont {Sinibaldi}, \citenamefont {Giuliani}, \citenamefont {Carleo},\ and\ \citenamefont {Vicentini}}]{Sinibaldi2023unbiasingtime}%
  \BibitemOpen
  \bibfield  {author} {\bibinfo {author} {\bibfnamefont {Alessandro}\ \bibnamefont {Sinibaldi}}, \bibinfo {author} {\bibfnamefont {Clemens}\ \bibnamefont {Giuliani}}, \bibinfo {author} {\bibfnamefont {Giuseppe}\ \bibnamefont {Carleo}}, \ and\ \bibinfo {author} {\bibfnamefont {Filippo}\ \bibnamefont {Vicentini}},\ }\bibfield  {title} {\enquote {\bibinfo {title} {Unbiasing time-dependent {V}ariational {M}onte {C}arlo by projected quantum evolution},}\ }\href {\doibase 10.22331/q-2023-10-10-1131} {\bibfield  {journal} {\bibinfo  {journal} {{Quantum}}\ }\textbf {\bibinfo {volume} {7}},\ \bibinfo {pages} {1131} (\bibinfo {year} {2023})}\BibitemShut {NoStop}%
\bibitem [{\citenamefont {Vicentini}\ \emph {et~al.}(2019)\citenamefont {Vicentini}, \citenamefont {Biella}, \citenamefont {Regnault},\ and\ \citenamefont {Ciuti}}]{PhysRevLett.122.250503}%
  \BibitemOpen
  \bibfield  {author} {\bibinfo {author} {\bibfnamefont {Filippo}\ \bibnamefont {Vicentini}}, \bibinfo {author} {\bibfnamefont {Alberto}\ \bibnamefont {Biella}}, \bibinfo {author} {\bibfnamefont {Nicolas}\ \bibnamefont {Regnault}}, \ and\ \bibinfo {author} {\bibfnamefont {Cristiano}\ \bibnamefont {Ciuti}},\ }\bibfield  {title} {\enquote {\bibinfo {title} {Variational neural-network ansatz for steady states in open quantum systems},}\ }\href {\doibase 10.1103/PhysRevLett.122.250503} {\bibfield  {journal} {\bibinfo  {journal} {Phys. Rev. Lett.}\ }\textbf {\bibinfo {volume} {122}},\ \bibinfo {pages} {250503} (\bibinfo {year} {2019})}\BibitemShut {NoStop}%
\bibitem [{\citenamefont {Luo}\ \emph {et~al.}(2022)\citenamefont {Luo}, \citenamefont {Chen}, \citenamefont {Carrasquilla},\ and\ \citenamefont {Clark}}]{PhysRevLett.128.090501}%
  \BibitemOpen
  \bibfield  {author} {\bibinfo {author} {\bibfnamefont {Di}~\bibnamefont {Luo}}, \bibinfo {author} {\bibfnamefont {Zhuo}\ \bibnamefont {Chen}}, \bibinfo {author} {\bibfnamefont {Juan}\ \bibnamefont {Carrasquilla}}, \ and\ \bibinfo {author} {\bibfnamefont {Bryan~K.}\ \bibnamefont {Clark}},\ }\bibfield  {title} {\enquote {\bibinfo {title} {Autoregressive neural network for simulating open quantum systems via a probabilistic formulation},}\ }\href {\doibase 10.1103/PhysRevLett.128.090501} {\bibfield  {journal} {\bibinfo  {journal} {Phys. Rev. Lett.}\ }\textbf {\bibinfo {volume} {128}},\ \bibinfo {pages} {090501} (\bibinfo {year} {2022})}\BibitemShut {NoStop}%
\bibitem [{\citenamefont {Reh}\ \emph {et~al.}(2021)\citenamefont {Reh}, \citenamefont {Schmitt},\ and\ \citenamefont {G\"arttner}}]{PhysRevLett.127.230501}%
  \BibitemOpen
  \bibfield  {author} {\bibinfo {author} {\bibfnamefont {Moritz}\ \bibnamefont {Reh}}, \bibinfo {author} {\bibfnamefont {Markus}\ \bibnamefont {Schmitt}}, \ and\ \bibinfo {author} {\bibfnamefont {Martin}\ \bibnamefont {G\"arttner}},\ }\bibfield  {title} {\enquote {\bibinfo {title} {Time-dependent variational principle for open quantum systems with artificial neural networks},}\ }\href {\doibase 10.1103/PhysRevLett.127.230501} {\bibfield  {journal} {\bibinfo  {journal} {Phys. Rev. Lett.}\ }\textbf {\bibinfo {volume} {127}},\ \bibinfo {pages} {230501} (\bibinfo {year} {2021})}\BibitemShut {NoStop}%
\bibitem [{\citenamefont {Nomura}\ \emph {et~al.}(2021)\citenamefont {Nomura}, \citenamefont {Yoshioka},\ and\ \citenamefont {Nori}}]{Nomura_2021}%
  \BibitemOpen
  \bibfield  {author} {\bibinfo {author} {\bibfnamefont {Yusuke}\ \bibnamefont {Nomura}}, \bibinfo {author} {\bibfnamefont {Nobuyuki}\ \bibnamefont {Yoshioka}}, \ and\ \bibinfo {author} {\bibfnamefont {Franco}\ \bibnamefont {Nori}},\ }\bibfield  {title} {\enquote {\bibinfo {title} {Purifying deep boltzmann machines for thermal quantum states},}\ }\href {\doibase 10.1103/physrevlett.127.060601} {\bibfield  {journal} {\bibinfo  {journal} {Physical Review Letters}\ }\textbf {\bibinfo {volume} {127}} (\bibinfo {year} {2021}),\ 10.1103/physrevlett.127.060601}\BibitemShut {NoStop}%
\bibitem [{\citenamefont {Nys}\ \emph {et~al.}(2024)\citenamefont {Nys}, \citenamefont {Denis},\ and\ \citenamefont {Carleo}}]{nys2024realtimequantumdynamicsthermal}%
  \BibitemOpen
  \bibfield  {author} {\bibinfo {author} {\bibfnamefont {Jannes}\ \bibnamefont {Nys}}, \bibinfo {author} {\bibfnamefont {Zakari}\ \bibnamefont {Denis}}, \ and\ \bibinfo {author} {\bibfnamefont {Giuseppe}\ \bibnamefont {Carleo}},\ }\href {https://arxiv.org/abs/2309.07063} {\enquote {\bibinfo {title} {Real-time quantum dynamics of thermal states with neural thermofields},}\ } (\bibinfo {year} {2024}),\ \Eprint {http://arxiv.org/abs/2309.07063} {arXiv:2309.07063 [quant-ph]} \BibitemShut {NoStop}%
\bibitem [{\citenamefont {Torlai}\ and\ \citenamefont {Melko}(2018)}]{PhysRevLett.120.240503}%
  \BibitemOpen
  \bibfield  {author} {\bibinfo {author} {\bibfnamefont {Giacomo}\ \bibnamefont {Torlai}}\ and\ \bibinfo {author} {\bibfnamefont {Roger~G.}\ \bibnamefont {Melko}},\ }\bibfield  {title} {\enquote {\bibinfo {title} {Latent space purification via neural density operators},}\ }\href {\doibase 10.1103/PhysRevLett.120.240503} {\bibfield  {journal} {\bibinfo  {journal} {Phys. Rev. Lett.}\ }\textbf {\bibinfo {volume} {120}},\ \bibinfo {pages} {240503} (\bibinfo {year} {2018})}\BibitemShut {NoStop}%
\bibitem [{\citenamefont {Hendry}\ \emph {et~al.}(2022)\citenamefont {Hendry}, \citenamefont {Chen},\ and\ \citenamefont {Feiguin}}]{PhysRevB.106.165111}%
  \BibitemOpen
  \bibfield  {author} {\bibinfo {author} {\bibfnamefont {Douglas}\ \bibnamefont {Hendry}}, \bibinfo {author} {\bibfnamefont {Hongwei}\ \bibnamefont {Chen}}, \ and\ \bibinfo {author} {\bibfnamefont {Adrian}\ \bibnamefont {Feiguin}},\ }\bibfield  {title} {\enquote {\bibinfo {title} {Neural network representation for minimally entangled typical thermal states},}\ }\href {\doibase 10.1103/PhysRevB.106.165111} {\bibfield  {journal} {\bibinfo  {journal} {Phys. Rev. B}\ }\textbf {\bibinfo {volume} {106}},\ \bibinfo {pages} {165111} (\bibinfo {year} {2022})}\BibitemShut {NoStop}%
\bibitem [{\citenamefont {Kumar}\ \emph {et~al.}(2026)\citenamefont {Kumar}, \citenamefont {Balents}, \citenamefont {Hsieh},\ and\ \citenamefont {Melko}}]{KUMAR2026170362}%
  \BibitemOpen
  \bibfield  {author} {\bibinfo {author} {\bibfnamefont {Tarun~Advaith}\ \bibnamefont {Kumar}}, \bibinfo {author} {\bibfnamefont {Leon}\ \bibnamefont {Balents}}, \bibinfo {author} {\bibfnamefont {Timothy~H.}\ \bibnamefont {Hsieh}}, \ and\ \bibinfo {author} {\bibfnamefont {Roger~G.}\ \bibnamefont {Melko}},\ }\bibfield  {title} {\enquote {\bibinfo {title} {Autoregressive typical thermal states},}\ }\href {\doibase https://doi.org/10.1016/j.aop.2026.170362} {\bibfield  {journal} {\bibinfo  {journal} {Annals of Physics}\ }\textbf {\bibinfo {volume} {487}},\ \bibinfo {pages} {170362} (\bibinfo {year} {2026})}\BibitemShut {NoStop}%
\bibitem [{\citenamefont {Becca}\ and\ \citenamefont {Sorella}(2017)}]{Becca_Sorella_2017}%
  \BibitemOpen
  \bibfield  {author} {\bibinfo {author} {\bibfnamefont {Federico}\ \bibnamefont {Becca}}\ and\ \bibinfo {author} {\bibfnamefont {Sandro}\ \bibnamefont {Sorella}},\ }\href@noop {} {\emph {\bibinfo {title} {Quantum Monte Carlo Approaches for Correlated Systems}}}\ (\bibinfo  {publisher} {Cambridge University Press},\ \bibinfo {year} {2017})\BibitemShut {NoStop}%
\bibitem [{\citenamefont {Sorella}\ and\ \citenamefont {Capriotti}(2000)}]{PhysRevB.61.2599}%
  \BibitemOpen
  \bibfield  {author} {\bibinfo {author} {\bibfnamefont {Sandro}\ \bibnamefont {Sorella}}\ and\ \bibinfo {author} {\bibfnamefont {Luca}\ \bibnamefont {Capriotti}},\ }\bibfield  {title} {\enquote {\bibinfo {title} {Green function monte carlo with stochastic reconfiguration: An effective remedy for the sign problem},}\ }\href {\doibase 10.1103/PhysRevB.61.2599} {\bibfield  {journal} {\bibinfo  {journal} {Phys. Rev. B}\ }\textbf {\bibinfo {volume} {61}},\ \bibinfo {pages} {2599--2612} (\bibinfo {year} {2000})}\BibitemShut {NoStop}%
\bibitem [{\citenamefont {Amari}(1998)}]{amariNG}%
  \BibitemOpen
  \bibfield  {author} {\bibinfo {author} {\bibfnamefont {Shun-ichi}\ \bibnamefont {Amari}},\ }\bibfield  {title} {\enquote {\bibinfo {title} {Natural gradient works efficiently in learning},}\ }\href {\doibase 10.1162/089976698300017746} {\bibfield  {journal} {\bibinfo  {journal} {Neural Computation}\ }\textbf {\bibinfo {volume} {10}},\ \bibinfo {pages} {251--276} (\bibinfo {year} {1998})}\BibitemShut {NoStop}%
\bibitem [{\citenamefont {Park}\ and\ \citenamefont {Kastoryano}(2020)}]{PhysRevResearch.2.023232}%
  \BibitemOpen
  \bibfield  {author} {\bibinfo {author} {\bibfnamefont {Chae-Yeun}\ \bibnamefont {Park}}\ and\ \bibinfo {author} {\bibfnamefont {Michael~J.}\ \bibnamefont {Kastoryano}},\ }\bibfield  {title} {\enquote {\bibinfo {title} {Geometry of learning neural quantum states},}\ }\href {\doibase 10.1103/PhysRevResearch.2.023232} {\bibfield  {journal} {\bibinfo  {journal} {Phys. Rev. Res.}\ }\textbf {\bibinfo {volume} {2}},\ \bibinfo {pages} {023232} (\bibinfo {year} {2020})}\BibitemShut {NoStop}%
\bibitem [{\citenamefont {Dash}\ \emph {et~al.}(2025)\citenamefont {Dash}, \citenamefont {Gravina}, \citenamefont {Vicentini}, \citenamefont {Ferrero},\ and\ \citenamefont {Georges}}]{Dash2025}%
  \BibitemOpen
  \bibfield  {author} {\bibinfo {author} {\bibfnamefont {Sidhartha}\ \bibnamefont {Dash}}, \bibinfo {author} {\bibfnamefont {Luca}\ \bibnamefont {Gravina}}, \bibinfo {author} {\bibfnamefont {Filippo}\ \bibnamefont {Vicentini}}, \bibinfo {author} {\bibfnamefont {Michel}\ \bibnamefont {Ferrero}}, \ and\ \bibinfo {author} {\bibfnamefont {Antoine}\ \bibnamefont {Georges}},\ }\bibfield  {title} {\enquote {\bibinfo {title} {Efficiency of neural quantum states in light of the quantum geometric tensor},}\ }\href {\doibase 10.1038/s42005-025-02005-4} {\bibfield  {journal} {\bibinfo  {journal} {Communications Physics}\ }\textbf {\bibinfo {volume} {8}},\ \bibinfo {pages} {92} (\bibinfo {year} {2025})}\BibitemShut {NoStop}%
\bibitem [{\citenamefont {Donatella}\ \emph {et~al.}(2023)\citenamefont {Donatella}, \citenamefont {Denis}, \citenamefont {Le~Boit\'e},\ and\ \citenamefont {Ciuti}}]{donatella23}%
  \BibitemOpen
  \bibfield  {author} {\bibinfo {author} {\bibfnamefont {Kaelan}\ \bibnamefont {Donatella}}, \bibinfo {author} {\bibfnamefont {Zakari}\ \bibnamefont {Denis}}, \bibinfo {author} {\bibfnamefont {Alexandre}\ \bibnamefont {Le~Boit\'e}}, \ and\ \bibinfo {author} {\bibfnamefont {Cristiano}\ \bibnamefont {Ciuti}},\ }\bibfield  {title} {\enquote {\bibinfo {title} {Dynamics with autoregressive neural quantum states: Application to critical quench dynamics},}\ }\href {\doibase 10.1103/PhysRevA.108.022210} {\bibfield  {journal} {\bibinfo  {journal} {Phys. Rev. A}\ }\textbf {\bibinfo {volume} {108}},\ \bibinfo {pages} {022210} (\bibinfo {year} {2023})}\BibitemShut {NoStop}%
\bibitem [{\citenamefont {Lange}\ \emph {et~al.}(2024{\natexlab{b}})\citenamefont {Lange}, \citenamefont {D{\"o}schl}, \citenamefont {Carrasquilla},\ and\ \citenamefont {Bohrdt}}]{Lange2024}%
  \BibitemOpen
  \bibfield  {author} {\bibinfo {author} {\bibfnamefont {Hannah}\ \bibnamefont {Lange}}, \bibinfo {author} {\bibfnamefont {Fabian}\ \bibnamefont {D{\"o}schl}}, \bibinfo {author} {\bibfnamefont {Juan}\ \bibnamefont {Carrasquilla}}, \ and\ \bibinfo {author} {\bibfnamefont {Annabelle}\ \bibnamefont {Bohrdt}},\ }\bibfield  {title} {\enquote {\bibinfo {title} {Neural network approach to quasiparticle dispersions in doped antiferromagnets},}\ }\href {\doibase 10.1038/s42005-024-01678-7} {\bibfield  {journal} {\bibinfo  {journal} {Communications Physics}\ }\textbf {\bibinfo {volume} {7}},\ \bibinfo {pages} {187} (\bibinfo {year} {2024}{\natexlab{b}})}\BibitemShut {NoStop}%
\bibitem [{\citenamefont {Duque}\ \emph {et~al.}(2026)\citenamefont {Duque}, \citenamefont {Heredia}, \citenamefont {Hernandes}, \citenamefont {Greplová}, \citenamefont {Spriggs}, \citenamefont {Courville},\ and\ \citenamefont {Dawid}}]{duque2026timerevisitingneuralquantum}%
  \BibitemOpen
  \bibfield  {author} {\bibinfo {author} {\bibfnamefont {Juan~Agustín}\ \bibnamefont {Duque}}, \bibinfo {author} {\bibfnamefont {Sergio~García}\ \bibnamefont {Heredia}}, \bibinfo {author} {\bibfnamefont {Vinicius}\ \bibnamefont {Hernandes}}, \bibinfo {author} {\bibfnamefont {Eliška}\ \bibnamefont {Greplová}}, \bibinfo {author} {\bibfnamefont {Thomas}\ \bibnamefont {Spriggs}}, \bibinfo {author} {\bibfnamefont {Aaron}\ \bibnamefont {Courville}}, \ and\ \bibinfo {author} {\bibfnamefont {Anna}\ \bibnamefont {Dawid}},\ }\href {https://arxiv.org/abs/2607.02292} {\enquote {\bibinfo {title} {One more time: Revisiting neural quantum states from a reinforcement learning perspective},}\ } (\bibinfo {year} {2026}),\ \Eprint {http://arxiv.org/abs/2607.02292} {arXiv:2607.02292 [cs.LG]} \BibitemShut {NoStop}%
\bibitem [{\citenamefont {Kingma}\ and\ \citenamefont {Ba}(2017)}]{AdamOriginal}%
  \BibitemOpen
  \bibfield  {author} {\bibinfo {author} {\bibfnamefont {Diederik~P.}\ \bibnamefont {Kingma}}\ and\ \bibinfo {author} {\bibfnamefont {Jimmy}\ \bibnamefont {Ba}},\ }\href {https://arxiv.org/abs/1412.6980} {\enquote {\bibinfo {title} {Adam: A method for stochastic optimization},}\ } (\bibinfo {year} {2017}),\ \Eprint {http://arxiv.org/abs/1412.6980} {arXiv:1412.6980 [cs.LG]} \BibitemShut {NoStop}%
\bibitem [{\citenamefont {Liu}\ \emph {et~al.}(2025)\citenamefont {Liu}, \citenamefont {Tang},\ and\ \citenamefont {Zhang}}]{Liu_2025}%
  \BibitemOpen
  \bibfield  {author} {\bibinfo {author} {\bibfnamefont {Jing}\ \bibnamefont {Liu}}, \bibinfo {author} {\bibfnamefont {Ying}\ \bibnamefont {Tang}}, \ and\ \bibinfo {author} {\bibfnamefont {Pan}\ \bibnamefont {Zhang}},\ }\bibfield  {title} {\enquote {\bibinfo {title} {Efficient optimization of variational autoregressive networks with natural gradient},}\ }\href {\doibase 10.1103/physreve.111.025304} {\bibfield  {journal} {\bibinfo  {journal} {Physical Review E}\ }\textbf {\bibinfo {volume} {111}} (\bibinfo {year} {2025}),\ 10.1103/physreve.111.025304}\BibitemShut {NoStop}%
\bibitem [{\citenamefont {Malyshev}(2024)}]{autoregressiveChem}%
  \BibitemOpen
  \bibfield  {author} {\bibinfo {author} {\bibfnamefont {Aleksei}\ \bibnamefont {Malyshev}},\ }\emph {\bibinfo {title} {Autoregressive neural quantum states for ab initio quantum chemistry}},\ \href {\doibase 10.5287/ORA-GAPQOBDRE} {Ph.D. thesis},\ \bibinfo  {school} {University of Oxford} (\bibinfo {year} {2024})\BibitemShut {NoStop}%
\bibitem [{\citenamefont {Sharir}\ \emph {et~al.}(2020)\citenamefont {Sharir}, \citenamefont {Levine}, \citenamefont {Wies}, \citenamefont {Carleo},\ and\ \citenamefont {Shashua}}]{Sharir_2020}%
  \BibitemOpen
  \bibfield  {author} {\bibinfo {author} {\bibfnamefont {Or}~\bibnamefont {Sharir}}, \bibinfo {author} {\bibfnamefont {Yoav}\ \bibnamefont {Levine}}, \bibinfo {author} {\bibfnamefont {Noam}\ \bibnamefont {Wies}}, \bibinfo {author} {\bibfnamefont {Giuseppe}\ \bibnamefont {Carleo}}, \ and\ \bibinfo {author} {\bibfnamefont {Amnon}\ \bibnamefont {Shashua}},\ }\bibfield  {title} {\enquote {\bibinfo {title} {Deep autoregressive models for the efficient variational simulation of many-body quantum systems},}\ }\href {\doibase 10.1103/physrevlett.124.020503} {\bibfield  {journal} {\bibinfo  {journal} {Physical Review Letters}\ }\textbf {\bibinfo {volume} {124}} (\bibinfo {year} {2020}),\ 10.1103/physrevlett.124.020503}\BibitemShut {NoStop}%
\bibitem [{\citenamefont {Lipton}\ \emph {et~al.}(2015)\citenamefont {Lipton}, \citenamefont {Berkowitz},\ and\ \citenamefont {Elkan}}]{Vanilla}%
  \BibitemOpen
  \bibfield  {author} {\bibinfo {author} {\bibfnamefont {Zachary~C.}\ \bibnamefont {Lipton}}, \bibinfo {author} {\bibfnamefont {John}\ \bibnamefont {Berkowitz}}, \ and\ \bibinfo {author} {\bibfnamefont {Charles}\ \bibnamefont {Elkan}},\ }\href {https://arxiv.org/abs/1506.00019} {\enquote {\bibinfo {title} {A critical review of recurrent neural networks for sequence learning},}\ } (\bibinfo {year} {2015}),\ \Eprint {http://arxiv.org/abs/1506.00019} {arXiv:1506.00019 [cs.LG]} \BibitemShut {NoStop}%
\bibitem [{\citenamefont {Ayub}\ \emph {et~al.}(2026)\citenamefont {Ayub}, \citenamefont {Aboussalah},\ and\ \citenamefont {Hibat-Allah}}]{DilatedRNN}%
  \BibitemOpen
  \bibfield  {author} {\bibinfo {author} {\bibfnamefont {Asif~Bin}\ \bibnamefont {Ayub}}, \bibinfo {author} {\bibfnamefont {Amine~Mohamed}\ \bibnamefont {Aboussalah}}, \ and\ \bibinfo {author} {\bibfnamefont {Mohamed}\ \bibnamefont {Hibat-Allah}},\ }\href {\doibase 10.48550/arXiv.2604.08661} {\enquote {\bibinfo {title} {Geometry-induced long-range correlations in recurrent neural network quantum states},}\ } (\bibinfo {year} {2026}),\ \Eprint {http://arxiv.org/abs/2604.08661} {arXiv:2604.08661 [quant-ph]} \BibitemShut {NoStop}%
\bibitem [{\citenamefont {Polyak}(1964)}]{POLYAK19641}%
  \BibitemOpen
  \bibfield  {author} {\bibinfo {author} {\bibfnamefont {B.T.}\ \bibnamefont {Polyak}},\ }\bibfield  {title} {\enquote {\bibinfo {title} {Some methods of speeding up the convergence of iteration methods},}\ }\href {\doibase https://doi.org/10.1016/0041-5553(64)90137-5} {\bibfield  {journal} {\bibinfo  {journal} {USSR Computational Mathematics and Mathematical Physics}\ }\textbf {\bibinfo {volume} {4}},\ \bibinfo {pages} {1--17} (\bibinfo {year} {1964})}\BibitemShut {NoStop}%
\bibitem [{\citenamefont {Amari}(2016)}]{amari_information_2016}%
  \BibitemOpen
  \bibfield  {author} {\bibinfo {author} {\bibfnamefont {Shun-ichi}\ \bibnamefont {Amari}},\ }\href {\doibase 10.1007/978-4-431-55978-8} {\emph {\bibinfo {title} {Information {Geometry} and {Its} {Applications}}}},\ \bibinfo {series} {Applied {Mathematical} {Sciences}}, Vol.\ \bibinfo {volume} {194}\ (\bibinfo  {publisher} {Springer Japan},\ \bibinfo {address} {Tokyo},\ \bibinfo {year} {2016})\BibitemShut {NoStop}%
\bibitem [{\citenamefont {Berner}\ \emph {et~al.}(2023)\citenamefont {Berner}, \citenamefont {Elbrächter},\ and\ \citenamefont {Grohs}}]{berner2023degenerateparametrizationneuralnetworks}%
  \BibitemOpen
  \bibfield  {author} {\bibinfo {author} {\bibfnamefont {Julius}\ \bibnamefont {Berner}}, \bibinfo {author} {\bibfnamefont {Dennis}\ \bibnamefont {Elbrächter}}, \ and\ \bibinfo {author} {\bibfnamefont {Philipp}\ \bibnamefont {Grohs}},\ }\href {https://arxiv.org/abs/1905.09803} {\enquote {\bibinfo {title} {How degenerate is the parametrization of neural networks with the relu activation function?}}\ } (\bibinfo {year} {2023}),\ \Eprint {http://arxiv.org/abs/1905.09803} {arXiv:1905.09803 [cs.LG]} \BibitemShut {NoStop}%
\bibitem [{\citenamefont {Nomura}(2023)}]{Nomura_2023}%
  \BibitemOpen
  \bibfield  {author} {\bibinfo {author} {\bibfnamefont {Yusuke}\ \bibnamefont {Nomura}},\ }\bibfield  {title} {\enquote {\bibinfo {title} {Boltzmann machines and quantum many-body problems},}\ }\href {\doibase 10.1088/1361-648x/ad0916} {\bibfield  {journal} {\bibinfo  {journal} {Journal of Physics: Condensed Matter}\ }\textbf {\bibinfo {volume} {36}},\ \bibinfo {pages} {073001} (\bibinfo {year} {2023})}\BibitemShut {NoStop}%
\bibitem [{\citenamefont {Ed-dib}\ \emph {et~al.}(2025)\citenamefont {Ed-dib}, \citenamefont {Datbayev},\ and\ \citenamefont {Aboussalah}}]{GeLoRA}%
  \BibitemOpen
  \bibfield  {author} {\bibinfo {author} {\bibfnamefont {Abdessalam}\ \bibnamefont {Ed-dib}}, \bibinfo {author} {\bibfnamefont {Zhanibek}\ \bibnamefont {Datbayev}}, \ and\ \bibinfo {author} {\bibfnamefont {Amine~Mohamed}\ \bibnamefont {Aboussalah}},\ }\bibfield  {title} {\enquote {\bibinfo {title} {{GeLoRA}: Geometric adaptive ranks for efficient {LoRA} fine-tuning},}\ }in\ \href {\doibase 10.18653/v1/2025.findings-emnlp.1372} {\emph {\bibinfo {booktitle} {Findings of the Association for Computational Linguistics: EMNLP 2025}}}\ (\bibinfo  {publisher} {Association for Computational Linguistics},\ \bibinfo {address} {Suzhou, China},\ \bibinfo {year} {2025})\ pp.\ \bibinfo {pages} {25174--25196}\BibitemShut {NoStop}%
\bibitem [{\citenamefont {Aboussalah}\ and\ \citenamefont {Ed-dib}(2025{\natexlab{a}})}]{GNNTopology}%
  \BibitemOpen
  \bibfield  {author} {\bibinfo {author} {\bibfnamefont {Amine~Mohamed}\ \bibnamefont {Aboussalah}}\ and\ \bibinfo {author} {\bibfnamefont {Abdessalam}\ \bibnamefont {Ed-dib}},\ }\href {\doibase 10.48550/arXiv.2502.17739} {\enquote {\bibinfo {title} {Are {GNNs} doomed by the topology of their input graph?}}\ } (\bibinfo {year} {2025}{\natexlab{a}}),\ \Eprint {http://arxiv.org/abs/2502.17739} {arXiv:2502.17739 [stat.ML]} \BibitemShut {NoStop}%
\bibitem [{\citenamefont {Sachdev}(2011)}]{Sachdev_2011}%
  \BibitemOpen
  \bibfield  {author} {\bibinfo {author} {\bibfnamefont {Subir}\ \bibnamefont {Sachdev}},\ }\href@noop {} {\emph {\bibinfo {title} {Quantum Phase Transitions}}},\ \bibinfo {edition} {2nd}\ ed.\ (\bibinfo  {publisher} {Cambridge University Press},\ \bibinfo {year} {2011})\BibitemShut {NoStop}%
\bibitem [{\citenamefont {Bravyi}\ \emph {et~al.}(2007)\citenamefont {Bravyi}, \citenamefont {DiVincenzo}, \citenamefont {Oliveira},\ and\ \citenamefont {Terhal}}]{bravyi2007complexitystoquasticlocalhamiltonian}%
  \BibitemOpen
  \bibfield  {author} {\bibinfo {author} {\bibfnamefont {Sergey}\ \bibnamefont {Bravyi}}, \bibinfo {author} {\bibfnamefont {David~P.}\ \bibnamefont {DiVincenzo}}, \bibinfo {author} {\bibfnamefont {Roberto~I.}\ \bibnamefont {Oliveira}}, \ and\ \bibinfo {author} {\bibfnamefont {Barbara~M.}\ \bibnamefont {Terhal}},\ }\href {https://arxiv.org/abs/quant-ph/0606140} {\enquote {\bibinfo {title} {The complexity of stoquastic local hamiltonian problems},}\ } (\bibinfo {year} {2007}),\ \Eprint {http://arxiv.org/abs/quant-ph/0606140} {arXiv:quant-ph/0606140 [quant-ph]} \BibitemShut {NoStop}%
\bibitem [{\citenamefont {Pfeuty}(1970)}]{ising}%
  \BibitemOpen
  \bibfield  {author} {\bibinfo {author} {\bibfnamefont {Pierre}\ \bibnamefont {Pfeuty}},\ }\bibfield  {title} {\enquote {\bibinfo {title} {The one-dimensional ising model with a transverse field},}\ }\href {\doibase https://doi.org/10.1016/0003-4916(70)90270-8} {\bibfield  {journal} {\bibinfo  {journal} {Annals of Physics}\ }\textbf {\bibinfo {volume} {57}},\ \bibinfo {pages} {79--90} (\bibinfo {year} {1970})}\BibitemShut {NoStop}%
\bibitem [{\citenamefont {Yang}\ \emph {et~al.}(2024)\citenamefont {Yang}, \citenamefont {Soleimanifar}, \citenamefont {Bergamaschi},\ and\ \citenamefont {Preskill}}]{cluster-preskill}%
  \BibitemOpen
  \bibfield  {author} {\bibinfo {author} {\bibfnamefont {Tai-Hsuan}\ \bibnamefont {Yang}}, \bibinfo {author} {\bibfnamefont {Mehdi}\ \bibnamefont {Soleimanifar}}, \bibinfo {author} {\bibfnamefont {Thiago}\ \bibnamefont {Bergamaschi}}, \ and\ \bibinfo {author} {\bibfnamefont {John}\ \bibnamefont {Preskill}},\ }\href {https://arxiv.org/abs/2410.23152} {\enquote {\bibinfo {title} {When can classical neural networks represent quantum states?}}\ } (\bibinfo {year} {2024}),\ \Eprint {http://arxiv.org/abs/2410.23152} {arXiv:2410.23152 [quant-ph]} \BibitemShut {NoStop}%
\bibitem [{\citenamefont {McNaughton}\ and\ \citenamefont {Hibat-Allah}(2025)}]{adaptive}%
  \BibitemOpen
  \bibfield  {author} {\bibinfo {author} {\bibfnamefont {Jake}\ \bibnamefont {McNaughton}}\ and\ \bibinfo {author} {\bibfnamefont {Mohamed}\ \bibnamefont {Hibat-Allah}},\ }\href {https://arxiv.org/abs/2507.18700} {\enquote {\bibinfo {title} {Adaptive neural quantum states: A recurrent neural network perspective},}\ } (\bibinfo {year} {2025}),\ \Eprint {http://arxiv.org/abs/2507.18700} {arXiv:2507.18700 [cond-mat.dis-nn]} \BibitemShut {NoStop}%
\bibitem [{\citenamefont {Marshall}(1955)}]{Marshall1955}%
  \BibitemOpen
  \bibfield  {author} {\bibinfo {author} {\bibfnamefont {W}~\bibnamefont {Marshall}},\ }\bibfield  {title} {\enquote {\bibinfo {title} {{Antiferromagnetism}},}\ }\href {\doibase 10.1098/rspa.1955.0200} {\bibfield  {journal} {\bibinfo  {journal} {Proceedings of the Royal Society of London. Series A. Mathematical and Physical Sciences}\ }\textbf {\bibinfo {volume} {232}},\ \bibinfo {pages} {48--68} (\bibinfo {year} {1955})}\BibitemShut {NoStop}%
\bibitem [{\citenamefont {Shamim}\ \emph {et~al.}(2026)\citenamefont {Shamim}, \citenamefont {Raj}, \citenamefont {Hibat-Allah},\ and\ \citenamefont {Araujo}}]{shamim2026graphtheoreticanalysisphaseoptimization}%
  \BibitemOpen
  \bibfield  {author} {\bibinfo {author} {\bibfnamefont {Mahmud~Ashraf}\ \bibnamefont {Shamim}}, \bibinfo {author} {\bibfnamefont {Md~Moshiur~Rahman}\ \bibnamefont {Raj}}, \bibinfo {author} {\bibfnamefont {Mohamed}\ \bibnamefont {Hibat-Allah}}, \ and\ \bibinfo {author} {\bibfnamefont {Paulo~T}\ \bibnamefont {Araujo}},\ }\href {https://arxiv.org/abs/2602.04943} {\enquote {\bibinfo {title} {Graph-theoretic analysis of phase optimization complexity in variational wave functions for heisenberg antiferromagnets},}\ } (\bibinfo {year} {2026}),\ \Eprint {http://arxiv.org/abs/2602.04943} {arXiv:2602.04943 [cond-mat.str-el]} \BibitemShut {NoStop}%
\bibitem [{\citenamefont {Wu}\ \emph {et~al.}(2024)\citenamefont {Wu}, \citenamefont {Rossi}, \citenamefont {Vicentini}, \citenamefont {Astrakhantsev}, \citenamefont {Becca}, \citenamefont {Cao}, \citenamefont {Carrasquilla}, \citenamefont {Ferrari}, \citenamefont {Georges}, \citenamefont {Hibat-Allah}, \citenamefont {Imada}, \citenamefont {Läuchli}, \citenamefont {Mazzola}, \citenamefont {Mezzacapo}, \citenamefont {Millis}, \citenamefont {Moreno}, \citenamefont {Neupert}, \citenamefont {Nomura}, \citenamefont {Nys}, \citenamefont {Parcollet}, \citenamefont {Pohle}, \citenamefont {Romero}, \citenamefont {Schmid}, \citenamefont {Silvester}, \citenamefont {Sorella}, \citenamefont {Tocchio}, \citenamefont {Wang}, \citenamefont {White}, \citenamefont {Wietek}, \citenamefont {Yang}, \citenamefont {Yang}, \citenamefont {Zhang},\ and\ \citenamefont {Carleo}}]{vscore}%
  \BibitemOpen
  \bibfield  {author} {\bibinfo {author} {\bibfnamefont {Dian}\ \bibnamefont {Wu}}, \bibinfo {author} {\bibfnamefont {Riccardo}\ \bibnamefont {Rossi}}, \bibinfo {author} {\bibfnamefont {Filippo}\ \bibnamefont {Vicentini}}, \bibinfo {author} {\bibfnamefont {Nikita}\ \bibnamefont {Astrakhantsev}}, \bibinfo {author} {\bibfnamefont {Federico}\ \bibnamefont {Becca}}, \bibinfo {author} {\bibfnamefont {Xiaodong}\ \bibnamefont {Cao}}, \bibinfo {author} {\bibfnamefont {Juan}\ \bibnamefont {Carrasquilla}}, \bibinfo {author} {\bibfnamefont {Francesco}\ \bibnamefont {Ferrari}}, \bibinfo {author} {\bibfnamefont {Antoine}\ \bibnamefont {Georges}}, \bibinfo {author} {\bibfnamefont {Mohamed}\ \bibnamefont {Hibat-Allah}}, \bibinfo {author} {\bibfnamefont {Masatoshi}\ \bibnamefont {Imada}}, \bibinfo {author} {\bibfnamefont {Andreas~M.}\ \bibnamefont {Läuchli}}, \bibinfo {author} {\bibfnamefont {Guglielmo}\ \bibnamefont {Mazzola}}, \bibinfo {author} {\bibfnamefont {Antonio}\ \bibnamefont {Mezzacapo}}, \bibinfo {author}
  {\bibfnamefont {Andrew}\ \bibnamefont {Millis}}, \bibinfo {author} {\bibfnamefont {Javier~Robledo}\ \bibnamefont {Moreno}}, \bibinfo {author} {\bibfnamefont {Titus}\ \bibnamefont {Neupert}}, \bibinfo {author} {\bibfnamefont {Yusuke}\ \bibnamefont {Nomura}}, \bibinfo {author} {\bibfnamefont {Jannes}\ \bibnamefont {Nys}}, \bibinfo {author} {\bibfnamefont {Olivier}\ \bibnamefont {Parcollet}}, \bibinfo {author} {\bibfnamefont {Rico}\ \bibnamefont {Pohle}}, \bibinfo {author} {\bibfnamefont {Imelda}\ \bibnamefont {Romero}}, \bibinfo {author} {\bibfnamefont {Michael}\ \bibnamefont {Schmid}}, \bibinfo {author} {\bibfnamefont {J.~Maxwell}\ \bibnamefont {Silvester}}, \bibinfo {author} {\bibfnamefont {Sandro}\ \bibnamefont {Sorella}}, \bibinfo {author} {\bibfnamefont {Luca~F.}\ \bibnamefont {Tocchio}}, \bibinfo {author} {\bibfnamefont {Lei}\ \bibnamefont {Wang}}, \bibinfo {author} {\bibfnamefont {Steven~R.}\ \bibnamefont {White}}, \bibinfo {author} {\bibfnamefont {Alexander}\ \bibnamefont {Wietek}}, \bibinfo {author}
  {\bibfnamefont {Qi}~\bibnamefont {Yang}}, \bibinfo {author} {\bibfnamefont {Yiqi}\ \bibnamefont {Yang}}, \bibinfo {author} {\bibfnamefont {Shiwei}\ \bibnamefont {Zhang}}, \ and\ \bibinfo {author} {\bibfnamefont {Giuseppe}\ \bibnamefont {Carleo}},\ }\bibfield  {title} {\enquote {\bibinfo {title} {Variational benchmarks for quantum many-body problems},}\ }\href {\doibase 10.1126/science.adg9774} {\bibfield  {journal} {\bibinfo  {journal} {Science}\ }\textbf {\bibinfo {volume} {386}},\ \bibinfo {pages} {296--301} (\bibinfo {year} {2024})}\BibitemShut {NoStop}%
\bibitem [{\citenamefont {Martens}\ and\ \citenamefont {Sutskever}(2011)}]{hf-rnn}%
  \BibitemOpen
  \bibfield  {author} {\bibinfo {author} {\bibfnamefont {James}\ \bibnamefont {Martens}}\ and\ \bibinfo {author} {\bibfnamefont {Ilya}\ \bibnamefont {Sutskever}},\ }\bibfield  {title} {\enquote {\bibinfo {title} {Learning recurrent neural networks with hessian-free optimization},}\ }in\ \href@noop {} {\emph {\bibinfo {booktitle} {Proceedings of the 28th international conference on machine learning (ICML-11)}}}\ (\bibinfo {year} {2011})\ pp.\ \bibinfo {pages} {1033--1040}\BibitemShut {NoStop}%
\bibitem [{\citenamefont {Graves}\ \emph {et~al.}(2007)\citenamefont {Graves}, \citenamefont {Fern{\'a}ndez},\ and\ \citenamefont {Schmidhuber}}]{mdrnn}%
  \BibitemOpen
  \bibfield  {author} {\bibinfo {author} {\bibfnamefont {Alex}\ \bibnamefont {Graves}}, \bibinfo {author} {\bibfnamefont {Santiago}\ \bibnamefont {Fern{\'a}ndez}}, \ and\ \bibinfo {author} {\bibfnamefont {J{\"u}rgen}\ \bibnamefont {Schmidhuber}},\ }\bibfield  {title} {\enquote {\bibinfo {title} {Multi-dimensional recurrent neural networks},}\ }in\ \href@noop {} {\emph {\bibinfo {booktitle} {International conference on artificial neural networks}}}\ (\bibinfo {organization} {Springer},\ \bibinfo {year} {2007})\ pp.\ \bibinfo {pages} {549--558}\BibitemShut {NoStop}%
\bibitem [{\citenamefont {Kalchbrenner}\ \emph {et~al.}(2015)\citenamefont {Kalchbrenner}, \citenamefont {Danihelka},\ and\ \citenamefont {Graves}}]{GridLSTM}%
  \BibitemOpen
  \bibfield  {author} {\bibinfo {author} {\bibfnamefont {Nal}\ \bibnamefont {Kalchbrenner}}, \bibinfo {author} {\bibfnamefont {Ivo}\ \bibnamefont {Danihelka}}, \ and\ \bibinfo {author} {\bibfnamefont {Alex}\ \bibnamefont {Graves}},\ }\bibfield  {title} {\enquote {\bibinfo {title} {Grid long short-term memory},}\ }\href@noop {} {\bibfield  {journal} {\bibinfo  {journal} {arXiv preprint arXiv:1507.01526}\ } (\bibinfo {year} {2015})}\BibitemShut {NoStop}%
\bibitem [{\citenamefont {Leifert}\ \emph {et~al.}(2016)\citenamefont {Leifert}, \citenamefont {Strau{\ss}}, \citenamefont {Gr{\"u}ning}, \citenamefont {Wustlich},\ and\ \citenamefont {Labahn}}]{leakyMDRNN}%
  \BibitemOpen
  \bibfield  {author} {\bibinfo {author} {\bibfnamefont {Gundram}\ \bibnamefont {Leifert}}, \bibinfo {author} {\bibfnamefont {Tobias}\ \bibnamefont {Strau{\ss}}}, \bibinfo {author} {\bibfnamefont {Tobias}\ \bibnamefont {Gr{\"u}ning}}, \bibinfo {author} {\bibfnamefont {Welf}\ \bibnamefont {Wustlich}}, \ and\ \bibinfo {author} {\bibfnamefont {Roger}\ \bibnamefont {Labahn}},\ }\bibfield  {title} {\enquote {\bibinfo {title} {Cells in multidimensional recurrent neural networks},}\ }\href@noop {} {\bibfield  {journal} {\bibinfo  {journal} {Journal of Machine Learning Research}\ }\textbf {\bibinfo {volume} {17}},\ \bibinfo {pages} {1--37} (\bibinfo {year} {2016})}\BibitemShut {NoStop}%
\bibitem [{\citenamefont {Li}\ \emph {et~al.}(2018)\citenamefont {Li}, \citenamefont {Xu}, \citenamefont {Taylor}, \citenamefont {Studer},\ and\ \citenamefont {Goldstein}}]{VizualizingRNNs}%
  \BibitemOpen
  \bibfield  {author} {\bibinfo {author} {\bibfnamefont {Hao}\ \bibnamefont {Li}}, \bibinfo {author} {\bibfnamefont {Zheng}\ \bibnamefont {Xu}}, \bibinfo {author} {\bibfnamefont {Gavin}\ \bibnamefont {Taylor}}, \bibinfo {author} {\bibfnamefont {Christoph}\ \bibnamefont {Studer}}, \ and\ \bibinfo {author} {\bibfnamefont {Tom}\ \bibnamefont {Goldstein}},\ }\bibfield  {title} {\enquote {\bibinfo {title} {Visualizing the loss landscape of neural nets},}\ }\href@noop {} {\bibfield  {journal} {\bibinfo  {journal} {Advances in neural information processing systems}\ }\textbf {\bibinfo {volume} {31}} (\bibinfo {year} {2018})}\BibitemShut {NoStop}%
\bibitem [{\citenamefont {B{\"o}ttcher}\ and\ \citenamefont {Wheeler}(2024)}]{bottcher2024visualizing}%
  \BibitemOpen
  \bibfield  {author} {\bibinfo {author} {\bibfnamefont {Lucas}\ \bibnamefont {B{\"o}ttcher}}\ and\ \bibinfo {author} {\bibfnamefont {Gregory}\ \bibnamefont {Wheeler}},\ }\bibfield  {title} {\enquote {\bibinfo {title} {Visualizing high-dimensional loss landscapes with hessian directions},}\ }\href@noop {} {\bibfield  {journal} {\bibinfo  {journal} {Journal of Statistical Mechanics: Theory and Experiment}\ }\textbf {\bibinfo {volume} {2024}},\ \bibinfo {pages} {023401} (\bibinfo {year} {2024})}\BibitemShut {NoStop}%
\bibitem [{\citenamefont {Chen}(2011)}]{chebyshev-inequality}%
  \BibitemOpen
  \bibfield  {author} {\bibinfo {author} {\bibfnamefont {Xinjia}\ \bibnamefont {Chen}},\ }\href {https://arxiv.org/abs/0707.0805} {\enquote {\bibinfo {title} {A new generalization of chebyshev inequality for random vectors},}\ } (\bibinfo {year} {2011}),\ \Eprint {http://arxiv.org/abs/0707.0805} {arXiv:0707.0805 [math.ST]} \BibitemShut {NoStop}%
\bibitem [{\citenamefont {Soen}\ and\ \citenamefont {Sun}(2021)}]{fisher-var}%
  \BibitemOpen
  \bibfield  {author} {\bibinfo {author} {\bibfnamefont {Alexander}\ \bibnamefont {Soen}}\ and\ \bibinfo {author} {\bibfnamefont {Ke}~\bibnamefont {Sun}},\ }\bibfield  {title} {\enquote {\bibinfo {title} {On the variance of the fisher information for deep learning},}\ }in\ \href {https://proceedings.neurips.cc/paper_files/paper/2021/file/2d290e496d16c9dcaa9b4ded5cac10cc-Paper.pdf} {\emph {\bibinfo {booktitle} {Advances in Neural Information Processing Systems}}},\ Vol.~\bibinfo {volume} {34},\ \bibinfo {editor} {edited by\ \bibinfo {editor} {\bibfnamefont {M.}~\bibnamefont {Ranzato}}, \bibinfo {editor} {\bibfnamefont {A.}~\bibnamefont {Beygelzimer}}, \bibinfo {editor} {\bibfnamefont {Y.}~\bibnamefont {Dauphin}}, \bibinfo {editor} {\bibfnamefont {P.S.}\ \bibnamefont {Liang}}, \ and\ \bibinfo {editor} {\bibfnamefont {J.~Wortman}\ \bibnamefont {Vaughan}}}\ (\bibinfo  {publisher} {Curran Associates, Inc.},\ \bibinfo {year} {2021})\ pp.\ \bibinfo {pages} {5708--5719}\BibitemShut {NoStop}%
\bibitem [{\citenamefont {Aboussalah}\ and\ \citenamefont {Ed-dib}(2025{\natexlab{b}})}]{GeoHNN}%
  \BibitemOpen
  \bibfield  {author} {\bibinfo {author} {\bibfnamefont {Amine~Mohamed}\ \bibnamefont {Aboussalah}}\ and\ \bibinfo {author} {\bibfnamefont {Abdessalam}\ \bibnamefont {Ed-dib}},\ }\href {\doibase 10.48550/arXiv.2507.15678} {\enquote {\bibinfo {title} {{GeoHNNs}: Geometric hamiltonian neural networks},}\ } (\bibinfo {year} {2025}{\natexlab{b}}),\ \Eprint {http://arxiv.org/abs/2507.15678} {arXiv:2507.15678 [cs.LG]} \BibitemShut {NoStop}%
\bibitem [{\citenamefont {Abbahaddou}\ and\ \citenamefont {Aboussalah}(2025)}]{DMDGEN}%
  \BibitemOpen
  \bibfield  {author} {\bibinfo {author} {\bibfnamefont {Yassine}\ \bibnamefont {Abbahaddou}}\ and\ \bibinfo {author} {\bibfnamefont {Amine~Mohamed}\ \bibnamefont {Aboussalah}},\ }\bibfield  {title} {\enquote {\bibinfo {title} {A geometry-aware metric for mode collapse in time series generative models},}\ }in\ \href {\doibase 10.52202/085713-5065} {\emph {\bibinfo {booktitle} {Advances in Neural Information Processing Systems 38}}}\ (\bibinfo {year} {2025})\ pp.\ \bibinfo {pages} {168149--168179}\BibitemShut {NoStop}%
\bibitem [{\citenamefont {Abbahaddou}\ \emph {et~al.}(2025)\citenamefont {Abbahaddou}, \citenamefont {Malliaros}, \citenamefont {Lutzeyer}, \citenamefont {Aboussalah},\ and\ \citenamefont {Vazirgiannis}}]{GRATIN}%
  \BibitemOpen
  \bibfield  {author} {\bibinfo {author} {\bibfnamefont {Yassine}\ \bibnamefont {Abbahaddou}}, \bibinfo {author} {\bibfnamefont {Fragkiskos~D.}\ \bibnamefont {Malliaros}}, \bibinfo {author} {\bibfnamefont {Johannes~F.}\ \bibnamefont {Lutzeyer}}, \bibinfo {author} {\bibfnamefont {Amine~M.}\ \bibnamefont {Aboussalah}}, \ and\ \bibinfo {author} {\bibfnamefont {Michalis}\ \bibnamefont {Vazirgiannis}},\ }\bibfield  {title} {\enquote {\bibinfo {title} {Graph neural network generalization with gaussian mixture model based augmentation},}\ }in\ \href {https://proceedings.mlr.press/v267/abbahaddou25a.html} {\emph {\bibinfo {booktitle} {Proceedings of the 42nd International Conference on Machine Learning}}},\ \bibinfo {series} {Proceedings of Machine Learning Research}, Vol.\ \bibinfo {volume} {267}\ (\bibinfo  {publisher} {PMLR},\ \bibinfo {year} {2025})\ pp.\ \bibinfo {pages} {102--125}\BibitemShut {NoStop}%
\bibitem [{\citenamefont {Cho}\ \emph {et~al.}(2014)\citenamefont {Cho}, \citenamefont {van Merri{\"e}nboer}, \citenamefont {Gulcehre}, \citenamefont {Bahdanau}, \citenamefont {Bougares}, \citenamefont {Schwenk},\ and\ \citenamefont {Bengio}}]{GRU}%
  \BibitemOpen
  \bibfield  {author} {\bibinfo {author} {\bibfnamefont {Kyunghyun}\ \bibnamefont {Cho}}, \bibinfo {author} {\bibfnamefont {Bart}\ \bibnamefont {van Merri{\"e}nboer}}, \bibinfo {author} {\bibfnamefont {Caglar}\ \bibnamefont {Gulcehre}}, \bibinfo {author} {\bibfnamefont {Dzmitry}\ \bibnamefont {Bahdanau}}, \bibinfo {author} {\bibfnamefont {Fethi}\ \bibnamefont {Bougares}}, \bibinfo {author} {\bibfnamefont {Holger}\ \bibnamefont {Schwenk}}, \ and\ \bibinfo {author} {\bibfnamefont {Yoshua}\ \bibnamefont {Bengio}},\ }\bibfield  {title} {\enquote {\bibinfo {title} {Learning phrase representations using {RNN} encoder{--}decoder for statistical machine translation},}\ }in\ \href {\doibase 10.3115/v1/D14-1179} {\emph {\bibinfo {booktitle} {Proceedings of the 2014 Conference on Empirical Methods in Natural Language Processing ({EMNLP})}}},\ \bibinfo {editor} {edited by\ \bibinfo {editor} {\bibfnamefont {Alessandro}\ \bibnamefont {Moschitti}}, \bibinfo {editor} {\bibfnamefont {Bo}~\bibnamefont {Pang}}, \ and\ \bibinfo
  {editor} {\bibfnamefont {Walter}\ \bibnamefont {Daelemans}}}\ (\bibinfo  {publisher} {Association for Computational Linguistics},\ \bibinfo {address} {Doha, Qatar},\ \bibinfo {year} {2014})\ pp.\ \bibinfo {pages} {1724--1734}\BibitemShut {NoStop}%
\bibitem [{\citenamefont {Goldshlager}\ \emph {et~al.}(2024)\citenamefont {Goldshlager}, \citenamefont {Abrahamsen},\ and\ \citenamefont {Lin}}]{SPRING}%
  \BibitemOpen
  \bibfield  {author} {\bibinfo {author} {\bibfnamefont {Gil}\ \bibnamefont {Goldshlager}}, \bibinfo {author} {\bibfnamefont {Nilin}\ \bibnamefont {Abrahamsen}}, \ and\ \bibinfo {author} {\bibfnamefont {Lin}\ \bibnamefont {Lin}},\ }\bibfield  {title} {\enquote {\bibinfo {title} {A kaczmarz-inspired approach to accelerate the optimization of neural network wavefunctions},}\ }\href {\doibase https://doi.org/10.1016/j.jcp.2024.113351} {\bibfield  {journal} {\bibinfo  {journal} {Journal of Computational Physics}\ }\textbf {\bibinfo {volume} {516}},\ \bibinfo {pages} {113351} (\bibinfo {year} {2024})}\BibitemShut {NoStop}%
\bibitem [{\citenamefont {Liu}\ \emph {et~al.}(2017)\citenamefont {Liu}, \citenamefont {Dong}, \citenamefont {Han}, \citenamefont {Guo},\ and\ \citenamefont {He}}]{QMCH2D}%
  \BibitemOpen
  \bibfield  {author} {\bibinfo {author} {\bibfnamefont {Wen-Yuan}\ \bibnamefont {Liu}}, \bibinfo {author} {\bibfnamefont {Shao-Jun}\ \bibnamefont {Dong}}, \bibinfo {author} {\bibfnamefont {Yong-Jian}\ \bibnamefont {Han}}, \bibinfo {author} {\bibfnamefont {Guang-Can}\ \bibnamefont {Guo}}, \ and\ \bibinfo {author} {\bibfnamefont {Lixin}\ \bibnamefont {He}},\ }\bibfield  {title} {\enquote {\bibinfo {title} {Gradient optimization of finite projected entangled pair states},}\ }\href {\doibase 10.1103/PhysRevB.95.195154} {\bibfield  {journal} {\bibinfo  {journal} {Phys. Rev. B}\ }\textbf {\bibinfo {volume} {95}},\ \bibinfo {pages} {195154} (\bibinfo {year} {2017})}\BibitemShut {NoStop}%
\bibitem [{\citenamefont {Liu}\ \emph {et~al.}(2018)\citenamefont {Liu}, \citenamefont {Dong}, \citenamefont {Wang}, \citenamefont {Han}, \citenamefont {An}, \citenamefont {Guo},\ and\ \citenamefont {He}}]{PEPSbench}%
  \BibitemOpen
  \bibfield  {author} {\bibinfo {author} {\bibfnamefont {Wen-Yuan}\ \bibnamefont {Liu}}, \bibinfo {author} {\bibfnamefont {Shaojun}\ \bibnamefont {Dong}}, \bibinfo {author} {\bibfnamefont {Chao}\ \bibnamefont {Wang}}, \bibinfo {author} {\bibfnamefont {Yongjian}\ \bibnamefont {Han}}, \bibinfo {author} {\bibfnamefont {Hong}\ \bibnamefont {An}}, \bibinfo {author} {\bibfnamefont {Guang-Can}\ \bibnamefont {Guo}}, \ and\ \bibinfo {author} {\bibfnamefont {Lixin}\ \bibnamefont {He}},\ }\bibfield  {title} {\enquote {\bibinfo {title} {Gapless spin liquid ground state of the spin-$\frac{1}{2}$ ${J}_{1}\ensuremath{-}{J}_{2}$ heisenberg model on square lattices},}\ }\href {\doibase 10.1103/PhysRevB.98.241109} {\bibfield  {journal} {\bibinfo  {journal} {Phys. Rev. B}\ }\textbf {\bibinfo {volume} {98}},\ \bibinfo {pages} {241109(R)} (\bibinfo {year} {2018})}\BibitemShut {NoStop}%
\bibitem [{\citenamefont {Zhang}\ \emph {et~al.}(2026)\citenamefont {Zhang}, \citenamefont {Armegioiu}, \citenamefont {Carrasquilla}, \citenamefont {Mishra}, \citenamefont {M{\"u}ller}, \citenamefont {Nys},\ and\ \citenamefont {Zeinhofer}}]{PII}%
  \BibitemOpen
  \bibfield  {author} {\bibinfo {author} {\bibfnamefont {Hang}\ \bibnamefont {Zhang}}, \bibinfo {author} {\bibfnamefont {Victor}\ \bibnamefont {Armegioiu}}, \bibinfo {author} {\bibfnamefont {Juan}\ \bibnamefont {Carrasquilla}}, \bibinfo {author} {\bibfnamefont {Siddhartha}\ \bibnamefont {Mishra}}, \bibinfo {author} {\bibfnamefont {Johannes}\ \bibnamefont {M{\"u}ller}}, \bibinfo {author} {\bibfnamefont {Jannes}\ \bibnamefont {Nys}}, \ and\ \bibinfo {author} {\bibfnamefont {Marius}\ \bibnamefont {Zeinhofer}},\ }\bibfield  {title} {\enquote {\bibinfo {title} {Projected inverse iteration: An eigenvalue approach to ground-state computation with neural quantum states},}\ }\href@noop {} {\bibfield  {journal} {\bibinfo  {journal} {arXiv preprint arXiv:2606.07825}\ } (\bibinfo {year} {2026})}\BibitemShut {NoStop}%
\bibitem [{\citenamefont {Solinas}\ \emph {et~al.}(2026)\citenamefont {Solinas}, \citenamefont {Valenti}, \citenamefont {Bou-Rabee},\ and\ \citenamefont {Wiersema}}]{mixed-precision}%
  \BibitemOpen
  \bibfield  {author} {\bibinfo {author} {\bibfnamefont {Massimo}\ \bibnamefont {Solinas}}, \bibinfo {author} {\bibfnamefont {Agnes}\ \bibnamefont {Valenti}}, \bibinfo {author} {\bibfnamefont {Nawaf}\ \bibnamefont {Bou-Rabee}}, \ and\ \bibinfo {author} {\bibfnamefont {Roeland}\ \bibnamefont {Wiersema}},\ }\bibfield  {title} {\enquote {\bibinfo {title} {Neural quantum states in mixed precision},}\ }\href@noop {} {\bibfield  {journal} {\bibinfo  {journal} {arXiv preprint arXiv:2601.20782}\ } (\bibinfo {year} {2026})}\BibitemShut {NoStop}%
\bibitem [{\citenamefont {Higham}(2002)}]{Accuracy}%
  \BibitemOpen
  \bibfield  {author} {\bibinfo {author} {\bibfnamefont {Nicholas~J.}\ \bibnamefont {Higham}},\ }\href {\doibase 10.1137/1.9780898718027} {\emph {\bibinfo {title} {Accuracy and Stability of Numerical Algorithms}}},\ \bibinfo {edition} {2nd}\ ed.\ (\bibinfo  {publisher} {Society for Industrial and Applied Mathematics},\ \bibinfo {year} {2002})\BibitemShut {NoStop}%
\bibitem [{\citenamefont {Darken}\ and\ \citenamefont {Moody}(1990)}]{exploration-phase}%
  \BibitemOpen
  \bibfield  {author} {\bibinfo {author} {\bibfnamefont {Christian}\ \bibnamefont {Darken}}\ and\ \bibinfo {author} {\bibfnamefont {John}\ \bibnamefont {Moody}},\ }\bibfield  {title} {\enquote {\bibinfo {title} {Note on learning rate schedules for stochastic optimization},}\ }\href@noop {} {\bibfield  {journal} {\bibinfo  {journal} {Advances in neural information processing systems}\ }\textbf {\bibinfo {volume} {3}} (\bibinfo {year} {1990})}\BibitemShut {NoStop}%
\bibitem [{\citenamefont {Martens}\ and\ \citenamefont {Grosse}(2020)}]{kfac}%
  \BibitemOpen
  \bibfield  {author} {\bibinfo {author} {\bibfnamefont {James}\ \bibnamefont {Martens}}\ and\ \bibinfo {author} {\bibfnamefont {Roger}\ \bibnamefont {Grosse}},\ }\href {https://arxiv.org/abs/1503.05671} {\enquote {\bibinfo {title} {Optimizing neural networks with kronecker-factored approximate curvature},}\ } (\bibinfo {year} {2020}),\ \Eprint {http://arxiv.org/abs/1503.05671} {arXiv:1503.05671 [cs.LG]} \BibitemShut {NoStop}%
\bibitem [{\citenamefont {Martens}(2020)}]{NG-martens}%
  \BibitemOpen
  \bibfield  {author} {\bibinfo {author} {\bibfnamefont {James}\ \bibnamefont {Martens}},\ }\bibfield  {title} {\enquote {\bibinfo {title} {New insights and perspectives on the natural gradient method},}\ }\href {http://jmlr.org/papers/v21/17-678.html} {\bibfield  {journal} {\bibinfo  {journal} {Journal of Machine Learning Research}\ }\textbf {\bibinfo {volume} {21}},\ \bibinfo {pages} {1--76} (\bibinfo {year} {2020})}\BibitemShut {NoStop}%
\bibitem [{\citenamefont {Mor{\'e}}(2006)}]{more2006levenberg}%
  \BibitemOpen
  \bibfield  {author} {\bibinfo {author} {\bibfnamefont {Jorge~J}\ \bibnamefont {Mor{\'e}}},\ }\bibfield  {title} {\enquote {\bibinfo {title} {The levenberg-marquardt algorithm: implementation and theory},}\ }in\ \href@noop {} {\emph {\bibinfo {booktitle} {Numerical analysis: proceedings of the biennial Conference held at Dundee, June 28--July 1, 1977}}}\ (\bibinfo {organization} {Springer},\ \bibinfo {year} {2006})\ pp.\ \bibinfo {pages} {105--116}\BibitemShut {NoStop}%
\bibitem [{\citenamefont {Wright}\ \emph {et~al.}(1999)\citenamefont {Wright}, \citenamefont {Nocedal} \emph {et~al.}}]{numerical.opt.textbook}%
  \BibitemOpen
  \bibfield  {author} {\bibinfo {author} {\bibfnamefont {Stephen}\ \bibnamefont {Wright}}, \bibinfo {author} {\bibfnamefont {Jorge}\ \bibnamefont {Nocedal}},  \emph {et~al.},\ }\bibfield  {title} {\enquote {\bibinfo {title} {Numerical optimization},}\ }\href@noop {} {\bibfield  {journal} {\bibinfo  {journal} {Springer Science}\ }\textbf {\bibinfo {volume} {35}},\ \bibinfo {pages} {7} (\bibinfo {year} {1999})}\BibitemShut {NoStop}%
\bibitem [{\citenamefont {Drissi}\ \emph {et~al.}(2024)\citenamefont {Drissi}, \citenamefont {Keeble}, \citenamefont {Rozal{\'e}n~Sarmiento},\ and\ \citenamefont {Rios}}]{nqs.kfac}%
  \BibitemOpen
  \bibfield  {author} {\bibinfo {author} {\bibfnamefont {M}~\bibnamefont {Drissi}}, \bibinfo {author} {\bibfnamefont {JW~T}\ \bibnamefont {Keeble}}, \bibinfo {author} {\bibfnamefont {J}~\bibnamefont {Rozal{\'e}n~Sarmiento}}, \ and\ \bibinfo {author} {\bibfnamefont {A}~\bibnamefont {Rios}},\ }\bibfield  {title} {\enquote {\bibinfo {title} {Second-order optimization strategies for neural network quantum states},}\ }\href@noop {} {\bibfield  {journal} {\bibinfo  {journal} {Philosophical Transactions A}\ }\textbf {\bibinfo {volume} {382}},\ \bibinfo {pages} {20240057} (\bibinfo {year} {2024})}\BibitemShut {NoStop}%
\end{thebibliography}%
\appendix

\section{Gated Recurrent Unit Cells}
\label{app:GRU}
To mitigate the vanishing gradient problem present in Vanilla RNNs, GRUs introduce three additional vectors $\tilde{\bm{h}}_i,\bm{r}_i, \bm{u}_i$ to capture memory in addition to the hidden state $\bm{h}_i$. Here we present the GRU architecture introduced in Ref.~\cite{GRU}. Its recursion update rules are as follows:
\begin{align}
    \bm{u}_i &= \operatorname{Sigmoid}(W_u[\bm{h}_{i-1};\bm{\sigma}_{i-1}]+\bm{b}_u),\\
    \bm{r}_i &=\operatorname{Sigmoid}(W_r[\bm{h}_{i-1};\bm{\sigma}_{i-1}]+\bm{b}_r),\\
    \bm{\tilde{h}}_i &= \tanh\left( \bm{r}_i \odot (W_{\rm h} \bm{h}_{i-1} + \bm{b}_{\rm h}) + W_{\rm in} \bm{\sigma}_{i-1} + \bm{b}_{\rm in} \right),\\
    \bm{h}_i &= (1-\bm{u}_i)\odot\bm{h}_{i-1} +\bm{u}_i\odot \tilde{\bm{h}}_i.
\end{align}
Here $\odot$ is the Hadamard product, which multiplies two vectors element-wise. The hidden state $\bm{h}_i$ is updated based on the interpolation of the previous hidden state $\bm{h}_{i-1}$ and the candidate hidden state $\tilde{\bm{h}}_i$. The update gate $\bm{u}_i$ controls this interpolation based on the importance value of $\bm{\sigma}_{i-1}$. The reset gate $\bm{r}_i$ modifies the candidate hidden state $\tilde{\bm{h}}_i$ by setting certain components of $\bm{h}_{i-1}$ to zero when the corresponding components of $\bm{r}_i$ are close to zero. The reset gate allows the GRU to `forget' irrelevant information, which helps mitigate the vanishing gradient problem.

The two-dimensional version of the GRU wave function is implemented as follows:
\begin{align}
    \bm{u}_{i,j} &= \operatorname{Sigmoid}(W_u[\bm{h}'_{i,j};\bm{\sigma}'_{i,j}]+\bm{b}_u),\\
    \bm{r}_{i,j} &=\operatorname{Sigmoid}(W_r[\bm{h}'_{i,j};\bm{\sigma}'_{i,j}]+\bm{b}_r),\\
    \bm{\tilde{h}}_{i,j} &= \tanh\left( \bm{r}_{i,j} \odot (W_{\rm h} \bm{h}'_{i,j} + \bm{b}_{\rm h}) + W_{\rm in} \bm{\sigma}'_{i,j} + \bm{b}_{\rm in} \right), \\
    \bm{h}_{i,j} &= (1-\bm{u}_{i,j})\odot\bm{h}'_{i,j} +\bm{u}_{i,j}\odot \tilde{\bm{h}}_{i,j}\textcolor{purple}{,}
\end{align}
where
\begin{eqnarray}
    \bm{h}'_{i,j} &= U_1 \bm{h}_{i-(-1)^j,j} + U_2 \bm{h}_{i,j-1}, \\
    \bm{\sigma}'_{i,j} &= [\bm{\sigma}_{i-(-1)^j,j};  \bm{\sigma}_{i,j-1}].
\end{eqnarray}
Here $U_1$ and $U_2$ are additional trainable weights.

\section{minSR with Momentum}
\label{app:momentum}
We add momentum to our minSR scheme, which has previously been observed to increase performance in SR~\cite{SPRING}. We can write our SR parameter updates as $\bm{\theta}_{t+1}=\bm{\theta}_{t}-\delta\bm{\theta}_t$. The corresponding momentum-based parameter update is given by
\begin{align}
    \bm{m}_{t+1} &= \mu \bm{m}_{t}+(1-\mu)\delta \bm{\theta}_t ,\label{eq:momentum}\\
    \bm{\theta}_{t+1} &= \bm{\theta}_t-\bm{m}_{t+1}.\notag
\end{align}
Here $\mu\in [0,1)$ is the momentum hyperparameter, with  $\mu=0$  corresponding to regular SGD. $\bm{m}$ is the momentum vector which stores the previous history of updates. We present the momentum values for training in  App.~\ref{app:hyperparameter}. We attempted using the SPRING algorithm~\cite{SPRING}, and found that it performs slightly worse than minSR.

\section{Derivation of stochastic-reconfiguration ill-conditioning in RNN wave functions}
\label{app:illcond_detailed}

The SR matrix $S=\bar{\mathcal O}\bar{\mathcal O}^{\dagger}$ of an autoregressive RNN wave function is ill-conditioned from both ends of its spectrum, with exact gauge zero modes at the bottom and recurrently amplified modes scaling as $N^3$ (marginal) or exponentially in $N$ (expanding) at the top.
This appendix proves Proposition~\ref{prop:gauge_zero_modes} and Proposition~\ref{prop:spectral_divergence}, stated in Sec.~\ref{subsec:rnn_fisher_spectrum} of the main text. Subsection~C.1 fixes the model and exhibits its hidden-coordinate gauge map. Subsection~C.2 proves Proposition~\ref{prop:gauge_zero_modes}. Subsections~C.3 and~C.4 rewrite the SR matrix $S$ in terms of the logits $\mathbf z_i$ rather than the raw network parameters. This logit-coordinate form is the starting point for the eigenvalue bounds proved in Subsections~C.5 and~C.6. Subsections~C.5 and~C.6 prove Proposition~\ref{prop:spectral_divergence}. The derivation is deliberately granular: each hypothesis is made explicit at the step where it is used.

\subsection*{C.1 Setup and hidden-coordinate gauge freedom}

We first fix the model and exhibit its parametrization redundancy. The intuition is geometric: the RNN cell communicates with the outputs only through the logits, so any change of the internal hidden coordinates that the logits cannot see is physically invisible and must leave the state invariant.

We work with a real, positive, autoregressive wave function whose Born probability factorizes over sites,
\begin{equation}
|\Psi_{\boldsymbol{\theta}}(\boldsymbol{\sigma})|^2
=
P_{\boldsymbol{\theta}}(\boldsymbol{\sigma})
=
\prod_{i=1}^{N}
p_{\boldsymbol{\theta}}(\sigma_i\,|\,\sigma_{<i}) ,
\label{eq:born_factorization_H}
\end{equation}
where $\boldsymbol{\sigma}=(\sigma_1,\dots,\sigma_N)$ is a spin configuration, $\sigma_{<i}=(\sigma_1,\dots,\sigma_{i-1})$ is its history, and $\boldsymbol{\theta}$ collects all variational parameters. As we assume $\Psi_{\boldsymbol{\theta}}>0$, taking the logarithm of Eq.~\eqref{eq:born_factorization_H} and using $\log\Psi=\tfrac12\log P$ gives
\begin{equation}
\log \Psi_{\boldsymbol{\theta}}(\boldsymbol{\sigma})
=
\frac{1}{2}\log P_{\boldsymbol{\theta}}(\boldsymbol{\sigma})
=
\frac{1}{2}
\sum_{i=1}^{N}
\log p_{\boldsymbol{\theta}}(\sigma_i\,|\,\sigma_{<i}) ,
\label{eq:logpsi_sum_H}
\end{equation}
where the second equality is the logarithm of the product in Eq.~\eqref{eq:born_factorization_H}.

The conditionals are produced by a linear recurrent cell. At site $i$ the hidden state $\mathbf h_i\in\mathbb R^{d_h}$ and logit vector $\mathbf z_i\in\mathbb R^{q}$ are
\begin{equation}
\mathbf h_i
=
W\mathbf h_{i-1}+V\boldsymbol{\sigma}_i+\mathbf b,
\qquad
\mathbf z_i
=
U\mathbf h_{i-1}+\mathbf c ,
\label{eq:linear_rnn_H}
\end{equation}
where $\boldsymbol{\sigma}_i\in\mathbb R^{d_\sigma}$ is the one-hot encoding of the local spin $\sigma_i$ over its $d_\sigma$ possible values, $W\in\mathbb R^{d_h\times d_h}$ is the recurrent matrix, $V\in\mathbb R^{d_h\times d_{\sigma}}$ the input matrix, $U\in\mathbb R^{d_\sigma\times d_h}$ the readout matrix, and $\mathbf b\in\mathbb R^{d_h}$, $\mathbf c\in\mathbb R^{d_\sigma}$ the biases. Throughout, we take the standing convention that the initial hidden state is fixed at $\mathbf h_0=\mathbf 0$; this both gives the recursion a starting point and, as we use below, makes it invariant under the hidden-coordinate map. The conditional distribution is the softmax of the logits,
\begin{equation}
\mathbf p_i
=
\operatorname{softmax}(\mathbf z_i),
\qquad
p_{\boldsymbol{\theta}}(\sigma_i\,|\,\sigma_{<i})
=
\mathbf p_i^{\mathsf T}\boldsymbol{\sigma}_i ,
\label{eq:softmax_def_H}
\end{equation}
where the last equality selects the component of $\mathbf p_i$ indexed by the realized one-hot $\boldsymbol{\sigma}_i$.

This parametrization carries a continuous hidden-coordinate gauge symmetry. For any invertible $A\in GL(d_h)$, apply the change of internal basis $\mathbf h_i\mapsto A\mathbf h_i$ (consistently for all $i\ge 0$, so that in particular $\mathbf h_0=\mathbf 0\mapsto A\mathbf 0=\mathbf 0$) together with
\begin{equation}
\begin{gathered}
W\mapsto AWA^{-1},
\qquad
V\mapsto AV,
\qquad
\mathbf b\mapsto A\mathbf b,
\\[2pt]
U\mapsto UA^{-1},
\qquad
\mathbf c\mapsto \mathbf c .
\end{gathered}
\label{eq:gauge_map_H}
\end{equation}
We check by induction on $i$ that the recursion in Eq.~\eqref{eq:linear_rnn_H} maps $\mathbf h_i\mapsto A\mathbf h_i$. \emph{Base case} $i=0$. The fixed initial state obeys $\mathbf h_0=\mathbf 0\mapsto A\mathbf 0=\mathbf 0$, so the claim holds trivially. This is where the choice $\mathbf h_0=\mathbf 0$ is needed. \emph{Inductive step}. Assume $\mathbf h_{i-1}\mapsto A\mathbf h_{i-1}$. Substituting the transformed quantities into the hidden update gives
\begin{align}
\mathbf h_i'
&=
(AWA^{-1})(A\mathbf h_{i-1})+(AV)\boldsymbol{\sigma}_i+A\mathbf b
\nonumber\\
&=
A\big(W\mathbf h_{i-1}+V\boldsymbol{\sigma}_i+\mathbf b\big)
=
A\mathbf h_i ,
\label{eq:gauge_hidden_H}
\end{align}
where the first line inserts the maps of Eq.~\eqref{eq:gauge_map_H} together with the induction hypothesis. The second line uses $A^{-1}A=\mathbb I$ and factors out $A$, which closes the induction. The logits are then strictly unchanged (invariant) because $A^{-1}A=\mathbb I$ cancels the change of internal basis.
\begin{equation}
\mathbf z_i
\mapsto
(UA^{-1})(A\mathbf h_{i-1})+\mathbf c
=
U\mathbf h_{i-1}+\mathbf c
=
\mathbf z_i .
\label{eq:gauge_logit_H}
\end{equation}

The invariance now propagates forward. First, the softmax output $\mathbf p_i$ of Eq.~\eqref{eq:softmax_def_H} is unchanged, since it sees the parameters only through $\mathbf z_i$. Second, so is the product $P_{\boldsymbol{\theta}}$ of Eq.~\eqref{eq:born_factorization_H}, and hence $\Psi_{\boldsymbol{\theta}}$ itself. Moving along the orbit generated by $A\in GL(d_h)$ therefore changes the parameters but leaves the physical state identical.

\subsection*{C.2 Exact Fisher zero modes from the gauge orbit}

\begin{proof}[Proof of Proposition~\ref{prop:gauge_zero_modes}]
The gauge orbit has infinitesimal generators: the directions in which the
parameters move when the change of basis is first switched on. We now show that these are \emph{exact} zero modes of $S$, not simply approximate ones. Along these directions the log-amplitude does not move for any single configuration, so the log-derivative vector that builds $S$ vanishes identically.

So far we have considered a finite change of basis $A$. To identify
\emph{directions} in parameter space we need infinitesimal ones, so we take $A$ to be a small departure from doing nothing,
\begin{equation}
A = \mathbb I + t\,X ,
\end{equation}
where $X$ is an arbitrary fixed $d_h\times d_h$ matrix that selects which change of basis we perform, and the small number $t$ controls its size. Setting $t=0$ gives $A=\mathbb I$, i.e.\ no change at all.

As $t$ varies, Eq.~\eqref{eq:gauge_map_H} turns this into a smooth path $\boldsymbol{\theta}(t)$ through parameter space. Every point on this path describes the \emph{same} wave function, since each value of $t$ corresponds to a valid change of basis. 


To find the explicit components of $d\boldsymbol{\theta}$, we substitute $A=\mathbb I+tX$ into the transformation rules of Eq.~\eqref{eq:gauge_map_H} and keep only the terms proportional to $t$. We also need the inverse to the same order,
\begin{equation}
A^{-1}=\mathbb I-tX+\mathcal O(t^2) ,
\end{equation}
which follows from
$(\mathbb I+tX)(\mathbb I-tX)=\mathbb I-t^2X^2=\mathbb I+\mathcal O(t^2)$.

Take the recurrent matrix as a worked example:
\begin{align}
W\;\mapsto\;AWA^{-1}
&=(\mathbb I+tX)\,W\,(\mathbb I-tX)+\mathcal O(t^2)
\nonumber\\
&=W+t\,(XW-WX)+\mathcal O(t^2) .
\end{align}
The two terms carry opposite signs, one coming from each side of $W$. Their difference is the \emph{commutator} of $X$ and $W$,
\begin{equation}
[X,W]\equiv XW-WX ,
\label{eq:commutator_H}
\end{equation}
also known as the Lie bracket. It measures the extent to which two matrices fail to commute, and vanishes precisely when $XW=WX$.

The commutator appears only for $W$. The reason is that $W$ is the only matrix that takes a hidden state and returns another hidden state, so both of its sides feel the change of basis: $A$ multiplies it from the left and $A^{-1}$ from the right, $W\mapsto AWA^{-1}$. Each side then contributes one term, and the two carry opposite signs, so their difference is taken. Every other parameter meets the hidden space on one side only, for example $V\mapsto AV$, so it contributes a single term and no difference is formed. Repeating the same substitution for each remaining parameter gives
\begin{equation}
\begin{gathered}
dW=[X,W]t,
\qquad
dV=XVt,
\qquad
d\mathbf b=X\mathbf bt,
\\[2pt]
dU=-UXt,
\qquad
d\mathbf c=\mathbf 0.
\end{gathered}
\label{eq:gauge_components_H}
\end{equation}
Note that $d\mathbf c=\mathbf 0$ because $\mathbf c$ never appears with an $A$ at all.

These are our candidate zero directions, and we can indeed confirm them immediately. Eq.~\eqref{eq:gauge_logit_H} showed that the logits are \emph{exactly} unchanged for every $A$, hence for every value of $t$. A quantity that never changes has zero rate of change, so
\begin{equation}
d\mathbf z_i=\mathbf 0
\qquad\text{for all } i ,
\label{eq:dz_zero_H}
\end{equation}
since $\mathbf z_i$ is constant along the whole orbit. Because $\log\Psi_{\boldsymbol{\theta}}$ depends on $\boldsymbol{\theta}$ only through the logits $\{\mathbf z_i\}$ (Eqs.~\eqref{eq:logpsi_sum_H}--\eqref{eq:softmax_def_H}), the chain rule and Eq.~\eqref{eq:dz_zero_H} yield
\begin{equation}
d\log \Psi_{\boldsymbol{\theta}}(\boldsymbol{\sigma})
=
\sum_{i=1}^{N}
\frac{\partial \log\Psi_{\boldsymbol{\theta}}}{\partial \mathbf z_i}\!\cdot d\mathbf z_i
=
0
\qquad\text{for every } \boldsymbol{\sigma} .
\label{eq:dlogpsi_zero_H}
\end{equation}
We now feed this result into $S$. Recall from the main text how $S$ is
assembled from a sample of $N_s$ configurations $\boldsymbol{\sigma}^{(1)},\dots,\boldsymbol{\sigma}^{(N_s)}$. The
(unnormalized) log-derivative matrix has entries
\begin{equation}
\mathcal O_{ks}
=
\tfrac{1}{\sqrt{N_s}}\,\partial_{\theta_k}\log\Psi(\boldsymbol{\sigma}^{(s)}) ,
\end{equation}
where $k$ labels the parameter and $s$ labels the sample. Subtracting from each row its own sample mean gives the centered matrix $\bar{\mathcal O}$,
\begin{equation}
\bar{\mathcal O}_{ks}
=
\mathcal O_{ks}-\tfrac{1}{N_s}\sum_{s'}\mathcal O_{ks'} ,
\end{equation}
and $S=Q=\bar{\mathcal O}\bar{\mathcal O}^{\dagger}$ for real-valued wavefunctions. To test whether a gauge direction $d \bm{\theta}$ is a zero mode, we evaluate the quadratic
form $d\boldsymbol{\theta}^{\mathsf T} S\, d\boldsymbol{\theta}$, that is, we
contract $S$ with $d\boldsymbol{\theta}$ on both of its indices, multiplying
it by $d\boldsymbol{\theta}$ on the left and on the right,
\begin{equation}
d\boldsymbol{\theta}^{\mathsf T} S\, d\boldsymbol{\theta}
=
\big\|\,\bar{\mathcal O}^{\dagger} d\boldsymbol{\theta}\,\big\|_2^2
=
\sum_{s=1}^{N_s}
\Big|\big(\bar{\mathcal O}^{\dagger} d\boldsymbol{\theta}\big)_s\Big|^2 ,
\label{eq:zero_mode_quadform_H}
\end{equation}
where the first equality writes the quadratic form as a norm using $S=\bar{\mathcal O}\bar{\mathcal O}^{\dagger}$. Each entry $\big(\bar{\mathcal O}^{\dagger} d\boldsymbol{\theta}\big)_s$ is the centered directional log-derivative $\tfrac{1}{\sqrt{N_s}}\big(d\log\Psi(\boldsymbol{\sigma}^{(s)})-\overline{d\log\Psi}\,\big)$, which vanishes because Eq.~\eqref{eq:dlogpsi_zero_H} holds configuration by configuration. Hence
\begin{equation}
d\boldsymbol{\theta}^{\mathsf T} S\, d\boldsymbol{\theta}=0 ,
\label{eq:exact_zero_mode_H}
\end{equation}
for \emph{any} sample. Nothing in the argument required many samples,
since each square vanished on its own. The gauge directions are therefore exact
zero modes of $S$ in two senses. First, in the infinite-sample limit. Second,
for every finite-sample estimate $\bar{\mathcal O}\bar{\mathcal O}^{\dagger}$. These are the Fisher-information singularities: null directions of $S$ along which the parameters change but the wave function does not. It remains to show that $d\boldsymbol{\theta}$ lies in $\ker S$, in both the finite-sample and the population case. For a finite sample, Eq.~\eqref{eq:zero_mode_quadform_H} makes every entry of $\bar{\mathcal O}^{\dagger} d\boldsymbol{\theta}$ vanish, so $S\,d\boldsymbol{\theta}=\bar{\mathcal O}\big(\bar{\mathcal O}^{\dagger} d\boldsymbol{\theta}\big)=\mathbf 0$. For the population matrix, $S$ is positive semidefinite, and its quadratic form vanishes along $d\boldsymbol{\theta}$; since a positive semidefinite matrix annihilates any vector on which its quadratic form is zero, $S\,d\boldsymbol{\theta}=\mathbf 0$ there too. The generator $X$ was arbitrary, so every gauge direction lies in $\ker S$: the tangent space of the $GL(d_h)$ gauge orbit is contained in $\ker S$.
\end{proof}

\subsection*{C.3 The Fisher quadratic form in logit coordinates}

In these two subsections, we derive the logit-coordinate form of the population SR matrix, Eq.~\eqref{eq:fisher_logit_metric_H}, which the proof of Proposition~\ref{prop:spectral_divergence} (Secs.~C.5 and C.6) takes as its starting point. Having established that the gauge directions are exact zero modes, we now turn to the remaining directions. We express the Fisher form in logit coordinates, where its structure is most transparent. The autoregressive factorization then reduces it to a \emph{sum of per-site contributions}: the cross terms between different sites vanish because the score of each conditional distribution has zero mean.

It is easier to compute the Fisher form of the probability distribution first and convert to the wave function at the end. We therefore define the \emph{classical} Fisher quadratic form,
\begin{equation}
\mathcal S_P(d\boldsymbol{\theta},d\boldsymbol{\theta})
=
\Big\langle
\big[
d\log P_{\boldsymbol{\theta}}(\boldsymbol{\sigma})
\big]^2
\Big\rangle ,
\label{eq:classical_fisher_H}
\end{equation}
where $\langle\cdot\rangle$ denotes the expectation over configurations
$\boldsymbol{\sigma}$ drawn from $P_{\boldsymbol{\theta}}$, and
$d\log P_{\boldsymbol{\theta}}$ is the change in $\log P_{\boldsymbol{\theta}}$
produced by the parameter displacement $d\boldsymbol{\theta}$. It is a
\emph{quadratic form} because $d\boldsymbol{\theta}$ enters twice, through the square; both arguments are therefore written out. It is \emph{classical} because it is built from the probability distribution
$P_{\boldsymbol{\theta}}$ rather than from the amplitude
$\Psi_{\boldsymbol{\theta}}$.

To evaluate it we need $d\log P_{\boldsymbol{\theta}}$. Multiplying
Eq.~\eqref{eq:logpsi_sum_H} by two removes the factor of $1/2$ and leaves
\begin{equation}
\log P_{\boldsymbol{\theta}}(\boldsymbol{\sigma})
=
\sum_{i=1}^{N}
\log p_{\boldsymbol{\theta}}(\sigma_i\,|\,\sigma_{<i}) ,
\label{eq:logP_sum_H}
\end{equation}
and differentiating this sum term by term gives
\begin{equation}
d\log P_{\boldsymbol{\theta}}(\boldsymbol{\sigma})
=
\sum_{i=1}^{N}
d\log p_{\boldsymbol{\theta}}(\sigma_i\,|\,\sigma_{<i}) ,
\label{eq:dlogP_sum_H}
\end{equation}
To build $\mathcal S_P$ we must square this sum. It is convenient to
abbreviate the site-$i$ score as
\begin{equation}
u_i \equiv d\log p_{\boldsymbol{\theta}}(\sigma_i\,|\,\sigma_{<i}) ,
\label{eq:site_score_H}
\end{equation}
so that $d\log P_{\boldsymbol{\theta}}=\sum_i u_i$. The square of a sum
is a double sum,
\begin{equation}
\Big[\sum_{i=1}^{N}u_i\Big]^{2}
=
\sum_{i=1}^{N}\sum_{j=1}^{N}u_i u_j ,
\end{equation}
and taking the expectation term by term gives
\begin{equation}
\mathcal S_P
=
\sum_{i,j=1}^{N}
\big\langle u_i u_j \big\rangle ,
\label{eq:cross_terms_H}
\end{equation}
which contains diagonal terms ($i=j$) and cross terms ($i\neq j$). We now show that every cross term vanishes, leaving only the diagonal.

Consider a cross term with $i<j$. The key observation is that $u_i$ depends only on the sites $\sigma_1,\dots,\sigma_i$. Since $i<j$, every one of
those sites already belongs to the history
$\sigma_{<j}=(\sigma_1,\dots,\sigma_{j-1})$. Once that history is known, $u_i$ is therefore a fixed number, not a random one.

This lets us average in two stages: first over $\sigma_j$ with the history held fixed, then over the history itself. This regrouping is the law of total expectation,
\begin{equation}
\big\langle u_i u_j \big\rangle
=
\Big\langle
\big\langle u_i u_j \,\big|\, \sigma_{<j}\big\rangle
\Big\rangle .
\end{equation}
In the inner average, the history is fixed, so $u_i$ is a constant and can be taken outside it,
\begin{equation}
\big\langle u_i u_j \big\rangle
=
\Big\langle
u_i\,
\big\langle u_j \,\big|\, \sigma_{<j}\big\rangle
\Big\rangle .
\label{eq:tower_H}
\end{equation}
Everything now depends on the inner conditional average $\langle u_j \,|\, \sigma_{<j}\rangle$, which vanishes because probabilities are normalized,
\begin{align}
\big\langle
d\log p_{\boldsymbol{\theta}}(\sigma_j|\sigma_{<j})
\,\big|\, \sigma_{<j}
\big\rangle
&=
\sum_{\sigma_j}
p_{\boldsymbol{\theta}}(\sigma_j|\sigma_{<j})\,
d\log p_{\boldsymbol{\theta}}(\sigma_j|\sigma_{<j})
\nonumber\\
&=
\sum_{\sigma_j}
d\,p_{\boldsymbol{\theta}}(\sigma_j|\sigma_{<j})
\nonumber\\
&=
d\!\sum_{\sigma_j}
p_{\boldsymbol{\theta}}(\sigma_j|\sigma_{<j})
\nonumber\\
&=
d(1)=0 ,
\label{eq:score_zero_mean_H}
\end{align}
Here the first line is the definition of the conditional average; the
second uses the identity $p\,d\log p=dp$; the third interchanges the
differential and the finite sum over $\sigma_j$; and the last uses the fact
that these conditional probabilities sum to one for every history, so that
their differential vanishes.

Substituting Eq.~\eqref{eq:score_zero_mean_H} into
Eq.~\eqref{eq:tower_H} shows that every term with $i<j$ vanishes. The case
$i>j$ follows immediately: the summand in Eq.~\eqref{eq:cross_terms_H} is
symmetric under $i\leftrightarrow j$, so exchanging the two indices reduces it
to the case just treated. Every off-diagonal term therefore vanishes, and only the diagonal survives,
\begin{equation}
\mathcal S_P(d\boldsymbol{\theta},d\boldsymbol{\theta})
=
\sum_{i=1}^{N}
\Big\langle
\big[
d\log p_{\boldsymbol{\theta}}(\sigma_i|\sigma_{<i})
\big]^2
\Big\rangle .
\label{eq:fisher_diagonal_H}
\end{equation}

\subsection*{C.4 Softmax covariance}

We now evaluate each per-site term of Eq.~\eqref{eq:fisher_diagonal_H} and connect the classical Fisher form to the wave function SR matrix. Two facts do the work. First, the softmax score is a linear map of the logit displacement, so its second moment is the covariance of the one-hot variable. Second, the amplitude carries a factor $\tfrac12$ relative to the probability, and squaring turns that $\tfrac12$ into a $\tfrac14$.

We first derive the softmax score explicitly. Here $\mathbf z_i$ is the logit vector at site $i$, $\mathbf p_i=\operatorname{softmax}(\mathbf z_i)$ is the conditional probability vector, and $\boldsymbol{\sigma}_i$ is the one-hot encoding of the realized spin $\sigma_i$. For that realized one-hot, Eq.~\eqref{eq:softmax_def_H} reads
\begin{equation}
p_{\boldsymbol{\theta}}(\sigma_i\,|\,\sigma_{<i})
=
\mathbf p_i^{\mathsf T}\boldsymbol{\sigma}_i
=
\frac{e^{z_{i,\sigma_i}}}{\sum_{k}e^{z_{i,k}}} ,
\end{equation}
and taking its logarithm gives
\begin{equation}
\log p_{\boldsymbol{\theta}}(\sigma_i\,|\,\sigma_{<i})
=
\mathbf z_i^{\mathsf T}\boldsymbol{\sigma}_i
-
\log\!\sum_{k} e^{z_{i,k}} ,
\label{eq:logsoftmax_H}
\end{equation}
where the first term is the selected logit and the second is the log-partition function. Differentiating the log-partition term with respect to the logits,
\begin{equation}
d\log\!\sum_{k} e^{z_{i,k}}
=
\frac{\sum_k e^{z_{i,k}}\,dz_{i,k}}{\sum_{k'}e^{z_{i,k'}}}
=
\sum_k p_{i,k}\,dz_{i,k}
=
\mathbf p_i^{\mathsf T}d\mathbf z_i ,
\label{eq:dlogpartition_H}
\end{equation}
which uses the softmax normalization
$p_{i,k}=e^{z_{i,k}}/\sum_{k'}e^{z_{i,k'}}$. Now differentiate the first term of Eq.~\eqref{eq:logsoftmax_H}. Once $\sigma_i$ is realized, $\boldsymbol{\sigma}_i$ is a fixed vector, not a random one, so $d(\mathbf z_i^{\mathsf T}\boldsymbol{\sigma}_i)=\boldsymbol{\sigma}_i^{\mathsf T}d\mathbf z_i$.
Subtracting Eq.~\eqref{eq:dlogpartition_H} from this gives the score,
\begin{equation}
d\log p_{\boldsymbol{\theta}}(\sigma_i\,|\,\sigma_{<i})
=
(\boldsymbol{\sigma}_i-\mathbf p_i)^{\mathsf T}\,d\mathbf z_i ,
\label{eq:softmax_score_H}
\end{equation}
where $\boldsymbol{\sigma}_i$ is the realized one-hot and $\mathbf p_i$ its mean. Squaring and taking the conditional expectation over $\sigma_i$ at fixed history $\sigma_{<i}$ (so that $\mathbf p_i$ and $d\mathbf z_i$ are fixed),
\begin{align}
&\Big\langle
\big[d\log p_{\boldsymbol{\theta}}(\sigma_i|\sigma_{<i})\big]^{2}
\,\big|\,\sigma_{<i}
\Big\rangle
\nonumber\\
&\quad=
d\mathbf z_i^{\mathsf T}
\Big\langle
(\boldsymbol{\sigma}_i-\mathbf p_i)
(\boldsymbol{\sigma}_i-\mathbf p_i)^{\mathsf T}
\,\Big|\,\sigma_{<i}
\Big\rangle
d\mathbf z_i
\nonumber\\
&\quad=
d\mathbf z_i^{\mathsf T}\,C_i\,d\mathbf z_i ,
\label{eq:cond_second_moment_H}
\end{align}
where the first line pulls the deterministic $d\mathbf z_i$ out of the expectation, and the second identifies the conditional covariance $C_i$ of the one-hot variable. To evaluate $C_i$, expand the outer product and use linearity of the conditional expectation,
\begin{align}
C_i
&=
\big\langle
(\boldsymbol{\sigma}_i-\mathbf p_i)(\boldsymbol{\sigma}_i-\mathbf p_i)^{\mathsf T}
\,\big|\,\sigma_{<i}
\big\rangle
\nonumber\\
&=
\big\langle\boldsymbol{\sigma}_i\boldsymbol{\sigma}_i^{\mathsf T}\big\rangle
-\big\langle\boldsymbol{\sigma}_i\big\rangle\mathbf p_i^{\mathsf T}
-\mathbf p_i\big\langle\boldsymbol{\sigma}_i\big\rangle^{\mathsf T}
+\mathbf p_i\mathbf p_i^{\mathsf T}
\nonumber\\
&=
\operatorname{diag}(\mathbf p_i)
-\mathbf p_i\mathbf p_i^{\mathsf T}
-\mathbf p_i\mathbf p_i^{\mathsf T}
+\mathbf p_i\mathbf p_i^{\mathsf T}
\nonumber\\
&=
\operatorname{diag}(\mathbf p_i)-\mathbf p_i\mathbf p_i^{\mathsf T} ,
\label{eq:softmax_cov_H}
\end{align}
where the conditioning on $\sigma_{<i}$ is left implicit inside $\langle\cdot\rangle$. The third line uses two properties of a one-hot vector. First, its outer product with itself is the diagonal matrix built from its own entries,
$\boldsymbol{\sigma}_i\boldsymbol{\sigma}_i^{\mathsf T}=\operatorname{diag}(\boldsymbol{\sigma}_i)$,
so averaging gives
$\langle\boldsymbol{\sigma}_i\boldsymbol{\sigma}_i^{\mathsf T}\rangle=\operatorname{diag}(\mathbf p_i)$.
Second, its mean is the softmax vector itself,
$\langle\boldsymbol{\sigma}_i\rangle=\mathbf p_i$. The resulting matrix $C_i$
is symmetric and positive semidefinite, and is the Fisher matrix of the softmax logits.

Two steps connect $\mathcal S_P$ to the wave function SR matrix
$S=\bar{\mathcal O}\bar{\mathcal O}^{\dagger}$. The first is the relation
$\log\Psi_{\boldsymbol{\theta}}=\tfrac12\log P_{\boldsymbol{\theta}}$ between
the amplitude and the probability: differentiating and then squaring converts the factor $\tfrac12$ into $\tfrac14$,
\begin{equation}
\begin{gathered}
d\log\Psi=\tfrac12\,d\log P,
\\[2pt]
(d\log\Psi)^2=\tfrac14(d\log P)^2 .
\end{gathered}
\end{equation}
Second, the total score has zero mean,
\begin{align}
\big\langle d\log P_{\boldsymbol{\theta}}\big\rangle
&=
\sum_{\boldsymbol{\sigma}} P_{\boldsymbol{\theta}}(\boldsymbol{\sigma})\,
d\log P_{\boldsymbol{\theta}}(\boldsymbol{\sigma})
\nonumber\\
&=
d\!\sum_{\boldsymbol{\sigma}} P_{\boldsymbol{\theta}}(\boldsymbol{\sigma})
=
d(1)=0 ,
\label{eq:total_score_zero_H}
\end{align}
by the same normalization argument as Eq.~\eqref{eq:score_zero_mean_H} applied to the joint distribution.

The matrix $S=\bar{\mathcal O}\bar{\mathcal O}^{\dagger}$ is built from a finite sample and from \emph{centered} log-derivatives. The quadratic form it defines is therefore a sample variance,
\begin{equation}
d\boldsymbol{\theta}^{\mathsf T} S\, d\boldsymbol{\theta}
=
\frac{1}{N_s}\sum_{s=1}^{N_s}
\Big(d\log\Psi(\boldsymbol{\sigma}^{(s)})-\overline{d\log\Psi}\Big)^{2} ,
\end{equation}
where $d\log\Psi(\boldsymbol{\sigma}^{(s)})$ is the change in $\log\Psi$
along $d\boldsymbol{\theta}$ at sample $s$, and $\overline{d\log\Psi}$ is its average over the $N_s$ samples.

Two things happen as $N_s\to\infty$. First, the sample average converges to the population expectation, so $\tfrac{1}{N_s}\sum_s(\cdot)\to\langle\cdot
\rangle$. Second, the subtracted mean vanishes,
\begin{equation}
\overline{d\log\Psi}\;\longrightarrow\;
\langle d\log\Psi\rangle
=
\tfrac{1}{2}\big\langle d\log P_{\boldsymbol{\theta}}\big\rangle
=
0,
\end{equation}
by Eq.~\eqref{eq:total_score_zero_H}. The centering therefore drops out,
and the quadratic form reduces to
\begin{align}
d\boldsymbol{\theta}^{\mathsf T} S\, d\boldsymbol{\theta}
&\;\xrightarrow[N_s\to\infty]{}\;
\big\langle (d\log\Psi_{\boldsymbol{\theta}})^2\big\rangle
\nonumber\\
&=
\frac{1}{4}\,\big\langle (d\log P_{\boldsymbol{\theta}})^2\big\rangle
=
\frac{1}{4}\,\mathcal S_P(d\boldsymbol{\theta},d\boldsymbol{\theta}) ,
\label{eq:quarter_H}
\end{align}
where the first arrow uses the two limits just described (centering removed via Eq.~\eqref{eq:total_score_zero_H}), and the second equality uses $(d\log\Psi)^2=\tfrac14(d\log P)^2$. Inserting Eqs.~\eqref{eq:fisher_diagonal_H} and~\eqref{eq:cond_second_moment_H} (and averaging the conditional result over histories) gives the logit-space form of the population SR matrix,
\begin{equation}
d\boldsymbol{\theta}^{\mathsf T} S\, d\boldsymbol{\theta}
=
\frac{1}{4}
\sum_{i=1}^{N}
\big\langle
d\mathbf z_i^{\mathsf T}C_i\,d\mathbf z_i
\big\rangle .
\label{eq:fisher_logit_metric_H}
\end{equation}
Introducing the logit Jacobian $J_i\equiv\partial\mathbf z_i/\partial\boldsymbol{\theta}$, so that $d\mathbf z_i=J_i\,d\boldsymbol{\theta}$, and stripping the arbitrary displacement $d\boldsymbol{\theta}$ from both sides,
\begin{equation}
S
=
\frac{1}{4}
\sum_{i=1}^{N}
\big\langle
J_i^{\mathsf T}C_iJ_i
\big\rangle .
\label{eq:fisher_jacobian_form_H}
\end{equation}
Eq.~\eqref{eq:fisher_logit_metric_H} makes both ends of the
spectrum visible. Gauge directions leave the logits unchanged,
$d\mathbf z_i=\mathbf 0$ (Sec.~C.2), and therefore have zero Fisher norm. Directions that produce a large logit response, and with nonvanishing softmax covariance, give large Fisher eigenvalues, which we bound in the next subsection.

\subsection*{C.5 A lower bound on the largest eigenvalue}

\begin{proof}[Proof of Proposition~\ref{prop:spectral_divergence}]
To reach the top of the spectrum, we do not need to diagonalize $S$. It is enough to probe a single, carefully chosen direction. By the Rayleigh quotient, every diagonal element of $S$ in an orthonormal parameter basis is a lower bound on $\lambda_{\max}(S)$. It therefore suffices to exhibit one recurrent-bias direction whose diagonal element is large.

Throughout this subsection, $S$ denotes the population Fisher matrix of Eq.~\eqref{eq:fisher_jacobian_form_H}, that is, the $N_s\to\infty$ limit of Eq.~\eqref{eq:quarter_H}. Proposition~\ref{prop:spectral_divergence} is therefore a population statement. A finite-sample estimate fluctuates around this population matrix; obtaining a finite-sample analogue of the lower bound would require additional assumptions.

We take the recurrent matrix to be diagonal,
\begin{equation}
W=\operatorname{diag}(r_1,\dots,r_{d_h}),
\label{eq:diag_W_H}
\end{equation}
so that the hidden coordinates no longer mix. Writing the hidden update of Eq.~\eqref{eq:linear_rnn_H} component by component,
\begin{equation}
(\mathbf h_i)_a = r_a\,(\mathbf h_{i-1})_a + (V\boldsymbol{\sigma}_i)_a + b_a ,
\end{equation}
each coordinate evolves on its own. We call the $a$-th coordinate \emph{hidden mode $a$}, and $r_a$ its \emph{multiplier}: a perturbation of that mode is rescaled by $r_a$ at every site, so after $k$ sites it carries a factor $r_a^{\,k}$. This single number controls how fast the response grows.
We then perturb only the recurrent bias along one such mode,
\begin{equation}
d\boldsymbol{\theta}=d\mathbf b=\delta b_a\,\mathbf e_a ,
\label{eq:bias_perturbation_H}
\end{equation}
where $\mathbf e_a$ is the $a$-th standard basis vector of $\mathbb R^{d_h}$. We now linearize the hidden update of Eq.~\eqref{eq:linear_rnn_H}. Under a variation of the bias alone, both $V\boldsymbol{\sigma}_i$ and $W$ are held fixed, so the response obeys $d\mathbf h_i=W\,d\mathbf h_{i-1}+d\mathbf b$. The initial hidden state is fixed at $\mathbf h_0=\mathbf 0$, hence $d\mathbf h_0=\mathbf 0$. Unrolling the recursion from this starting point gives
\begin{equation}
d\mathbf h_{i-1}
=
\sum_{k=0}^{i-2} W^{k}\,d\mathbf b ,
\label{eq:unroll_H}
\end{equation}
a geometric sum in which the term $W^{k}\,d\mathbf b$ is the bias
variation after it has propagated through $k$ sites.

We now insert this response into the logit map
$\mathbf z_i=U\mathbf h_{i-1}+\mathbf c$. Because $W$ is diagonal, Eq.~\eqref{eq:diag_W_H} gives $W^{k}\mathbf e_a=r_a^{k}\mathbf e_a$, so every term of the sum stays on mode $a$, and the logit response reduces to
\begin{equation}
\begin{gathered}
d\mathbf z_i
=
U\,d\mathbf h_{i-1}
=
\mathbf u_a\,m_{i-1}(r_a)\,\delta b_a,
\\[2pt]
\mathbf u_a\equiv U\mathbf e_a ,
\end{gathered}
\label{eq:dz_mode_H}
\end{equation}
where $\mathbf u_a$ is the readout of hidden mode $a$, that is, the logit direction this mode excites. The geometric factor collects the powers of $r_a$ picked up along the way,
\begin{equation}
m_{i-1}(r_a)
=
\sum_{k=0}^{i-2} r_a^{\,k} .
\label{eq:geometric_factor_H}
\end{equation}
Because the displacement of Eq.~\eqref{eq:bias_perturbation_H} has a
single nonzero component, the quadratic form reduces to one diagonal element of $S$,
\begin{equation}
d\boldsymbol{\theta}^{\mathsf T} S\, d\boldsymbol{\theta}
=
|\delta b_a|^2\,S_{b_a b_a} ,
\label{eq:oneD_quadform_H}
\end{equation}
We identify that element by evaluating the right-hand side of Eq.~\eqref{eq:fisher_logit_metric_H} on the logit response of Eq.~\eqref{eq:dz_mode_H}. Both sides then carry the same factor $|\delta b_a|^2$, which cancels and leaves
\begin{equation}
\begin{gathered}
S_{b_a b_a}
=
\frac{1}{4}
\sum_{i=1}^{N}
|m_{i-1}(r_a)|^2\,\chi_{a,i},
\\[2pt]
\chi_{a,i}
\equiv
\big\langle
\mathbf u_a^{\mathsf T}C_i\mathbf u_a
\big\rangle ,
\end{gathered}
\label{eq:Sbaba_H}
\end{equation}
Here $\chi_{a,i}$ is the softmax covariance $C_i$ of Eq.~\eqref{eq:cond_second_moment_H}, projected onto the readout direction $\mathbf u_a$ and averaged over histories. It is nonnegative, $\chi_{a,i}\ge0$, because $C_i$ is positive semidefinite.

We can now apply the Rayleigh-quotient argument stated at the start of this subsection: for a symmetric $S$, the largest eigenvalue is at least as large as any diagonal element in an orthonormal basis. Let $\mathbf e_{b_a}$ denote the unit parameter-basis vector along the coordinate $b_a$; testing $S$ along this direction gives
\begin{align}
\lambda_{\max}(S)
&\ge
\mathbf e_{b_a}^{\mathsf T} S\,\mathbf e_{b_a}
=
S_{b_a b_a}
\nonumber\\
&=
\frac{1}{4}
\sum_{i=1}^{N}
|m_{i-1}(r_a)|^2\,\chi_{a,i} .
\label{eq:large_eig_bound_H}
\end{align}

\subsection*{C.6 Asymptotic scaling of the bound}

We now determine how the bound of Eq.~\eqref{eq:large_eig_bound_H} grows with the system size $N$. The growth is controlled entirely by the geometric factor $m_{i-1}(r_a)$, and we treat two regimes. A marginally stable recurrent mode already makes $\lambda_{\max}(S)$ diverge polynomially in $N$, and an expanding mode makes it diverge exponentially.

\emph{Marginal mode, $r_a=1$.} Every term of the geometric factor of
Eq.~\eqref{eq:geometric_factor_H} equals $1^{k}=1$, so the sum counts its $i-1$ terms,
\begin{equation}
m_{i-1}(1)=\sum_{k=0}^{i-2}1=i-1 ,
\label{eq:marginal_m_H}
\end{equation}
and Eq.~\eqref{eq:Sbaba_H} becomes
\begin{equation}
S_{b_a b_a}
=
\frac{1}{4}
\sum_{i=1}^{N}
(i-1)^2\,\chi_{a,i} .
\label{eq:marginal_S_H}
\end{equation}
The $N^3$ growth derived below rests on one assumption. The quantity
$\chi_{a,i}$ must be bounded below uniformly, i.e.\ it must not become arbitrarily small: there must be a constant
$\chi_0>0$ such that $\chi_{a,i}\ge\chi_0$ at every site $i$ and for every sequence length $N$. Since $\chi_{a,i}$ measures the softmax covariance along the readout direction $\mathbf u_a$, this says that the softmax of the unperturbed model must not saturate in that direction.

This assumption is imposed on the unperturbed model; it does not follow from the recurrence multiplier alone. When $r_a=1$, the derivative of the logits with respect to the recurrent bias accumulates linearly along the chain, as Eq.~\eqref{eq:marginal_m_H} shows. This derivative growth does not by itself imply that the unperturbed logits grow with $i$. Uniform non-saturation is therefore an independent hypothesis. If the unperturbed softmax does saturate, then $C_i=\operatorname{diag}(\mathbf p_i)-\mathbf p_i\mathbf p_i^{\mathsf T}\to0$ and the lower bound can weaken because $\chi_{a,i}\to0$.

Under this assumption we may replace $\chi_{a,i}$ in Eq.~\eqref{eq:marginal_S_H} by its lower bound $\chi_0$, which gives
$S_{b_a b_a}\ge\tfrac{\chi_0}{4}\sum_{i=1}^{N}(i-1)^2$. The remaining sum has a closed form,
\begin{equation}
\sum_{i=1}^{N}(i-1)^2
=
\frac{(N-1)N(2N-1)}{6}
\sim
\frac{N^3}{3} ,
\label{eq:cubic_sum_H}
\end{equation}
and the two prefactors $\tfrac14$ and $\tfrac13$ combine to give
\begin{equation}
\lambda_{\max}(S)
\ge
S_{b_a b_a}
\gtrsim
\frac{\chi_0}{12}\,N^3
\sim
N^3 ,
\label{eq:cubic_growth_H}
\end{equation}
so the largest eigenvalue grows polynomially with the sequence length.

\emph{Expanding mode, $|r_a|>1$.} Here $r_a\neq1$, so the finite geometric series of Eq.~\eqref{eq:geometric_factor_H} can be summed in closed form. Its $i-1$ terms give
\begin{equation}
m_{i-1}(r_a)
=
\sum_{k=0}^{i-2} r_a^{\,k}
=
\frac{1-r_a^{\,i-1}}{1-r_a} .
\label{eq:expanding_m_H}
\end{equation}
We substitute this into Eq.~\eqref{eq:Sbaba_H} and again assume that the softmax covariance does not saturate along $\mathbf u_a$, so that $\chi_{a,i}\ge\chi_0>0$ uniformly. Since every term in Eq.~\eqref{eq:Sbaba_H} is nonnegative, retaining only the $i=N$ term gives the rigorous lower bound
\begin{align}
S_{b_a b_a}
&\ge
\frac{1}{4}\,|m_{N-1}(r_a)|^2\,\chi_{a,N}
\nonumber\\
&=\frac{1}{4}\,
\frac{|1-r_a^{\,N-1}|^2}{|1-r_a|^2}\,\chi_{a,N}
=\Omega\!\left(|r_a|^{2(N-1)}\right),
\label{eq:expanding_S_H}
\end{align}
where the last statement uses $\chi_{a,N}\ge\chi_0$ and fixed $|r_a|>1$. Thus $S_{b_a b_a}$ , and therefore $\lambda_{\max}(S)$, grows at least exponentially in $N$.

The asymptotic form $m_{i-1}\sim r_a^{\,i-1}/(r_a-1)$ is needed only to read off the growth of this final retained term. It need not be substituted term by term inside the full sum; at $i=1$, for instance, the exact value is $m_0=0$.
Two regimes remain, one for each part of the proposition. The first is the marginal case $r_a=1$: replacing every weight $\chi_{a,i}$ in Eq.~\eqref{eq:marginal_S_H} by its uniform lower bound $\chi_0$ and evaluating the sum of squares with Eq.~\eqref{eq:cubic_sum_H} gives the bound of Eq.~\eqref{eq:prop_marginal_bound}, which grows as $N^3$ (Eq.~\eqref{eq:cubic_growth_H}). The second is the expanding case $|r_a|>1$: the sum in Eq.~\eqref{eq:large_eig_bound_H} has only nonnegative terms, so keeping only the $i=N$ term already gives a valid lower bound at every $N$, Eq.~\eqref{eq:prop_expanding_bound}, and it grows exponentially in $N$ (Eq.~\eqref{eq:expanding_S_H}).
\end{proof}

\subsection*{C.7 Summary}

We now collect the two ends of the spectrum. Together they are what
makes $S$ ill-conditioned. The bottom of the spectrum comes from Sec.~C.2. A change of hidden-coordinate basis $A\in GL(d_h)$ moves the parameters but leaves the wave function unchanged. Along the directions generated by such a change, $d\log\Psi=0$, and this holds configuration by configuration, not only on average. Since $S=\bar{\mathcal O}\bar{\mathcal O}^{\dagger}$ is built from
the sample-centered log-derivative matrix $\bar{\mathcal O}$, its quadratic form vanishes along those directions. They are therefore exact zero eigenvalues of $S$, at any sample size. These are the Fisher-information singularities.

The top of the spectrum comes from Secs.~C.5--C.6. There we probed a single recurrent-bias direction and applied the Rayleigh bound of
Eq.~\eqref{eq:large_eig_bound_H}, which bounds $\lambda_{\max}(S)$ below by a single diagonal element of $S$. How fast that element grows is set by the multiplier $r_a$ of the hidden mode being probed. A marginal mode ($r_a=1$) gives $\lambda_{\max}(S)\gtrsim N^3$, polynomial growth in the sequence length
$N$; an expanding mode ($|r_a|>1$) gives growth that is exponential in $N$. Both statements assume that $\chi_{a,i}$ stays bounded below by a positive constant, uniformly in the site index $i$ and in $N$.

For the population matrix of the simplified model, an exact gauge zero mode together with a diverging top makes the condition number unbounded. Further small eigenvalues can arise from softmax saturation ($C_i\to0$) and sampling fluctuations. In information-geometry language, the Fisher spectrum encodes the local curvature of the variational manifold. The gauge zero modes are flat, redundant directions; the recurrently amplified modes are strongly curved ones. A broad spectrum therefore reflects a highly anisotropic geometry, precisely the regime that motivates the minSR conditioning strategy used in this work.

\section{Hyperparameter tuning}
\label{app:hyperparameter}
We tune our hyperparameters with the Tree-Structured Parzen Estimator (TPE) using the Optuna Python library. We list the results in Tab.~\ref{tab:combined_hyperparams}.

\begin{table*}[ht]
    \centering
    \renewcommand{\arraystretch}{1.4}
    \begin{tabular}{|l|l|c|c|c|c|}
        \hline
        \multicolumn{2}{|c|}{\textbf{Hyperparameter / Model}} & \textbf{1D TFIM} & \textbf{1D Cluster State} & \textbf{2D Heisenberg} & \textbf{2D $J_1-J_2$} \\ \hline\hline
        \multicolumn{2}{|c|}{\textbf{Figure Ref.}} & Fig.~\ref{fig:TFIM}(a) & Fig.~\ref{fig:TFIM}(b) & Fig.~\ref{fig:J1J2}(a) & Fig.~\ref{fig:J1J2}(b) \\ \hline
        \multicolumn{2}{|c|}{\textbf{GPU}} & H100 & H100 & A100 & A100 \\ \hline
        \multicolumn{2}{|c|}{\textbf{Hidden Dim ($d_h$)}} & 32 & 256 & 200 & 200 \\ \hline
        \multicolumn{2}{|c|}{\textbf{Samples ($N_s$)}} & 100 & 100 & 200 & 200 \\ \hline\hline

        & Learning Rate & $5\times10^{-1}$ & $5\times10^{-3}$ & $6\times10^{-2}$ & $1\times10^{-1}\cdot \|\bm\tau\|^{-1}$ \\ \cline{2-6}
        & Momentum & 0.7 & 0.1 & 0.75 & 0.1 \\ \cline{2-6}
        & Regularization ($\lambda$) & $10^{-3}$ & $10^{-3}$ & $3\times10^{-4}\cdot\|\mathcal{T}\|^{2/3}$ & $10^{-1}$ \\ \cline{2-6}
        & Decay Time &$10^3 $&$5\times10^3$ &$10^4$ &$10^4$ \\ \cline{2-6}
        \multirow{-5}{*}{\textbf{minSR}} & Iterations &$10^4$ &$10^4$ &$10^5$ &  $10^5$\\ \hline\hline

        & Learning Rate & $5\times10^{-3}$ & $5\times10^{-4}$ & $5\times10^{-4}$ & $3\times10^{-3}$ \\ \cline{2-6}
        & Decay Time $T$ &$\infty$ &$\infty$ &$10^4$ &$5\times10^3$ \\ \cline{2-6}
        \multirow{-3}{*}{\textbf{Adam}} & Iterations &$10^4$ & $10^4$ & $2\times10^5$&$2\times10^5$ \\ \hline
    \end{tabular}
    \caption{\textbf{Optimization Hyperparameters.} Combined layout displaying structural constants followed by algorithm-specific parameters for minSR and Adam. The learning rate is sometimes decayed as $\eta = \eta_0(1+\frac{i}{T})^{-1}$, where $\eta$ and $\eta_0$ are the learning rate and the initial learning rate, respectively, $i$ is the training iteration, and $T$ is the learning rate decay time. We note that $T=\infty$ corresponds to keeping a constant learning rate.}
    \label{tab:combined_hyperparams}
\end{table*}

\begin{table*}[tbp]
    \centering
    \renewcommand{\arraystretch}{1.4}
    \begin{tabular}{|c|c|c|c|c|}
    \hline\textbf{Hamiltonian}&\textbf{$E_{\textrm{Adam}}/N$}&\textbf{$E_{\textrm{minSR}}/N$} & $E_{\textrm{ref}}/N$\\ \hline
        \textbf{1D TFIM}& -1.271208(7)& -1.2714254(3)&-1.27142590795~\cite{ising}\\ \hline
        \textbf{1D Cluster} &-0.8760(2) & -0.999947(4)& -1~\cite{cluster-preskill}\\ \hline
        \textbf{2D Heisenberg} & -0.62840(2)&-0.628486(6) &-0.628656~\cite{QMCH2D}\\ \hline
        \textbf{2D $J_1-J_2$}&$-0.48553(2)$& -0.48514(4) &-0.48654~\cite{PEPSbench}\\ \hline
    \end{tabular}
    \caption{ \textbf{Final variational energy results.} Final energies of Adam and minSR for each hamiltonian evaluated at the end of training.}
    \label{tab:energies}
\end{table*}

\section{Training Instabilities}
\label{app:instabilities}
SR and minSR can suffer from instability during training, which manifests as transient spikes in the average local energy~\cite{PII} or as failures due to NaNs. We find this instability is particularly pronounced at low floating-point precision and when $\lambda$ is too small. To understand this, we must look at the condition number $\kappa$ of our minSR step, which describes the sensitivity of our solution to noise. For the parameter-space system, define
\begin{equation}
    S_\lambda \equiv S+\lambda\mathbb I .
\end{equation}
Because $S$ is positive semidefinite,
\begin{equation}
    \kappa_2(S_\lambda)
    =\frac{\lambda_{\max}(S)+\lambda}{\lambda_{\min}(S)+\lambda}.
\end{equation}
$S$ is ill-conditioned, with $\lambda_{\min}(S)=0$ ( see App.~\ref{app:illcond_detailed}) so we have
\begin{equation}
    \kappa_2(S_\lambda)=1+\frac{\lambda_{\max}(S)}{\lambda}
    \approx \frac{\lambda_{\max}(S)}{\lambda}
\end{equation}
when $\lambda\ll\lambda_{\max}(S)$. The sample-space matrix $\mathcal T=\bar{\mathcal O}^{'\mathsf T}\bar{\mathcal O}'$ has the same nonzero eigenvalues as $S=\bar{\mathcal O}'\bar{\mathcal O}^{'\mathsf T}$; hence the same large-eigenvalue scale controls the conditioning of the damped minSR solve.

For a fixed coefficient matrix, perturbation theory for a linear solve gives, to first order in a perturbation of the right-hand side~\cite{mixed-precision,Accuracy}
\begin{align}
    \frac{\|\Delta(\delta\bm\theta)\|_2}{\|\delta\bm\theta\|_2}
    \lesssim \kappa_2(S_\lambda)\frac{\|\Delta g\|_2}{\|g\|_2}.
    \label{eq:errors}
\end{align}
Thus a standard floating-point estimate for the attainable relative forward accuracy is of order $\kappa_2(S_\lambda)\varepsilon_{\mathrm{machine}}$, up to algorithm-dependent constants and additional sampling/model error. When this product is much smaller than one, the corresponding rule-of-thumb number of reliable decimal digits is
\begin{align}
    d  \approx -\log_{10} \mh{\!}\big( \kappa _2 ( S_\lambda)\varepsilon_{\mathrm{machine}} \big) .
\end{align}
Consequently, in the regime $\lambda_{\min}(S)=0$ and $\lambda\ll\lambda_{\max}(S)$, decreasing $\lambda$ by one decade costs roughly one decimal digit of numerical accuracy. Additionally, in single floating-point precision, $\varepsilon_{\mathrm{machine}}\approx 10^{-7}$, which may result in a very small value of $d$, leading to instability. This conditioning estimate is consistent with the instability observed in Fig.~\ref{fig:instability}; it should be interpreted as an order-of-magnitude numerical diagnostic rather than as a strict universal error bound. 

\begin{figure}
    \centering
    \includegraphics[width=\linewidth]{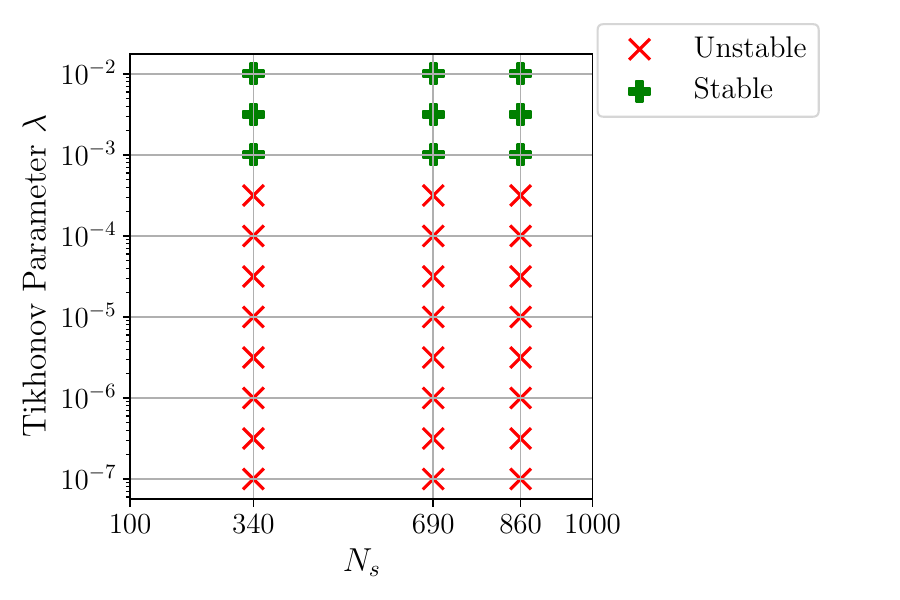}
    \caption{\textbf{Instability of minSR training.} We plot the training results of a 1D RNN wave function on the TFIM with $d_h=32$ for various Tikhonov parameters $\lambda$ and number of samples $N_s$. We run each experiment for $100$ training steps and mark the plot with an 'x' marker if training produced NaNs. Otherwise, a '+' marker is used if training is stable.}
    \label{fig:instability}
\end{figure}

\section{Variance-controlled learning rate for SR}

\label{app:trust_region}
Within the SR setup, a natural (infinitesimal) distance between quantum states is the Fubini--Study distance:
$$
d_{\mathrm{FS}}^2(\theta,\theta+\delta\theta)\equiv 1-\big|\langle\Psi_\theta|\Psi_{\theta+\delta\theta}\rangle\big|^2.
$$
For small $\delta\theta$, its second-order expansion is
$$
d_{\mathrm{FS}}^2(\theta,\theta+\delta\theta)\approx \delta\theta^{\mathsf T} S\,\delta\theta,
$$
where $S=\operatorname{Re}Q$ is the SR metric defined in Eq.~\eqref{eq:SR}. This yields the trust region constraint
$$
\delta\theta^{\mathsf T} S\,\delta\theta\le \varepsilon,
$$
given a small trust-region radius $\varepsilon$. Taylor expanding the energy to first-order,
$$
E(\theta+\delta\theta)\approx E(\theta)+g^{\mathsf T}\delta\theta,
\qquad
g_k \equiv \frac{\partial E}{\partial \theta_k},
$$
SR corresponds to the trust-region problem:
$$
\min_{\delta\theta}\; g^{\mathsf T}\delta\theta
\qquad \text{s.t.}\qquad
\delta\theta^{\mathsf T} S\,\delta\theta\le \varepsilon.
$$
The latter, augmented with a Euclidean trust region $\delta\theta^{\mathsf T}\delta\theta\le \varepsilon'$ that controls the step along the flat directions of $S$, can be formulated using the Lagrangian
$$
\mathcal L(\delta\theta,\mu, \nu
)=g^{\mathsf T}\delta\theta+\mu\big(\delta\theta^{\mathsf T}S\delta\theta-\varepsilon\big)+\nu\big(\delta\theta^{\mathsf T}\delta\theta-\varepsilon'\big),
\quad \nu,\mu\ge 0.
$$
Here the first-order condition of optimality (stationarity) in $\delta\theta$ gives
$$
\nabla_{\delta\theta}\mathcal L =g+2\mu \left(S+\frac{\nu}{\mu}\mathbb{I}\right)\delta\theta = 0.
$$
Writing $\eta\equiv \frac{1}{2\mu}$ for the learning rate and $\lambda \equiv \frac{\nu}{\mu}$, yields the SR step
$$
\delta\theta=-\eta\,(S+\lambda\mathbb{I})^{-1}g.
$$
To get the largest step that remains inside the trust region, we impose the constraint:
$$
\delta\theta^{\mathsf T}S\delta\theta=\varepsilon.
$$
For $\lambda=0$, the finite-sample $S$ is generally singular. Since the SR gradient direction lies in $\operatorname{range}(S)$, the appropriate inverse is the Moore--Penrose pseudoinverse $S^{+}$. Substituting $\delta\theta=-\eta_*S^{+}g$ into the active constraint gives
$$
(-\eta_* S^{+}g)^{\mathsf T}S(-\eta_* S^{+}g)=\varepsilon,
$$
and hence $\eta_*=\sqrt{\varepsilon/(g^{\mathsf T}S^{+}g)}$ for $g\neq0$. We now repeat this computation for general $\lambda>0$. We use the real-valued system of equations in Eq.~\eqref{eq:real-valued}. We follow the conventions from Ref.~\cite{minSR-lianlg}
\begin{align}
    g = \bar{\mathcal{O}}'\bar{\boldsymbol\epsilon}', \quad S = \bar{\mathcal{O}}'\bar{\mathcal{O}}'^\top.
\end{align}
By substituting into the constraint
$$
(-\eta_* (S+\lambda\mathbb{I})^{-1}g)^{\mathsf T}S(-\eta_* (S+\lambda\mathbb{I})^{-1}g)=\varepsilon,
$$
\begin{align*}
\varepsilon&=\eta_* ^2g^\mathsf T(S+\lambda\mathbb{I})^{-\mathsf{T}}S(S+\lambda\mathbb{I})^{-1}g,\\
&=\eta_* ^2\bar{\bm\epsilon}^{'\mathsf T}\bar{\mathcal{O}}^{'\mathsf T}(S+\lambda\mathbb{I})^{-1}S(S+\lambda\mathbb{I})^{-1}\bar{\mathcal{O}}'\bar{\bm\epsilon}'.\end{align*}
We can apply the linear algebra trick
\begin{align*}
=\eta_* ^2\bar{\bm\epsilon}^{'\mathsf T}(\mathcal T+\lambda\mathbb{I})^{-1}\bar{\mathcal{O}}^{'\mathsf T}S\bar{\mathcal{O}}'(\mathcal T+\lambda\mathbb{I})^{-1}\bar{\bm\epsilon}'\\
=\eta_* ^2\bar{\bm\epsilon}^{'\mathsf T}(\mathcal T+\lambda\mathbb{I})^{-1}\mathcal{T}^{2}(\mathcal T+\lambda\mathbb{I})^{-1}\bar{\bm\epsilon}'
\end{align*}
Using the equation
\begin{equation}
    \mathcal{T}(\mathcal T+\lambda\mathbb{I})^{-1}=\mathbb{I}-\lambda(\mathcal T+\lambda\mathbb{I})^{-1}
\end{equation}
we get
\begin{align*}
    =\eta_* ^2\bar{\bm\epsilon}^{'\mathsf T}\Big[\mathbb{I}-\lambda(\mathcal T+\lambda\mathbb{I})^{-1}\Big]^2\bar{\bm\epsilon}'.
\end{align*}
Thus our ideal learning rate is
\begin{equation*}
    \frac{\eta_*}{\sqrt{\varepsilon}} = \Big\|\big(\mathbb{I}-\lambda(\mathcal T+\lambda\mathbb{I})^{-1}\big)\bar{\bm\epsilon}'\Big\|^{-1}.\label{eq:tr2}
\end{equation*}
Treating the trust-region radius as a tunable base learning rate $\eta\equiv\sqrt{\varepsilon}$, we have
\begin{align}
    \eta_* =\frac{\eta}{\|\bm\tau\|}, \qquad \bm\tau = \big(\mathbb{I}-\lambda(\mathcal T+\lambda\mathbb{I})^{-1}\big)\bar{\bm\epsilon}'
    \label{eq:trust-region-lr}
\end{align}

\section{Damping Scheme}
\label{app:levenberg_marquardt}
The appropriate choice of the damping parameter $\lambda$ is essential for stable training. In the derivation of App.~\ref{app:trust_region}, $\lambda=\nu/\mu$ is the ratio of the Lagrange multipliers associated with the Euclidean and Fubini--Study trust-region constraints. Larger $\lambda$ therefore suppresses motion in poorly constrained parameter directions more strongly, but $\lambda^{-1}$ should not be identified with an exact universal trust-region radius. Although a fixed $\lambda$ is sufficient for our one-dimensional benchmarks, adaptive damping can be useful more generally, especially in the more ill-conditioned two-dimensional problems. Common schedules use larger damping early in training and smaller damping closer to convergence~\cite{exploration-phase,kfac}.

Previously introduced methods include a decaying schedule for $\lambda$~\cite{Carleo_original_2017}, and an adaptive cutoff based on the signal-to-noise ratio~\cite{PhysRevLett.125.100503}.
The LM heuristic is commonly used in Natural Gradients and Gauss-Newton optimization~\cite{NG-martens} to adjust $\lambda$ based on the accuracy of our quadratic model~\cite{more2006levenberg,numerical.opt.textbook}. When a step underperforms, $\lambda$ is increased by a factor $\omega$, and when it overperforms, $\lambda$ is reduced by a factor of $\omega^{-1}$, signaling a higher degree of `trust' in the least-squares solution. We present an implementation of the LM heuristic above.

The LM heuristic interpolates between gradient descent and SR by adjusting $\lambda$ at every training step. The ratio $\rho$ is given by~\cite{nqs.kfac}
\begin{align*}
    \rho = \frac{E_{\bm\theta} -E_{\bm\theta+\delta\bm\theta}}{E_{\bm\theta}-\Big[E_{\bm\theta}+\nabla E_{\bm\theta}^{\mathsf T}\delta\bm\theta+\frac{1}{2}\delta\bm\theta^{\mathsf T}S\delta\bm\theta\Big]}.
\end{align*}
The numerator involves the differences in expected energy as defined in Eq.~\eqref{eq:energy}. The denominator is an approximation of the numerator with $E_{\bm\theta+\delta\bm\theta}$ replaced by a quadratic model about $\bm\theta$, in which the QGT $S$ would play the role of the Hessian of the energy. There is an alternative linear model approach to $\rho$ applied in \cite{numerical.opt.textbook,more2006levenberg}. The denominator can be simplified as 
\begin{align*}
    \rho = \frac{E_{\bm\theta} -E_{\bm\theta+\delta\bm\theta}}{-\bar{\bm\epsilon}^{'\mathsf T}\bar{\mathcal{O}}^{'\mathsf T}\delta\bm\theta-\frac{1}{2}\eta^2}.
\end{align*}
The quadratic term simplifies due to our trust region: the step saturates the trust-region constraint of App.~\ref{app:trust_region}, so $\delta\bm\theta^{\mathsf T}S\,\delta\bm\theta=\varepsilon=\eta^{2}$. This further simplifies to
\begin{align*}
    \rho = \frac{E_{\bm\theta} -E_{\bm\theta+\delta\bm\theta}}{\eta_*\bar{\bm\epsilon}^{'\mathsf T}\bm\tau-\frac{1}{2}\eta^2}.
\end{align*}
Then $\lambda$ is updated as follows \cite{nqs.kfac}
\begin{align}
    \lambda =\begin{cases}
        \omega\lambda , & \rho<\frac{1}{4},\\
        \omega^{-1}\lambda , & \rho>\frac{3}{4},\\
        \lambda\textcolor{purple}{,} & \textrm{otherwise}.
    \end{cases}
\end{align}
Here $\omega>1$ is a hyperparameter often chosen to be $\frac{3}{2}$.

The LM heuristic requires an additional forward pass to calculate $E_{\bm\theta+\delta\bm\theta}$, though this does not change the runtime significantly since the backward passes dominate the computational cost.

\section{Proof of the subsampled-QGT error bound}
\label{app:proof_subsampled}

This appendix proves Proposition~\ref{prop:subsampled_fisher}. The configurations $\bm\sigma^{(1)},\dots,\bm\sigma^{(N_s)}$ are drawn i.i.d.\ from the Born distribution $|\Psi_{\bm\theta}|^2$; exact autoregressive sampling guarantees this independence. For each parameter $i=1,\dots,N_p$ we write $O_i(\bm\sigma)=\partial_{\theta_i}\log\Psi_{\bm\theta}(\bm\sigma)$ for the log-derivative, which is complex when the ansatz is complex. In the notation of the main text, $\mathcal O_{is}=O_i(\bm\sigma^{(s)})/\sqrt{N_s}$. We define
\begin{equation}
\mu_i=\mathbb{E}[O_i],
\qquad
X_i=O_i-\mu_i.
\end{equation}
Throughout, $\mathbb{E}[\cdot]$ denotes the expectation and $\Pr[\cdot]$ the probability of an event, both taken over the sample draw at fixed $\bm\theta$. The two matrices we compare are the population QGT
\begin{equation}
Q_{ij}=\mathbb{E}[X_iX_j^{*}]
\end{equation}
and its population-centered estimator $\widetilde Q$ of Eq.~\eqref{eq:Qtilde_est}; both are $N_p\times N_p$ and Hermitian.

For a complex random variable $Z$ we use $\operatorname{Var}(Z)\equiv\mathbb{E}[|Z-\mathbb{E}[Z]|^2]$. This variance is additive over independent summands, because it equals $\operatorname{Var}(\operatorname{Re}Z)+\operatorname{Var}(\operatorname{Im}Z)$ and each real part is additive.

The proof uses exactly two hypotheses: (i) i.i.d.\ sampling from $|\Psi_{\bm\theta}|^2$, and (ii) finiteness of $V_{\mathrm{avg}}^{(c)}$ in Eq.~\eqref{eq:vars1}. Hypothesis (ii) is not restrictive. Since $\operatorname{Var}(X_iX_j^{*})\le\mathbb{E}[|X_i|^2|X_j|^2]$ and, by the Cauchy--Schwarz inequality, $\mathbb{E}[|X_i|^2|X_j|^2]\le\sqrt{\mathbb{E}[|X_i|^4]\,\mathbb{E}[|X_j|^4]}$, it holds whenever the centered log-derivatives have finite fourth moments under $|\Psi_{\bm\theta}|^2$.

\begin{proof}[Proof of Proposition~\ref{prop:subsampled_fisher}]
The estimator is an average of one term per sample, so its error is pure sampling noise. We follow that noise from a single entry, to the whole matrix, and finally into a probability.

We start by establishing unbiasedness. Define the per-sample product
\begin{equation}
Y_{ij}^{(s)}
=
X_i\big(\bm\sigma^{(s)}\big)\,
X_j\big(\bm\sigma^{(s)}\big)^{*},
\label{eq:Xdef_proof}
\end{equation}
so that $\widetilde Q_{ij}=\frac{1}{N_s}\sum_{s}Y_{ij}^{(s)}$ by Eq.~\eqref{eq:Qtilde_est}. The samples are identically distributed, so $\mathbb{E}[Y_{ij}^{(s)}]=Q_{ij}$ for every $s$. Linearity of expectation then gives $\mathbb{E}[\widetilde Q_{ij}]=Q_{ij}$: the estimator is unbiased, and its error is exactly the fluctuation of the average about its mean.

Because the estimator is unbiased, that fluctuation is measured by the variance of each entry. Fix $(i,j)$. The terms $Y_{ij}^{(1)},\dots,Y_{ij}^{(N_s)}$ are functions of independent samples, hence independent, and each has the single-sample variance $\operatorname{Var}(X_iX_j^{*})$, finite by hypothesis (ii). Additivity of the variance together with the scaling $\operatorname{Var}(cZ)=|c|^2\operatorname{Var}(Z)$ gives
\begin{align}
\operatorname{Var}\big(\widetilde Q_{ij}\big)
&=\frac{1}{N_s^{2}}\sum_{s=1}^{N_s}\operatorname{Var}\big(Y_{ij}^{(s)}\big)
\nonumber\\
&=\frac{1}{N_s}\,\operatorname{Var}\big(X_iX_j^{*}\big).
\label{eq:elem_var_proof}
\end{align}
Averaging $N_s$ independent copies divides the variance by $N_s$; this single factor $1/N_s$ is the only source of $N_s$-dependence in the bound.

Summing these entrywise variances produces the expected squared error of the whole matrix. By definition $\|A\|_F^2=\sum_{i,j}|A_{ij}|^2$, and unbiasedness turns each summand into a variance, $\mathbb{E}[|\widetilde Q_{ij}-Q_{ij}|^2]=\operatorname{Var}(\widetilde Q_{ij})$. Summing Eq.~\eqref{eq:elem_var_proof} over all $N_p^2$ pairs and inserting Eq.~\eqref{eq:vars1} yields
\begin{align}
\mathbb{E}[\|\widetilde Q-Q\|_F^{2}]
&=\frac{1}{N_s}\sum_{i,j=1}^{N_p}\operatorname{Var}\big(X_iX_j^{*}\big)
\nonumber\\
&=\frac{N_p^{2}}{N_s}\,V_{\mathrm{avg}}^{(c)}.
\label{eq:frob_second_moment}
\end{align}
The factor $N_p^{2}$ counts the entries, and the $1/N_s$ carries over unchanged from each of them.

A high-probability bound now follows from Markov's inequality. Since $\|\widetilde Q-Q\|_F^{2}$ is nonnegative, for any $t>0$
\begin{align}
\Pr\big[\|\widetilde Q-Q\|_F\ge t\big]
&=\Pr\big[\|\widetilde Q-Q\|_F^{2}\ge t^{2}\big]
\nonumber\\
&\le\frac{\mathbb{E}[\|\widetilde Q-Q\|_F^{2}]}{t^{2}}
=\frac{N_p^{2}\,V_{\mathrm{avg}}^{(c)}}{N_s\,t^{2}},
\label{eq:markov_proof}
\end{align}
where the first equality holds because squaring is monotone on nonnegative reals. Choosing
\begin{equation}
t=\frac{N_p}{\sqrt{N_s}}\sqrt{\frac{V_{\mathrm{avg}}^{(c)}}{\epsilon}}
\label{eq:t_choice_proof}
\end{equation}
makes the right-hand side of Eq.~\eqref{eq:markov_proof} equal to $\epsilon$. The reverse inequality $\|\widetilde Q-Q\|_F<t$ is exactly Eq.~\eqref{eq:UB}, so it holds with probability at least $1-\epsilon$. The square root taken here is what turns the entrywise factor $1/N_s$ into the $N_s^{-1/2}$ rate carried by the norm.

The spectral norm inherits the same bound. For any matrix $A$, $\|A\|_2^2$ is the largest squared singular value, while $\|A\|_F^2$ is the sum of all of them; hence $\|A\|_2\le\|A\|_F$, and Eq.~\eqref{eq:UB} also bounds $\|Q-\widetilde Q\|_2$.

For real variational parameters, the population SR metric is
$S=\operatorname{Re}Q$. Defining
$\widetilde S=\operatorname{Re}\widetilde Q$, we have
\begin{equation}
\|S-\widetilde S\|_F
=
\|\operatorname{Re}(Q-\widetilde Q)\|_F
\le
\|Q-\widetilde Q\|_F,
\end{equation}
so the same bound applies to the SR metric.

The proof never assumed that $\Psi_{\bm\theta}$ is real or positive. Every step relied only on linearity of $\mathbb{E}$, on the additivity of $\operatorname{Var}(Z)=\mathbb{E}[|Z-\mathbb{E}[Z]|^2]$ over independent terms, and on hypothesis (ii). The same QGT bound therefore applies to complex RNN wave functions (cRNN) just as it does to positive ones (pRNN).
\end{proof}


\emph{Sample-mean centering.} Proposition~\ref{prop:subsampled_fisher} uses the population-centered estimator $\widetilde Q$, in which the exact population mean $\mu_i=\mathbb{E}[O_i]$ is subtracted. In practice, we use the sample-centered matrix $\widehat Q=\bar{\mathcal O}\bar{\mathcal O}^{\dagger}$, in which each log-derivative has its sample mean
\begin{equation}
\hat\mu_i=\frac{1}{N_s}\sum_sO_i(\bm\sigma^{(s)})
\end{equation}
subtracted. We now check that this sample centering does not change the $N_s^{-1/2}$ rate. Since
\begin{equation}
\hat{\bm\mu}-\bm\mu
=
\frac{1}{N_s}
\sum_s
\bm X\big(\bm\sigma^{(s)}\big),
\end{equation}
multiplying out the sample-centered products gives
\begin{equation}
\widehat Q
=
\widetilde Q
-
(\hat{\bm\mu}-\bm\mu)
(\hat{\bm\mu}-\bm\mu)^{\dagger}.
\label{eq:centered_defs_proof}
\end{equation}
Subtracting $Q$ gives
\begin{equation}
\widehat Q-Q
=
\big(\widetilde Q-Q\big)
-
(\hat{\bm\mu}-\bm\mu)
(\hat{\bm\mu}-\bm\mu)^{\dagger}.
\label{eq:centered_split_proof}
\end{equation}
The first difference is controlled by Proposition~\ref{prop:subsampled_fisher}. For the second, the rank-one identity
$\|\bm x\bm y^{\dagger}\|_F=\|\bm x\|_2\|\bm y\|_2$
gives
\begin{equation}
\big\|
(\hat{\bm\mu}-\bm\mu)
(\hat{\bm\mu}-\bm\mu)^{\dagger}
\big\|_F
=
\|\hat{\bm\mu}-\bm\mu\|_2^2.
\label{eq:mean_term_proof}
\end{equation}
The mean correction is therefore governed by
$\|\hat{\bm\mu}-\bm\mu\|_2^2$, and
$\hat{\bm\mu}$ is itself an i.i.d.\ average:
\begin{equation}
\mathbb{E}
\left[
\|\hat{\bm\mu}-\bm\mu\|_2^2
\right]
=
\frac{1}{N_s}
\sum_{i = 1}^{N_p}\operatorname{Var}(O_i),
\end{equation}
where the last sum runs over all parameter indices. Thus, if the average variance per parameter remains $O(1)$ as $N_p$ varies, this term scales linearly with $N_p$. The correction due to sample-mean centering is therefore of order $N_s^{-1}$ in expectation and is subleading to the
$N_s^{-1/2}$ fluctuation bounded in Proposition~\ref{prop:subsampled_fisher}. Thus sample centering does not change the leading rate.

A few further remarks complete the picture. First, at finite $N_s$ the sample-centered estimator has a small, non-random bias. Using
\begin{equation}
\mathbb{E}
\left[
(\hat{\bm\mu}-\bm\mu)
(\hat{\bm\mu}-\bm\mu)^{\dagger}
\right]
=
\frac{Q}{N_s}
\end{equation}
gives
\begin{equation}
\mathbb{E}[\widehat Q]
=
\frac{N_s-1}{N_s}Q,
\end{equation}
a bias of $-Q/N_s$; because it scales as $1/N_s$, it is smaller than the $N_s^{-1/2}$ random fluctuation.

Second, for a normalized ansatz,
$\sum_{\bm\sigma}|\Psi_{\bm\theta}(\bm\sigma)|^2=1$,
the mean log-derivative has zero real part,
$\operatorname{Re}\mathbb{E}[\partial_{\theta_i}\log\Psi_{\bm\theta}]=0$;
for a positive ansatz (pRNN) the mean vanishes entirely,
$\bm\mu=\bm 0$.

Finally, in Fig.~\ref{fig:scaling}(b) the reference matrix is itself an estimate from a finite sample with $N_s\gtrsim N_p$. By the triangle inequality, the measured error contains contributions from both sample sizes; when the reference sample size is sufficiently large, its contribution is smaller, and the leading $N_s^{-1/2}$ rate is unchanged.

\end{document}